\newtheorem{theorem}{Theorem}[section]
\newtheorem{lemma}[theorem]{Lemma}
\newtheorem{corollary}[theorem]{Corollary}
\theoremstyle{definition}
\newtheorem{definition}[theorem]{Definition}
\newtheorem{example}{Example}
\newtheorem{proposition}[theorem]{Proposition}
\theoremstyle{remark}
\newtheorem{remark}[theorem]{Remark}
\numberwithin{equation}{section}
\newcommand{\Aut}{\textnormal{Aut}}
\newcommand{\GL}{\textnormal{GL}}
\newcommand{\Id}{\textnormal{Id}}
\definecolor{lime}{HTML}{A6CE39}
\DeclareRobustCommand{\orcidicon}{%
	\begin{tikzpicture}
		\draw[lime, fill=lime] (0,0) 
		circle [radius=0.16] 
		node[white] {{\fontfamily{qag}\selectfont \tiny ID}};
		\draw[white, fill=white] (-0.0625,0.095) 
		circle [radius=0.007];
	\end{tikzpicture}
	\hspace{-2mm}
}
\xdef\csname orcid\x\endcsname{\noexpand\href{https://orcid.org/\csname orcidauthor\x\endcsname}{\noexpand\orcidicon}}
\begin{document}
	\date{}
	{\vspace{0.01in}
		\title{$(\Theta, \Delta_\Theta, \bm{a})$-cyclic codes over $\mathbb{F}_q^l$ and their applications in the construction of quantum codes} 
		\author{{\bf Akanksha\footnote{email: {\tt akankshafzd8@gmail.com}}\orcidA{},\; Anuj Kumar Bhagat\footnote{email: {\tt anujkumarbhagat632@gmail.com}}\orcidB{},\; and \bf Ritumoni Sarma\footnote{	email: {\tt ritumoni407@gmail.com}}\orcidC{}} \\ Department of Mathematics,\\ Indian Institute of Technology Delhi,\\Hauz Khas, New Delhi-110016, India. }
		\maketitle
		\begin{abstract}
  \noindent
  In this article, for a finite field $\mathbb{F}_q$ and a natural number $l,$ let $\mathcal{R}$ denote the product ring $\mathbb{F}_q^l.$ Firstly, for an automorphism $\Theta$ of $\mathcal{R},$ a $\Theta$-derivation $\Delta_\Theta$ of $\mathcal{R}$ and for a unit $\bm{a}$ in $\mathcal{R},$ we study $(\Theta, \Delta_\Theta, \bm{a})$-cyclic codes over $\mathcal{R}.$ In this direction, we give an algebraic characterization of a $(\Theta, \Delta_\Theta, \bm{a})$-cyclic code over $\mathcal{R}$, determine its generator polynomial, and find its decomposition over $\mathbb{F}_q.$ Secondly, we give a necessary and sufficient condition for a $(\Theta, 0, \bm{a})$-cyclic code to be Euclidean dual-containing code over $\mathcal{R}.$ Thirdly, we study Gray maps and obtain several MDS and optimal linear codes over $\mathbb{F}_q$ as Gray images of $(\Theta, \Delta_\Theta, \bm{a})$-cyclic codes over $\mathcal{R}.$ Moreover, we determine orthogonality preserving Gray maps and construct Euclidean dual-containing codes with good parameters. Lastly, as an application, we construct MDS and almost MDS quantum codes by employing the Euclidean dual-containing and annihilator dual-containing CSS constructions.    

			\medskip
			
			\noindent \textit{Keywords:} Skew Polynomial ring, Skew constacyclic code, Gray map, Quantum code, CSS construction.
			
			\medskip
			
			\noindent \textit{Mathematics Subject Classification:} 94B60, 94B15, 94B05, 16Z05  
			
		\end{abstract}
\section{Introduction}\label{Section 1}
Linear codes play an essential role in the correction and detection of errors during the transmission of information. Cyclic codes, a class of linear codes, have gained attention due to their nice polynomial representation and easy encoding-decoding. While these codes were initially studied over finite fields, Hammons et al. \cite{Hammons} in 1994 highlighted the potential for constructing numerous binary codes using cyclic codes and the Gray map over the ring $\mathbb{Z}_4$. Since then, there has been extensive interest in coding theory over finite rings, of which much research has focused on the codes over finite commutative rings, for instance, see \cite{cao2020class} and \cite{cao2020construction}.
\par
As the number of cyclic codes and the generator polynomial of a cyclic code depend on the factorization of a polynomial, it is logical to extend the study of cyclic codes over a structure that allows polynomials to have more than one factorization. In this context, one of the best choices is the skew polynomial ring because it allows an easy extension of cyclic codes to a non-commutative setup as skew cyclic codes. The work of Boucher et al. \cite{boucher2007skew} prompted coding theorists to investigate codes over skew polynomial rings. Following this, various studies were conducted on skew cyclic codes and skew constacyclic codes over many finite chain rings and finite non-chain rings. For instance, research on skew cyclic codes was performed over the ring $\mathbb{F}_2 + v\mathbb{F}_2$, where $v^2 = v$ in \cite{abualrub2012theta}, as well as over $\mathbb{F}_p + u\mathbb{F}_p$, with $u^2 = u$ in \cite{gao2013skew}. Additionally, skew constacyclic codes over finite chain rings were explored in \cite{jitman2010skew}, along with studies on skew constacyclic codes over $\mathbb{F}_q + u\mathbb{F}_q$, where $u^2 = u$ in \cite{gao2017skew}. However, it is important to note that all these studies were conducted in the context of skew polynomial rings with automorphisms only.
\par
In 2013, Boulagouaz and Leroy in \cite{Boulagouaz} first investigated codes over a skew polynomial ring defined using both an automorphism and a derivation.  In 2014, Boucher and Ulmer in \cite{boucher2014linear} studied linear codes by using a skew polynomial ring with an automorphism as well as a derivation. In 2018, Sharma and Bhaintwal in \cite{Bhaintwal2018} studied skew cyclic codes over $\mathbb{Z}_4+u\mathbb{Z}_4, u^2=1$ determined by an automorphism and a derivation. Subsequently, many coding theorists studied codes over skew polynomial rings with automorphisms and derivations. For instance, the study was undertaken for skew cyclic codes over $\mathbb{F}_q[u, v]/\langle u^2-u, v^2-v, uv-vu\rangle$  in \cite{patel2021}, for skew constacyclic codes over $\mathbb{F}_{p^{2m}} + u\mathbb{F}_{p^{2m}}$, where $u^2 = u$ in \cite{jian2022}, and for skew constacyclic codes over $\mathbb{F}_{p^{e}}[v]/\langle v^k-1\rangle,$ where $k\mid p-1$ in \cite{Ashutosh2024}. The rings considered in the above-mentioned works are isomorphic to the product ring $\mathbb{F}_q^l,$ for some $l>1.$ For instance, $\mathbb{F}_q[u, v]/\langle u^2-u, v^2-v, uv-vu\rangle$ is isomorphic to $\mathbb{F}_q^4.$ Boucher and Ulmer in \cite{boucher2014linear} showed that in some cases, codes as modules over skew polynomial rings, whose multiplication is defined using an automorphism and a derivation, have better distance bounds than module skew codes constructed only with an automorphism. This motivates us to study skew constacyclic codes in a more general setup. In this article, we study skew constacyclic codes over the product ring $\mathcal{R}:=\mathbb{F}_q^l$ with $l\ge 1$ constructed using an automorphism and a derivation. It is also worth mentioning that the automorphism class that we consider is much larger than what was considered in some previous works; for instance, see \cite{patel2021}, \cite{jian2022}, and \cite{Ashutosh2024}. We consider a more general form of an automorphism of $\mathbb{F}_q^l,$ namely, $\theta_1\times \theta_2\times\cdots\times \theta_l,$ where each $\theta_j$ is an automorphism of $\mathbb{F}_q.$ We have shown that there exists an optimal $[12,6,6]$-linear code over $\mathbb{F}_8$ using the automorphism $\sigma\times \sigma^2$ of $\mathbb{F}_8^2$ which is not achievable using the automorphism $\sigma\times \sigma$ of $\mathbb{F}_8^2,$ where $\sigma$ is the Frobenius automorphism of $\mathbb{F}_8.$
\par
The structure of the article is as follows. Preliminaries are provided in the next section. In Section \ref{Section 3}, we study the automorphisms and derivations of the product ring $\mathcal{R}:=\mathbb{F}_q^l, l\ge1$ and give their decompositions. In Section \ref{section 4}, we introduce $(\theta, \delta_\theta, \alpha)$- cyclic codes over $\mathbb{F}_q$ and study $(\Theta, \Delta_{\Theta, \mathbf{s}}, \bm{a})$-cyclic codes over $\mathcal{R}.$ We also find the decomposition and a generator polynomial of a $(\Theta, \Delta_{\Theta, \mathbf{s}}, \bm{a})$-cyclic codes over $\mathcal{R}.$ Section \ref{Section 5} introduces Gray maps from $\mathcal{R}^n$ to $\mathbb{F}_q^{nl}.$ In Section \ref{Section 6}, we construct quantum codes from $(\Theta, \Delta_{\Theta, \mathbf{s}}, \bm{a})$-cyclic codes over $\mathcal{R}$ using the CSS construction. We also construct quantum codes by employing the annihilator dual-containing CSS construction. In Section \ref{Section 7}, we present several MDS and almost MDS linear codes and quantum codes. Section \ref{Conclusion} concludes the article.

\section{Preliminaries}\label{Preliminaries}
Throughout this article, for a ring $A$ with unity, $A^{\times}$ denotes the group of units of $A$ and $\Aut(A)$ denotes the group of all automorphisms of $A.$ For a prime power $q=p^m,$ $\mathbb{F}_q$ denotes the finite field of order $q.$
\begin{definition}
    Let $A$ be a ring with unity and let $\theta\in \Aut(A).$ An additive map $\delta_\theta: A\to A$ is called a \textit{$\theta$-derivation} of $A,$ if $\delta_\theta(r_1r_2)=\delta_\theta(r_1)r_2+\theta(r_1)\delta_\theta(r_2),$ for all $r_1,r_2\in A.$ In particular, if $\theta=Id$, then $\theta$-derivation is simply called a derivation of $A.$
\end{definition}
\begin{remark}
    For $q=p^m,$ the map $\sigma_p:\mathbb{F}_q\to \mathbb{F}_q$ given by $a\mapsto a^p$ is an automorphism of $\mathbb{F}_q$, which is called the \textit{Frobenius automorphism} of $\mathbb{F}_q$. Moreover,  $\Aut(\mathbb{F}_q)=\{\sigma_p, \sigma_p^2,\dots, \sigma_p^m=Id\}.$ 
\end{remark}

\begin{remark}
For a field $\mathbb{F}$ and $a\in \mathbb{F},$ the map $\delta_{\theta,a}:=a(\theta-Id)$ is a $\theta$-derivation of $\mathbb{F}$, which is called an \textit{inner derivation} \cite{Boulagouaz}.  It is well known that every $\theta$-derivation of $\mathbb{F}$ is an inner derivation.
\end{remark}
Let $A$ be a ring with unity, $\theta\in \Aut(A),$ and let $\delta_\theta$ be a $\theta$-derivation of $A.$ The set of all polynomials $A[x;\theta,\delta_\theta]=\{r_nx^n+r_{n-1}x^{n-1}+\cdots+r_0: r_i\in A, \; n\in \mathbb{N}\}$ over $A$ forms a non-commutative ring with respect to usual addition of polynomials and the multiplication given by $xr=\theta(r)x+\delta_\theta(r)$ for any $r \in A$ and it is extended to other elements of the ring by associativity and distributivity. We usually call this ring a \textit{skew polynomial ring}. The right division algorithm in $A[x;\theta,\delta_\theta]$ is as follows:
\begin{theorem}[Right Division Algorithm]\cite{jian2022}
    Let $f(x),g(x)\in A[x;\theta,\delta_\theta],$ where the leading coefficient of $g(x)$ is a unit. Then there exists $q(x), r(x)\in A[x;\theta,\delta_\theta]$ such that
    $$f(x)=q(x)g(x)+r(x),$$
    where $r(x)=0$ or $\deg{r}(x)<\deg{g}(x).$
\end{theorem}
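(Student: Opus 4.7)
The plan is a standard induction on $\deg f(x)$, with the crucial preparatory observation being that pushing $x$ past a coefficient through the skew commutation rule does not alter the leading behaviour in a controlled way. First I would record an auxiliary fact: for every $r\in A$ and every $k\ge 0$, one has
\[
x^k r \;=\; \theta^k(r)\,x^k \;+\; (\text{terms of degree strictly less than } k),
\]
which is proved by a quick induction on $k$ using $xr=\theta(r)x+\delta_\theta(r)$. This is the key lemma because it tells us exactly what the leading coefficient of $x^{n-m}g(x)$ is, and it is the only place where the skewness really shows up.

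Next I would set up the main induction on $n:=\deg f(x)$. Write $g(x)=b_m x^m+(\text{lower})$ with $b_m\in A^\times$. For the base case, when $n<m$ (including $f=0$) take $q(x)=0$ and $r(x)=f(x)$; the conclusion holds trivially. For the inductive step, assume the algorithm is valid for all polynomials of degree $<n$, and let $a_n$ be the leading coefficient of $f(x)$. Using the auxiliary fact above, the leading term of $x^{n-m}g(x)$ is $\theta^{n-m}(b_m)x^n$, and since $\theta$ is an automorphism and $b_m$ is a unit, $\theta^{n-m}(b_m)\in A^\times$. Define
\[
f_1(x) \;:=\; f(x) \;-\; a_n\,\theta^{n-m}(b_m)^{-1}\,x^{n-m}\,g(x).
\]
By construction the coefficient of $x^n$ in $f_1(x)$ is $a_n - a_n\theta^{n-m}(b_m)^{-1}\theta^{n-m}(b_m)=0$, so $\deg f_1(x)<n$. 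Apply the induction hypothesis to $f_1(x)$ to obtain $q_1(x),r(x)$ with $f_1=q_1 g+r$ and $r=0$ or $\deg r<\deg g$; then
\[
f(x) \;=\; \bigl(a_n\,\theta^{n-m}(b_m)^{-1}\,x^{n-m}+q_1(x)\bigr)\,g(x) \;+\; r(x),
\]
which gives the desired decomposition.

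The only genuinely non-routine point is the auxiliary lemma on the leading term of $x^k r$: the rest is bookkeeping that mirrors the commutative polynomial long division, and nothing beyond the unit property of $b_m$ (transported through $\theta^{n-m}$) is needed to justify that the leading term can be cancelled. I would not attempt to prove uniqueness here, since the statement does not claim it; if uniqueness were desired, one would argue that a difference $q_1g+r_1=q_2g+r_2$ forces $(q_1-q_2)g=r_2-r_1$, and comparing degrees—again using that the leading coefficient of $g$ is a unit, so $g$ is not a zero-divisor on the right in the degree filtration—would force $q_1=q_2$ and $r_1=r_2$.
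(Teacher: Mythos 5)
Your proof is correct: the auxiliary lemma $x^k r=\theta^k(r)x^k+(\text{terms of degree}<k)$, the observation that $\theta^{n-m}(b_m)$ is again a unit, and the strong induction on $\deg f$ with the cancellation $a_n\theta^{n-m}(b_m)^{-1}\theta^{n-m}(b_m)=a_n$ together give a complete and valid division argument, and you rightly do not claim uniqueness. Note that the paper does not prove this statement at all, it simply cites it from the reference \cite{jian2022}; your argument is the standard proof of that cited result, so there is no divergence in approach to report.
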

If $r(x)=0,$ then $g(x)$ is called a \textit{right divisor} of $f(x).$ A left division algorithm in $A[x;\theta,\delta_\theta]$ holds and a left divisor is also defined similarly. However, in this article, we refer only to the right divisors.
\begin{definition}
    Let $A$ be a ring with unity. A \textit{pseudo-linear transformation} $T_{\theta, \delta_\theta, M}: A^n\to A^n$ is an additive map of the form
    \begin{equation}\label{Defn of semi-linear map}
        T_{\theta, \delta_\theta, M}(\bm{r})=\theta(\bm{r})M + \delta_\theta(\bm{r}),
    \end{equation}
    where $\theta\in \Aut(A),$ $\delta_\theta$ is a $\theta$-derivation of A, $\bm{r}=(r_1, r_2, \dots, r_n)\in A^n,$ $\theta(\bm{r})=(\theta(r_1), \theta(r_2), \dots, \\\theta(r_n)),$ $ \delta_\theta(\bm{r})=(\delta_\theta(r_1), \delta_\theta(r_2), \dots, \delta_\theta(r_n))$ and $M\in M_n(A).$
\end{definition}
\begin{definition}
A commutative ring $R$ with unity $1_R$ is called an \textit{$\mathbb{F}_q$-algebra} if there exists a ring homomorphism $f:\mathbb{F}_q\to R$ such that $f(1)=1_R.$
\end{definition}
\begin{example}
    Consider the $l$-dimensional $\mathbb{F}_q$-vector space $\mathbb{F}_q^l.$ It is also a ring (usually called a \textit{product ring}) with respect to the component-wise multiplication, i.e., for $\mathbf{x}=(x_1, x_2, \dots, x_l),$ $\mathbf{y}=(y_1, y_2,\dots, y_l)\in \mathbb{F}_q^l,$ $\mathbf{x} \mathbf{y}:=(x_1y_1, x_2y_2,\dots, x_ly_l).$ With this multiplication, it is a commutative ring with unity $(\underbrace{1, 1, \dots, 1}_{l\text{-times}})$. Note that for $a\in\mathbb{F}_q$, the map $a\mapsto (\underbrace{a, a, \dots, a}_{l\text{-times}})$ defines an $\mathbb{F}_q$-algebra structure on this product ring.
\end{example}
Let $R$ be a finite ring with unity. An \textit{$R$-linear code} $\mathcal{C}$ of length $n$ is a left $R$-submodule of $R^n.$ Elements of $\mathcal{C}$ are called \textit{codewords}. The \textit{(Hamming) weight} of a codeword $\bm{c}$ in $\mathcal{C},$ denoted by $w_H(\bm{c}),$ is equal to the number of non-zero coordinates of $\bm{c}=(c_0,c_1,\dots,c_{n-1}).$ The \textit{(Hamming) distance} between two codewords $\bm{c}, \bm{d} \in \mathcal{C},$ denoted by $d_H(\bm{c},\bm{d})$ is equal to $w_H(\bm{c}-\bm{d}).$ The \textit{(Hamming) distance} of a code $\mathcal{C},$ denoted by $d_H(\mathcal{C}),$ is defined as $d_H(\mathcal{C}):=\min\{d_H(\bm{c},\bm{d}):\bm{c}, \bm{d}\in \mathcal{C}\textnormal{ and }\bm{c}\neq \bm{d}\}.$ If $R$ is the finite field $\mathbb{F}_q,$ an $R$-linear code of length $n$ and dimension $k$ is a $k$-dimensional $\mathbb{F}_q$-subspace of $\mathbb{F}_q^n;$ often denoted by $[n,k]_q$-linear code. If the distance of an $[n,k]_q$-linear code is known, it is denoted by $[n,k,d]_q$-linear code. Any $[n,k,d]_q$-code satisfies $d\le n-k+1.$ An $[n, k, n-k+1]_q$-code is called a \textit{Maximum Distance Separable (MDS) code.}
    \par
    Any non-degenerate bilinear form $\langle.,.\rangle$ on $R^n$ is called an \textit{inner product} on $R^n$. If $R$ is commutative, then the \textit{Euclidean inner product} on $R^n$ is defined as $\langle \bm{c},\bm{d} \rangle_E := \underset{i=0}{\overset{n-1}{\sum}}c_id_i,$ for $\bm{c}=(c_0,c_1,\dots,c_{n-1}), \bm{d} =(d_0,d_1,\dots, d_{n-1})\in R^n$ and the \textit{Euclidean dual} of a linear code $\mathcal{C}$ over $R$ of length $n$ is defined as $\mathcal{C}^{\bot}:=\{\bm{x}\in R^n:\langle \bm{x}, \bm{c} \rangle_E =0, \, \forall \, \bm{c}\in \mathcal{C}\}.$
    \begin{definition}
        For a finite commutative ring with unity $R,$ a linear code $\mathcal{C}$ over $R$ is called \textit{Euclidean dual-containing} if $\mathcal{C}^\bot \subseteq \mathcal{C}.$
    \end{definition}
A $q$-ary \textit{quantum code} $Q$ of length $n$ and size $K$ is a $K$-dimensional subspace of the $q^n$-dimensional Hilbert space $ (\mathbb{C}^{q})^{\otimes n}= \mathbb{C}^q\otimes\cdots\otimes \mathbb{C}^q.$ If $k:=\log_q(K),$ then we use $[[n,k,d]]_q $ to denote a $q$-ary quantum code of length $n$ and minimum distance $d$. A quantum code having minimum distance $d$ can detect up to (bit flip and phase shift) $d-1$ errors and correct up to (bit flip and phase shift) $\lfloor \frac{d-1}{2} \rfloor$ errors. Any $q$-ary $[[n,k,d]]_q$ quantum code $Q$ satisfies $2d\leq n-k+2,$ which is referred to as \textit{quantum Singleton bound}. A $q$-ary $[[n,k,d]]_q$ quantum code is said to be \textit{Maximum Distance Separable (MDS)} if $2d= n-k+2$ and it is said to be \textit{almost MDS} if $2d\ge n-k.$
\section{Automorphisms and derivations of \texorpdfstring{$\mathbb{F}_q^l$}{}}\label{Section 3}
 For $l\in \mathbb{N},$ consider the product ring $\mathcal{R}:=\mathbb{F}_q^l.$ If $\theta_1, \theta_2,\dots, \theta_l \in \Aut(\mathbb{F}_q),$ then
 \begin{equation}\label{automorphisms of R}
    \begin{split}
      \theta_1\times \theta_2 \times \cdots\times \theta_l&: \mathcal{R}\to \mathcal{R} \textnormal{ given by }\\
      \theta_1\times \theta_2 \times \cdots\times \theta_l(r_1, r_2,\dots, r_l)&:=\left(\theta_1(r_1), \theta_2(r_2),\dots, \theta_l(r_l)\right)
      \end{split}
 \end{equation}
is an automorphism of $\mathcal{R}.$ We denote the set of automorphisms of $\mathcal{R}$ of the form \eqref{automorphisms of R} by $\mathcal{A},$ that is, $\mathcal{A}:=\left\{\theta_1\times \theta_2 \times \cdots\times \theta_l: \theta_1, \theta_2,\dots, \theta_l \in \Aut(\mathbb{F}_q)\right\}\subset \Aut(\mathcal{R}).$
\begin{lemma}
    Let $\Theta\in \mathcal{A}$ and $\bm{s}\in \mathcal{R}.$ Then $\Delta_{\Theta, \bm{s}}: \mathcal{R}\to \mathcal{R}$ defined by $\Delta_{\Theta, \bm{s}}(\bm{r})=\bm{s}(\Theta(\bm{r})-\bm{r})$ is a $\Theta$-derivation of $\mathcal{R}.$
\end{lemma}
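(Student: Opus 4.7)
The plan is to verify the two defining properties of a $\Theta$-derivation: additivity and the twisted Leibniz rule. Both rely only on the fact that $\Theta$ is a ring automorphism of the commutative ring $\mathcal{R}$ and that multiplication by the fixed element $\bm{s}$ distributes over addition.

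First I would observe that $\Delta_{\Theta, \bm{s}}$ is the composition of the additive map $\Theta - \mathrm{Id}$ with left multiplication by $\bm{s}$, so additivity is immediate from the fact that $\Theta$ is a (ring) automorphism and multiplication by $\bm{s}$ is $\mathbb{Z}$-linear. This is essentially the same calculation that shows the inner derivations $\delta_{\theta, a} = a(\theta - \mathrm{Id})$ on a field are additive, only lifted component-wise to the product ring.

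The main step is the twisted Leibniz rule. For $\bm{r}_1, \bm{r}_2 \in \mathcal{R}$, I would expand both sides. On the left,
\[
\Delta_{\Theta, \bm{s}}(\bm{r}_1 \bm{r}_2) \;=\; \bm{s}\bigl(\Theta(\bm{r}_1)\Theta(\bm{r}_2) - \bm{r}_1 \bm{r}_2\bigr),
\]
using that $\Theta$ is multiplicative. On the right,
\[
\Delta_{\Theta, \bm{s}}(\bm{r}_1)\bm{r}_2 + \Theta(\bm{r}_1)\Delta_{\Theta, \bm{s}}(\bm{r}_2) \;=\; \bm{s}\bigl(\Theta(\bm{r}_1) - \bm{r}_1\bigr)\bm{r}_2 + \Theta(\bm{r}_1)\bm{s}\bigl(\Theta(\bm{r}_2) - \bm{r}_2\bigr),
\]
and since $\mathcal{R}$ is commutative one may pull $\bm{s}$ out in front and watch the cross terms $\bm{s}\Theta(\bm{r}_1)\bm{r}_2$ cancel, leaving exactly $\bm{s}(\Theta(\bm{r}_1)\Theta(\bm{r}_2) - \bm{r}_1 \bm{r}_2)$.

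I do not foresee any real obstacle: the key ingredient is the commutativity of $\mathcal{R}$, which makes the telescoping cancellation work, together with $\Theta$ being a ring homomorphism so that $\Theta(\bm{r}_1 \bm{r}_2) = \Theta(\bm{r}_1)\Theta(\bm{r}_2)$. Note that the argument does not use the specific coordinate-wise form $\Theta = \theta_1 \times \cdots \times \theta_l$ of elements of $\mathcal{A}$; any ring automorphism of a commutative ring would do. The statement is a direct generalization of the remark that $a(\theta - \mathrm{Id})$ is an inner $\theta$-derivation of a field, with $a$ replaced by $\bm{s}$ and $\theta$ by $\Theta$.
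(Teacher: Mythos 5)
Your proposal is correct and follows essentially the same argument as the paper: both verify additivity from that of $\Theta$ and establish the twisted Leibniz rule via the same cross term $\bm{s}\Theta(\bm{r}_1)\bm{r}_2$ (the paper adds and subtracts it, you expand both sides and let it cancel), using the multiplicativity of $\Theta$ and the commutativity of $\mathcal{R}$.
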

\begin{proof}
Additivity of $\Delta_{\Theta, \bm{s}}$ follows from the additivity of $\Theta$. For $\bm{r_1},
\bm{r_2}\in \mathcal{R}$,
\begin{align*}
    \Delta_{\Theta, \bm{s}}(\bm{r_1}\bm{r_2})
    =&\bm{s}(\Theta(\bm{r_1}\bm{r_2})-\bm{r_1}\bm{r_2})\\
    =&\bm{s}\Theta(\bm{r_1}\bm{r_2})-\bm{s}\bm{r_1}\bm{r_2}\\
    =&\bm{s}\Theta(\bm{r_1})\Theta(\bm{r_2})-\bm{s}\Theta(\bm{r_1})\bm{r_2}+\bm{s}\Theta(\bm{r_1})\bm{r_2}-\bm{s}\bm{r_1}\bm{r_2}\\
    =&\bm{s}\Theta(\bm{r_1})(\Theta(\bm{r_2})-\bm{r_2})+\bm{s}\bm{r_2}(\Theta(\bm{r_1})-\bm{r_1})\\
    =&\bm{r_2}\Delta_{\Theta, \bm{s}}(\bm{r_1})+\Theta(\bm{r_1})\Delta_{\Theta, \bm{s}}(\bm{r_2}).
\end{align*}
Hence, $\Delta_{\Theta, \bm{s}}$ is a derivation of $\mathcal{R}$.
\end{proof}
\begin{lemma}\label{decomposition of Delta}
    If $\Theta=\theta_1\times\theta_2\times\cdots\times \theta_l\in \mathcal{A}$ and $\Delta_{\Theta, \bm{s}}$ is a $\Theta$-derivation of $\mathcal{R},$ where $\bm{s}=(s_1, s_2,\dots, s_l)\in \mathcal{R},$ then 
    \begin{equation}
        \Delta_{\Theta, \bm{s}}(\bm{r})=(\delta_{\theta_1, s_1}(r_1), \delta_{\theta_2, s_2}(r_2), \dots, \delta_{\theta_l, s_l}(r_l)), 
    \end{equation}
    for all $r=(r_1, r_2, \dots, r_l)\in \mathcal{R}.$
\end{lemma}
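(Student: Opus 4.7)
The proof should be a direct componentwise unpacking of the definition, so I do not expect any real obstacle; the point is just to connect the global formula $\Delta_{\Theta,\bm{s}}(\bm{r}) = \bm{s}(\Theta(\bm{r}) - \bm{r})$ with the inner derivations $\delta_{\theta_i, s_i}$ of $\mathbb{F}_q$ defined in the earlier remark, where $\delta_{\theta, a} = a(\theta - \Id)$.

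My plan is to fix $\bm{r} = (r_1, r_2, \ldots, r_l) \in \mathcal{R}$ and simply evaluate $\Delta_{\Theta,\bm{s}}(\bm{r})$ using the defining formula. First, I would apply $\Theta = \theta_1 \times \theta_2 \times \cdots \times \theta_l$ coordinatewise to get $\Theta(\bm{r}) = (\theta_1(r_1), \ldots, \theta_l(r_l))$, and then subtract $\bm{r}$ componentwise to obtain $\Theta(\bm{r}) - \bm{r} = (\theta_1(r_1) - r_1, \ldots, \theta_l(r_l) - r_l)$. Multiplying on the left by $\bm{s} = (s_1, \ldots, s_l)$ using the componentwise product in $\mathcal{R}$ then yields $\bm{s}(\Theta(\bm{r}) - \bm{r}) = (s_1(\theta_1(r_1) - r_1), \ldots, s_l(\theta_l(r_l) - r_l))$.

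Finally, I would recognize each $i$-th entry $s_i(\theta_i(r_i) - r_i)$ as precisely $\delta_{\theta_i, s_i}(r_i)$ by the definition of an inner derivation of $\mathbb{F}_q$ recalled in the earlier remark. This identifies the tuple with $(\delta_{\theta_1, s_1}(r_1), \ldots, \delta_{\theta_l, s_l}(r_l))$ and completes the verification. No induction or case analysis is needed; the proof is one short chain of equalities, and the only thing worth emphasising is that the componentwise product on $\mathcal{R}$ is exactly what makes the inner derivations on the factors stitch together into the global $\Theta$-derivation $\Delta_{\Theta,\bm{s}}$.
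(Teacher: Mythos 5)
Your proposal is correct and follows exactly the paper's own argument: unfold $\Delta_{\Theta,\bm{s}}(\bm{r})=\bm{s}(\Theta(\bm{r})-\bm{r})$ componentwise and identify each entry $s_i(\theta_i(r_i)-r_i)$ with $\delta_{\theta_i,s_i}(r_i)$. Nothing is missing.
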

\begin{proof}
    Observe that
   \begin{align*}
       \Delta_{\Theta, \bm{s}}(\bm{r})&=\bm{s}(\Theta(\bm{r})-\bm{r})\\
       &= (s_1, s_2,\dots, s_l)\left((\theta_1(r_1), \theta_2(r_2), \dots, \theta_l(r_l))-(r_1, r_2, \dots, r_l)\right)\\
       &=(s_1, s_2,\dots, s_l)(\theta_1(r_1)-r_1, \theta_2(r_2)-r_2, \dots, \theta_l(r_l)-r_l)\\
       &=(s_1(\theta_1(r_1)-r_1), s_2(\theta_2(r_2)-r_2), \dots, s_l(\theta_l(r_l)-r_l))\\
       &=(\delta_{\theta_1, s_1}(r_1), \delta_{\theta_2, s_2}(r_2), \dots, \delta_{\theta_l, s_l}(r_l)).
   \end{align*}
\end{proof}
\begin{lemma}\label{thetafixess}
    Let $\Theta\in \mathcal{A}$ and $\bm{s}\in \mathcal{R}.$ If $\Theta$ fixes $\bm{s},$ then $\Delta_{\Theta, \bm{s}} \circ \Theta=\Theta \circ \Delta_{\Theta, \bm{s}}.$ 
\end{lemma}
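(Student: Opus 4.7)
The plan is a direct computation from the definition $\Delta_{\Theta,\bm{s}}(\bm{r}) = \bm{s}(\Theta(\bm{r}) - \bm{r})$, together with the fact that $\Theta$ is a ring automorphism of $\mathcal{R}$ (so it is additive and multiplicative) and the hypothesis $\Theta(\bm{s}) = \bm{s}$.

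First, I would evaluate the left-hand side on an arbitrary $\bm{r} \in \mathcal{R}$:
\begin{align*}
(\Delta_{\Theta,\bm{s}} \circ \Theta)(\bm{r})
&= \Delta_{\Theta,\bm{s}}(\Theta(\bm{r}))
= \bm{s}\bigl(\Theta(\Theta(\bm{r})) - \Theta(\bm{r})\bigr)
= \bm{s}\bigl(\Theta^2(\bm{r}) - \Theta(\bm{r})\bigr).
\end{align*}
Next, I would evaluate the right-hand side, this time pushing $\Theta$ through the defining formula using that $\Theta$ preserves products and differences:
\begin{align*}
(\Theta \circ \Delta_{\Theta,\bm{s}})(\bm{r})
&= \Theta\bigl(\bm{s}(\Theta(\bm{r}) - \bm{r})\bigr)
= \Theta(\bm{s})\bigl(\Theta^2(\bm{r}) - \Theta(\bm{r})\bigr).
\end{align*}
The final step is to invoke the hypothesis $\Theta(\bm{s}) = \bm{s}$, which makes the two expressions identical for every $\bm{r}$, yielding the claimed equality of maps.

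There is no real obstacle here; the only thing to be careful about is using that $\Theta$ is a genuine ring homomorphism on $\mathcal{R}$ (not merely additive), which is built into the definition of $\mathcal{A}$. Alternatively, one could give a coordinate-wise proof via Lemma \ref{decomposition of Delta}, reducing to checking $\delta_{\theta_j,s_j} \circ \theta_j = \theta_j \circ \delta_{\theta_j,s_j}$ whenever $\theta_j(s_j) = s_j$ on each factor $\mathbb{F}_q$, but the direct argument above is cleaner and avoids the decomposition.
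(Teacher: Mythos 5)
Your computation is correct and is exactly the argument the paper intends — its proof is simply ``Immediate,'' meaning the direct verification you spelled out: apply both composites to an arbitrary $\bm{r}$, use that $\Theta$ is a ring homomorphism to get $\Theta\bigl(\bm{s}(\Theta(\bm{r})-\bm{r})\bigr)=\Theta(\bm{s})\bigl(\Theta^2(\bm{r})-\Theta(\bm{r})\bigr)$, and invoke $\Theta(\bm{s})=\bm{s}$. Nothing is missing.
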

\begin{proof}
% Let $\bm{a}=(a_1,a_2,\dots,a_l)\in\mathcal{R}$ and $\Theta=\theta_1\times\dots\times\theta_l.$ Then, 
% \begin{align*}
%     \Delta_{\Theta, \bm{s}}(\Theta(\bm{a}))=&\bm{s}(\Theta(\Theta(\bm{a}))-\Theta(\bm{a})))\\
%     =&\bm{s}\left(\Theta(\Theta(\bm{a})-\bm{a})\right)\\
%     =&\Theta(\bm{s}(\Theta(\bm{a})-\bm{a})), &&\text{since $\Theta$ fixes $\bm{s}$}\\
%     =&\Theta(\Delta_{\Theta, \bm{s}}(\bm{a}))
% \end{align*}
Immediate.
\end{proof}

\begin{corollary}
    If $\Theta\in \mathcal{A}$ and $\bm{a}\in \mathbb{F}_p^{l},$ then $\Delta_{\Theta,\bm{a}} \circ \Theta=\Theta \circ \Delta_{\Theta, \bm{a}}.$ 
\end{corollary}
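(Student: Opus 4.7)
The plan is to reduce this corollary directly to Lemma \ref{thetafixess}: it suffices to verify that every $\bm{a}\in \mathbb{F}_p^l$ is fixed by the automorphism $\Theta\in \mathcal{A}$.

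Write $\Theta = \theta_1\times \theta_2\times\cdots\times \theta_l$ with each $\theta_j\in \Aut(\mathbb{F}_q)$. By the remark on $\Aut(\mathbb{F}_q)$ in the preliminaries, each $\theta_j$ is a power of the Frobenius automorphism $\sigma_p\colon a\mapsto a^p$. The key observation is that $\sigma_p$ fixes the prime subfield $\mathbb{F}_p$ pointwise, since for any $a\in \mathbb{F}_p$ one has $a^p=a$ by Fermat's little theorem; consequently every power of $\sigma_p$, and in particular every $\theta_j$, restricts to the identity on $\mathbb{F}_p$.

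Therefore, for $\bm{a}=(a_1,a_2,\ldots,a_l)\in \mathbb{F}_p^l$,
\[
    \Theta(\bm{a}) = (\theta_1(a_1),\theta_2(a_2),\ldots,\theta_l(a_l)) = (a_1,a_2,\ldots,a_l) = \bm{a},
\]
so $\Theta$ fixes $\bm{a}$. Applying Lemma \ref{thetafixess} with $\bm{s}=\bm{a}$ gives $\Delta_{\Theta,\bm{a}}\circ \Theta = \Theta\circ \Delta_{\Theta,\bm{a}}$, completing the argument. There is no real obstacle here; the content of the corollary is just the standard fact that automorphisms of $\mathbb{F}_q$ are $\mathbb{F}_p$-linear, packaged componentwise and fed into the previous lemma.
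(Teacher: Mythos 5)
Your proposal is correct and follows the same route as the paper: show that $\Theta$ fixes every element of $\mathbb{F}_p^l$ componentwise and then invoke Lemma \ref{thetafixess} with $\bm{s}=\bm{a}$. You simply spell out the reason $\theta_j(a_j)=a_j$ (each $\theta_j$ is a power of Frobenius, which is the identity on $\mathbb{F}_p$), which the paper leaves implicit.
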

\begin{proof}
    If $\Theta=\theta_1\times\theta_2\times\cdots\times \theta_l,$ then $\Theta(\bm{a})=(\theta_1(a_1), \theta_2(a_2), \dots, \theta_l(a_l))$$=(a_1, a_2,\dots, a_l)=\bm{a}.$ The proof now follows from Lemma \ref{thetafixess}.
\end{proof}
 
\section{\texorpdfstring{$(\Theta, \Delta_{\Theta,\bm{s}}, \bm{a})$}{}-cyclic codes over \texorpdfstring{$\mathcal{R}$}{}}\label{section 4}
In this section, we first introduce $(\theta, \delta_\theta, \alpha)$-cyclic code over $\mathbb{F}_q$ and then study $(\Theta, \Delta_{\Theta, \bm{s}}, \bm{a})$-cyclic code over $\mathcal{R}.$
\subsection{\texorpdfstring{$(\theta, \delta_\theta, \alpha)$}{}-cyclic codes over \texorpdfstring{$\mathbb{F}_q$}{}}
Throughout this subsection, assume that $\theta\in \Aut(\mathbb{F}_q), \delta_\theta$ be a $\theta$-derivation of $\mathbb{F}_q$ and $\alpha\in \mathbb{F}_q^{\times}.$
\begin{definition}\cite{jian2022}
    An $\mathbb{F}_q$-subspace $\mathcal{C}$ of $\mathbb{F}_q^n$ is called a $(\theta, \delta_\theta, \alpha)$-cyclic code of length $n$ over $\mathbb{F}_q$ if $T_{\theta, \delta_\theta, M_\alpha}(\mathcal{C})\subseteq \mathcal{C},$ where $T_{\theta, \delta_\theta, M_\alpha}$ is defined in Equation \eqref{Defn of semi-linear map} with 
    \begin{equation}
        M_{\alpha}=\begin{pmatrix}
            0 & 1 & 0 & \dots & 0\\
            0 & 0 & 1 & \dots & 0\\
            \vdots & \vdots & \vdots &  & \vdots\\
            0 & 0 & 0 & \dots & 1\\
            \alpha & 0 & 0 & \dots & 0\\
        \end{pmatrix}.
    \end{equation}
\end{definition}
%\begin{remark}
    %From now onwards, if $\mathcal{C}$ is a $(\theta, \delta_\theta, \alpha)$-cyclic code of length $n$ over $\mathbb{F}_q,$ we will always assume that the order of $\theta$ divides $n$ and $\alpha \in (\mathbb{F}_q^{\times})^\theta.$
%\end{remark}
For $\bm{f}(x)$ in $\mathbb{F}_q[x; \theta, \delta_\theta],$ define $\langle\bm{f(x)}\rangle=\{\bm{r}(x)\bm{f}(x): \bm{r}(x)\in \mathbb{F}_q[x; \theta, \delta_\theta]\}.$ Observe that $\langle x^n-\alpha \rangle$ may not be a two-sided ideal of $\mathbb{F}_q[x; \theta, \delta_\theta]$ and hence, $\mathbb{F}_q[x; \theta, \delta_\theta]/\langle x^n-\alpha \rangle$ may not be a ring. However, the quotient $\mathbb{F}_q[x; \theta, \delta_\theta]/\langle x^n-\alpha \rangle$ is always a left $\mathbb{F}_q[x; \theta, \delta_\theta]$-module, where $\mathbb{F}_q[x; \theta, \delta_\theta]$-multiplication is defined as
$$
    \bm{u}(x)(\bm{v}(x)+\langle x^n-\alpha \rangle)=\bm{u}(x)\bm{v}(x)+\langle x^n-\alpha \rangle,
$$
for $\bm{u}(x)\in \mathbb{F}_q[x; \theta, \delta_\theta]$ and $\bm{v}(x)+\langle x^n-\alpha \rangle \in \mathbb{F}_q[x; \theta, \delta_\theta]/\langle x^n-\alpha\rangle.$
If $\mathcal{C}$ is an $\mathbb{F}_q$-subspace of $\mathbb{F}_q^{n},$ then any $\bm{c}=(c_0, c_1, \dots, c_{n-1})\in \mathcal{C}$ can be represented uniquely by a polynomial $\bm{c}(x):=c_0+c_1x+\cdots+c_{n-1}x^{n-1}\in \mathbb{F}_q[x; \theta, \delta_\theta]/\langle x^n-\alpha\rangle.$ Identifying codewords with this polynomial representation, we have an equivalent characterization for $(\theta, \delta_\theta, \alpha)$-cyclic codes over $\mathbb{F}_q.$
\begin{proposition}\label{equivalent criterion for theta delta codes over F_q}
    An $\mathbb{F}_q$-subspace $\mathcal{C}$ of $\mathbb{F}_q^{n}$ is a $(\theta, \delta_\theta, \alpha)$-cyclic code over $\mathbb{F}_q$ if and only if $x\bm{c}(x)\in \mathcal{C}$ whenever $\bm{c}(x)\in \mathcal{C}.$
\end{proposition}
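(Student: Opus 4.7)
The core of the proposition is the assertion that the two a priori distinct operations, namely the pseudo-linear transformation $T_{\theta,\delta_\theta,M_\alpha}$ on $\mathbb{F}_q^n$ and left multiplication by $x$ on $\mathbb{F}_q[x;\theta,\delta_\theta]/\langle x^n-\alpha\rangle$, are in fact the same under the coordinate-polynomial correspondence $\bm{c}=(c_0,\dots,c_{n-1})\leftrightarrow \bm{c}(x)=c_0+c_1x+\cdots+c_{n-1}x^{n-1}$. Once this identification is verified, the equivalence is immediate from the definition: $\mathcal{C}$ is closed under $T_{\theta,\delta_\theta,M_\alpha}$ iff $\mathcal{C}$ is closed under left multiplication by $x$.

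The plan is therefore to carry out both computations and compare them coordinate-wise. First I would compute $T_{\theta,\delta_\theta,M_\alpha}(\bm{c}) = \theta(\bm{c})M_\alpha+\delta_\theta(\bm{c})$. The row-vector/matrix product with $M_\alpha$ acts as a constacyclic shift, giving
\[
\theta(\bm{c})M_\alpha = (\alpha\theta(c_{n-1}),\, \theta(c_0),\, \theta(c_1),\, \dots,\, \theta(c_{n-2})),
\]
and adding $\delta_\theta(\bm{c})=(\delta_\theta(c_0),\dots,\delta_\theta(c_{n-1}))$ produces the explicit vector
\[
\bigl(\alpha\theta(c_{n-1})+\delta_\theta(c_0),\; \theta(c_0)+\delta_\theta(c_1),\; \dots,\; \theta(c_{n-2})+\delta_\theta(c_{n-1})\bigr).
\]

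Next I would compute $x\bm{c}(x)$ inside the left module $\mathbb{F}_q[x;\theta,\delta_\theta]/\langle x^n-\alpha\rangle$. Using the skew commutation rule $xr=\theta(r)x+\delta_\theta(r)$ termwise gives
\[
x\bm{c}(x) = \delta_\theta(c_0) + \sum_{i=1}^{n-1}\bigl(\theta(c_{i-1})+\delta_\theta(c_i)\bigr)x^i + \theta(c_{n-1})x^n,
\]
and reducing modulo $x^n-\alpha$ replaces $\theta(c_{n-1})x^n$ by $\alpha\theta(c_{n-1})$, contributing to the constant coefficient. The resulting coefficient vector coincides exactly with the vector computed for $T_{\theta,\delta_\theta,M_\alpha}(\bm{c})$. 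The step most prone to sign/index slips is checking how the row-vector product with $M_\alpha$ realises the shift (and verifying that the $\alpha$ lands in the $0$-th coordinate rather than the last), so I would keep the index conventions pedantically aligned between the two computations.

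Finally, since $\mathcal{C}$ is an $\mathbb{F}_q$-subspace, the coincidence $T_{\theta,\delta_\theta,M_\alpha}(\bm{c})\leftrightarrow x\bm{c}(x)$ under the identification shows that $T_{\theta,\delta_\theta,M_\alpha}(\mathcal{C})\subseteq \mathcal{C}$ if and only if $x\bm{c}(x)\in\mathcal{C}$ for every $\bm{c}(x)\in\mathcal{C}$, completing the proof. There is no real obstacle here beyond careful bookkeeping; the result is essentially a translation between the matrix and polynomial descriptions of the same $\mathbb{F}_q$-linear map.
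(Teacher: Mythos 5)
Your proposal is correct and follows exactly the paper's approach: the paper's one-line proof is precisely the observation that $x\bm{c}(x)$ represents $T_{\theta,\delta_\theta,M_\alpha}(\bm{c})$ under the coordinate-polynomial identification, which you verify by the same explicit computation the paper later carries out in detail for the analogous statement over $\mathcal{R}$ (Proposition \ref{equivalent criterion for theta delta codes over R}). Your index bookkeeping for $\theta(\bm{c})M_\alpha$ and the reduction of $\theta(c_{n-1})x^n$ to $\alpha\theta(c_{n-1})$ are both accurate, so nothing is missing.
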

\begin{proof}
    It is not difficult to see that $x\bm{c}(x)$ represents the codeword $T_{\theta, \delta_\theta, M_\alpha}(\bm{c}).$
\end{proof}
The following theorem is immediate from Proposition \ref{equivalent criterion for theta delta codes over F_q}.
\begin{theorem}\label{left-submodule}\cite{boucher2014linear}
    Let $\mathcal{C}$ be a subset of $\mathbb{F}_q^n.$ Then $\mathcal{C}$ is a $(\theta, \delta_\theta, \alpha)$-cyclic code of length $n$ over $\mathbb{F}_q$ if and only if $\mathcal{C}$ is a left $\mathbb{F}_q[x; \theta, \delta_\theta]$-submodule of $\mathbb{F}_q[x; \theta, \delta_\theta]/\langle x^n-\alpha \rangle$.
\end{theorem}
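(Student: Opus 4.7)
The plan is to derive the equivalence directly from Proposition \ref{equivalent criterion for theta delta codes over F_q} together with the module axioms, since everything hinges on how multiplication in the quotient is built up from the two operations: scalar multiplication by elements of $\mathbb{F}_q$ and left multiplication by $x$. The identification of a codeword $\bm{c}=(c_0,\dots,c_{n-1})$ with its polynomial $\bm{c}(x)=\sum c_i x^i$ is already fixed in the paragraph preceding Proposition \ref{equivalent criterion for theta delta codes over F_q}, and the module structure on $\mathbb{F}_q[x;\theta,\delta_\theta]/\langle x^n-\alpha\rangle$ is the one spelled out there, so no new set-up is required.

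For the forward direction, I would assume $\mathcal{C}$ is a $(\theta,\delta_\theta,\alpha)$-cyclic code, so by definition $\mathcal{C}$ is an $\mathbb{F}_q$-subspace closed under $T_{\theta,\delta_\theta,M_\alpha}$. Proposition \ref{equivalent criterion for theta delta codes over F_q} reinterprets this last condition as: $x\bm{c}(x)\in\mathcal{C}$ whenever $\bm{c}(x)\in\mathcal{C}$. A straightforward induction on the degree then shows $x^k\bm{c}(x)\in\mathcal{C}$ for every $k\ge 0$, and combining this with the $\mathbb{F}_q$-linearity of $\mathcal{C}$ gives $\bm{r}(x)\bm{c}(x)\in\mathcal{C}$ for every $\bm{r}(x)\in\mathbb{F}_q[x;\theta,\delta_\theta]$, which is exactly the condition for $\mathcal{C}$ to be a left $\mathbb{F}_q[x;\theta,\delta_\theta]$-submodule of the quotient.

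For the reverse direction, I would simply extract the two weaker closure properties from the module axioms: a left submodule is in particular closed under left multiplication by scalars in $\mathbb{F}_q\subset \mathbb{F}_q[x;\theta,\delta_\theta]$ (so $\mathcal{C}$ is an $\mathbb{F}_q$-subspace of $\mathbb{F}_q^n$) and closed under left multiplication by the single element $x$ (so $x\bm{c}(x)\in\mathcal{C}$ whenever $\bm{c}(x)\in\mathcal{C}$). Proposition \ref{equivalent criterion for theta delta codes over F_q} then immediately upgrades this to the statement that $\mathcal{C}$ is stable under $T_{\theta,\delta_\theta,M_\alpha}$, completing the equivalence.

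There is no genuine obstacle here; the substantive work was done in translating the action of $T_{\theta,\delta_\theta,M_\alpha}$ on tuples into left multiplication by $x$ on polynomials, which is precisely what Proposition \ref{equivalent criterion for theta delta codes over F_q} records. The only point that deserves a line of care is the induction step showing that closure under multiplication by $x$ combined with $\mathbb{F}_q$-linearity yields closure under multiplication by arbitrary skew polynomials, because the skew multiplication rule $xr=\theta(r)x+\delta_\theta(r)$ means that products are not formal concatenations; however, once one writes $\bm{r}(x)=\sum r_k x^k$ and expands $\bm{r}(x)\bm{c}(x)$ as an $\mathbb{F}_q$-linear combination of the iterated left products $x^k\bm{c}(x)$, the argument closes without needing any further property of $\theta$ or $\delta_\theta$.
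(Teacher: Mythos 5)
Your argument is correct and matches the paper's approach: the paper declares this theorem immediate from Proposition \ref{equivalent criterion for theta delta codes over F_q}, and its written-out proof of the analogous Theorem \ref{left-submodule R} over $\mathcal{R}$ is exactly your route (iterate the proposition to get $x^i\bm{c}(x)\in\mathcal{C}$, use $\mathbb{F}_q$-linearity for arbitrary skew polynomials, and note the converse by restricting to multiplication by constants and by $x$). Your remark that left multiplication by a constant is untwisted, so the module scalar action agrees with coordinatewise scalar multiplication, is the right point of care and closes the argument.
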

%\begin{corollary}
%Suppose that $\circ(\theta)\mid n$ and $\alpha \in (\mathbb{F}_q^{\times})^\theta.$ Then $\mathcal{C}$ is a $(\theta, \delta_\theta, \alpha)$-cyclic code of length $n$ over $\mathbb{F}_q$ if and only if $\mathcal{C}$ is a left ideal of $\mathbb{F}_q[x; \theta, \delta_\theta]/\langle x^n-\alpha \rangle.$
%\end{corollary}
\begin{theorem}\label{generator polynomial_F_q}\cite{boucher2014linear}
    Let $\mathcal{C}$ be a $(\theta, \delta_\theta, \alpha)$-cyclic code of length $n$ over $\mathbb{F}_q$. Then $\mathcal{C}$ is generated by a unique monic polynomial $\bm{g}(x)\in \mathbb{F}_q[x; \theta, \delta_\theta].$ Moreover, $\bm{g}(x)$ is a right divisor of $x^n-\alpha$ in $\mathbb{F}_q[x; \theta, \delta_\theta].$ 
\end{theorem}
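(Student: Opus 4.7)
The plan is to mimic the classical argument for cyclic codes, adapted to the noncommutative setting of the skew polynomial ring $\mathbb{F}_q[x;\theta,\delta_\theta]$. By Theorem \ref{left-submodule}, I may identify $\mathcal{C}$ with a left submodule of the quotient $\mathbb{F}_q[x;\theta,\delta_\theta]/\langle x^n-\alpha\rangle$, whose elements I represent by polynomials of degree less than $n$. Assuming $\mathcal{C}\neq\{0\}$ (the zero code is handled by the convention $\bm{g}(x)=x^n-\alpha$), I would pick a nonzero element of $\mathcal{C}$ of minimal degree and, after scaling by the inverse of its leading coefficient (a unit of $\mathbb{F}_q$), call it $\bm{g}(x)$. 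This is the candidate generator.

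To show $\mathcal{C}=\langle \bm{g}(x)\rangle$, I would take an arbitrary $\bm{c}(x)\in\mathcal{C}$ and apply the Right Division Algorithm, which is available because $\bm{g}(x)$ is monic: write $\bm{c}(x)=\bm{q}(x)\bm{g}(x)+\bm{r}(x)$ with $\bm{r}(x)=0$ or $\deg\bm{r}<\deg\bm{g}$. Since $\mathcal{C}$ is a left $\mathbb{F}_q[x;\theta,\delta_\theta]$-submodule, $\bm{q}(x)\bm{g}(x)\in\mathcal{C}$ and hence $\bm{r}(x)=\bm{c}(x)-\bm{q}(x)\bm{g}(x)\in\mathcal{C}$. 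The minimality of $\deg\bm{g}$ forces $\bm{r}(x)=0$, so $\bm{c}(x)\in\langle\bm{g}(x)\rangle$. Uniqueness is then immediate: any other monic polynomial of minimal degree in $\mathcal{C}$ would differ from $\bm{g}(x)$ by an element of $\mathcal{C}$ of strictly smaller degree (leading terms cancel since both are monic of the same degree), forcing equality.

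For the divisibility claim, I would apply the Right Division Algorithm once more, now in the ring $\mathbb{F}_q[x;\theta,\delta_\theta]$ itself: write $x^n-\alpha=\bm{q}_0(x)\bm{g}(x)+\bm{r}_0(x)$ with $\bm{r}_0(x)=0$ or $\deg\bm{r}_0<\deg\bm{g}$. Reducing modulo $\langle x^n-\alpha\rangle$, the polynomial $\bm{r}_0(x)$ represents the codeword $-\bm{q}_0(x)\bm{g}(x)\in\mathcal{C}$, and minimality of $\deg\bm{g}$ again forces $\bm{r}_0(x)=0$, so $\bm{g}(x)$ is a right divisor of $x^n-\alpha$. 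The only subtlety I anticipate is an entirely formal one: one must consistently use \emph{right} division (not left) throughout, and rely on Theorem \ref{left-submodule} to ensure that the left-module action $\bm{q}(x)\cdot(\bm{g}(x)+\langle x^n-\alpha\rangle)=\bm{q}(x)\bm{g}(x)+\langle x^n-\alpha\rangle$ used above is well-defined despite $\langle x^n-\alpha\rangle$ not being two-sided in general.
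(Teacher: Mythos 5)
Your proof is correct: it is the standard minimal-degree argument via the Right Division Algorithm, which is exactly the classical route (the paper itself states this theorem without proof, citing \cite{boucher2014linear}, and your argument reconstructs that standard proof faithfully, including the divisibility step obtained by dividing $x^n-\alpha$ by $\bm{g}(x)$ and invoking minimality). One tiny remark: the well-definedness of the action $\bm{q}(x)\cdot(\bm{v}(x)+\langle x^n-\alpha\rangle)=\bm{q}(x)\bm{v}(x)+\langle x^n-\alpha\rangle$ follows simply from $\langle x^n-\alpha\rangle$ being a \emph{left} ideal, not from Theorem \ref{left-submodule}, but this does not affect the argument.
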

\begin{remark}
Let $\mathcal{C}$ be a $(\theta, \delta_\theta, \alpha)$-cyclic code of length $n$ over $\mathbb{F}_q.$ Then by Theorem \ref{left-submodule}, $\mathcal{C}$ is a left $\mathbb{F}_q[x; \theta, \delta_\theta]$-submodule of $\mathbb{F}_q[x; \theta, \delta_\theta]/\langle x^n-\alpha \rangle$. Consequently, by Theorem \ref{generator polynomial_F_q}, there is a unique monic polynomial $\bm{g}(x)$ of least degree in $\mathcal{C}$ that generates $\mathcal{C}.$ The polynomial $\bm{g}(x)$ is called the generator polynomial of $\mathcal{C}.$ Moreover, if $\bm{g}(x)=\sum_{i=0}^{n-k} g_ix^i,$ then a generator matrix for $\mathcal{C}$ is given by 
\begin{equation*}
    G_\mathcal{C}=
    \begin{pmatrix}
        \bm{g}\\
        T_{\theta, \delta_\theta, M_\alpha}(\bm{g})\\
        T_{\theta, \delta_\theta, M_\alpha}^{2}(\bm{g})\\
        \vdots\\
        T_{\theta, \delta_\theta, M_\alpha}^{k-1}(\bm{g})
    \end{pmatrix},
\end{equation*}
where $\bm{g}:=(g_0, g_1, \dots, g_{n-k}, 0, 0, \dots, 0)\in \mathbb{F}_q^{n}.$
\end{remark}

\subsection{\texorpdfstring{$(\Theta, \Delta_{\Theta,\bm{s}}, \bm{a})$}{}-cyclic codes over \texorpdfstring{$\mathcal{R}$}{}}\label{subsection4.2}
Recall that $\mathcal{R}=\mathbb{F}_q^{l}.$ Let $\Theta \in \mathcal{A}$ and $\Delta_{\Theta, \bm{s}}$ be a $\Theta$-derivation of $\mathcal{R}.$ Choose $\bm{a}\in \mathcal{R}^{\times}.$
%\begin{definition}
    %{\color{blue} A subset $\mathcal{C}$ of $\mathcal{R}^{n}$ is said to be a $(\Theta,\Delta_{\Theta, \bm{s}})$-linear code if it is a left $\mathcal{R}[x;\Theta,\Delta_{\Theta, \bm{s}}]$-submodule of $\mathcal{R}[x;\Theta,\Delta_\Theta]/\langle f(x) \rangle$ where $f(x)$ is an arbitrary polynomial of degree of $n$ over $\mathcal{R}$.}
%\end{definition}
\begin{definition}\label{definition of cyclic over R}
    An $\mathcal{R}$-submodule $\mathcal{C}$ of $\mathcal{R}^n$ is called a $(\Theta, \Delta_{\Theta, \bm{s}}, \bm{a})$-cyclic code of length $n$ over $\mathcal{R}$ if $T_{\Theta, \Delta_{\Theta, \bm{s}}, M_{\bm{a}}}(\mathcal{C})\subseteq \mathcal{C},$ where $T_{\Theta, \Delta_{\Theta, \bm{s}}, M_{\bm{a}}}$ is defined in Equation \eqref{Defn of semi-linear map} with 
    \begin{equation}\label{Matrix over R}
        M_{\bm{a}}=\begin{pmatrix}
            \bm{0} & \bm{1} & \bm{0} & \dots & \bm{0}\\
            \bm{0} & \bm{0} & \bm{1} & \dots & \bm{0}\\
            \vdots & \vdots & \vdots &  & \vdots\\
            \bm{0} & \bm{0} & \bm{0} & \dots & \bm{1}\\
            \bm{a} & \bm{0} & \bm{0} & \dots & \bm{0}\\
        \end{pmatrix},
    \end{equation}
    where $\bm{a}=(a_1, a_2, \dots, a_l), \bm{1}=(1, 1, \dots, 1)$ and $\bm{0}=(0, 0, \dots, 0) \in \mathcal{R}.$
\end{definition}
%\begin{remark}
    %From now onwards, if $\mathcal{C}$ is a $(\Theta, \Delta_{\Theta, \bm{s}}, \bm{a})$-cyclic code of length $n$ over $\mathcal{R},$ we will always assume that $\bm{a}\in \mathcal{R}\setminus\{\bm{0}\}$ with $\Theta(\bm{a})=\bm{a}$ and $\circ(\Theta) \mid n.$
%\end{remark}
The quotient $\mathcal{R}_n:=\mathcal{R}[x; \Theta, \Delta_{\Theta, \bm{s}}]/\langle x^n-\bm{a}\rangle$ is a left $\mathcal{R}[x; \Theta, \Delta_{\Theta, \bm{s}}]$-module, where $\mathcal{R}[x; \Theta, \Delta_{\Theta, \bm{s}}]$-multiplication is defined as
$$
    \Bar{\bm{s}}(x)(\Bar{\bm{r}}(x)+\langle x^n-\bm{a} \rangle)=\Bar{\bm{s}}(x)\Bar{\bm{r}}(x)+\langle x^n-\bm{a} \rangle,
$$
where $\Bar{\bm{s}}(x)\in \mathcal{R}[x; \Theta, \Delta_{\Theta, \bm{s}}]$ and $\Bar{\bm{r}}(x)+\langle x^n-\bm{a} \rangle\in \mathcal{R}_n.$ Consider the $\mathcal{R}$-module isomorphism
\begin{equation}\label{identification}
\begin{split}
    \eta &: \mathcal{R}^{n}\to \mathcal{R}[x; \Theta, \Delta_{\Theta, \bm{s}}]/\langle x^n-\bm{a}\rangle\\
    \eta(\Bar{\bm{c}}=(\bm{c}_0, \bm{c}_1, \dots, \bm{c}_{n-1}))&= \Bar{\bm{c}}(x)=\bm{c}_0+\bm{c}_1x+ \cdots+ \bm{c}_{n-1}x^{n-1}.
    \end{split}
\end{equation}
With the identification of vectors in $\mathcal{R}^n$ with elements in  $\mathcal{R}[x; \Theta, \Delta_{\Theta, \bm{s}}]/\langle x^n-\bm{a}\rangle$ via \eqref{identification}, we identify a $(\Theta, \Delta_{\Theta, \bm{s}}, \bm{a})$-cyclic code $\mathcal{C}$ over $\mathcal{R}$ by $\eta(\mathcal{C}).$
\begin{proposition} \label{equivalent criterion for theta delta codes over R}
    An $\mathcal{R}$-submodule $\mathcal{C}$ of $\mathcal{R}^{n}$ is a $(\Theta, \Delta_{\Theta, \bm{s}}, \bm{a})$-cyclic code over $\mathcal{R}$ if and only if $x\Bar{\bm{c}}(x)\in \mathcal{C}$ whenever $\Bar{\bm{c}}(x)\in \mathcal{C}.$
\end{proposition}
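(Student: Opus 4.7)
The plan is to mirror the proof of Proposition~\ref{equivalent criterion for theta delta codes over F_q}, the only real content being the verification that under the identification $\eta$ in \eqref{identification}, multiplication by $x$ in the left module $\mathcal{R}_n = \mathcal{R}[x;\Theta,\Delta_{\Theta,\bm{s}}]/\langle x^n-\bm{a}\rangle$ corresponds exactly to the pseudo-linear transformation $T_{\Theta, \Delta_{\Theta,\bm{s}}, M_{\bm{a}}}$ on $\mathcal{R}^n$. Once this correspondence is established, the equivalence in the proposition is a tautology: $\mathcal{C}$ is $T_{\Theta, \Delta_{\Theta,\bm{s}}, M_{\bm{a}}}$-stable (Definition~\ref{definition of cyclic over R}) iff $\eta(\mathcal{C})$ is closed under multiplication by $x$, which is exactly the polynomial condition in the statement.

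First I would fix $\bar{\bm{c}} = (\bm{c}_0,\bm{c}_1,\dots,\bm{c}_{n-1}) \in \mathcal{R}^n$ with image $\bar{\bm{c}}(x) = \sum_{i=0}^{n-1} \bm{c}_i x^i$, and use the defining relation $x\bm{r} = \Theta(\bm{r})x + \Delta_{\Theta,\bm{s}}(\bm{r})$ of the skew polynomial ring to expand
\begin{equation*}
x\bar{\bm{c}}(x) = \sum_{i=0}^{n-1}\Theta(\bm{c}_i)x^{i+1} + \sum_{i=0}^{n-1}\Delta_{\Theta,\bm{s}}(\bm{c}_i)x^i.
\end{equation*}
Reducing modulo $x^n - \bm{a}$ (i.e. replacing the top term $\Theta(\bm{c}_{n-1})x^n$ by $\bm{a}\Theta(\bm{c}_{n-1})$, which is legal since $\mathcal{R}$ is commutative) gives the coefficient vector
\begin{equation*}
\bigl(\bm{a}\Theta(\bm{c}_{n-1})+\Delta_{\Theta,\bm{s}}(\bm{c}_0),\ \Theta(\bm{c}_0)+\Delta_{\Theta,\bm{s}}(\bm{c}_1),\ \dots,\ \Theta(\bm{c}_{n-2})+\Delta_{\Theta,\bm{s}}(\bm{c}_{n-1})\bigr).
\end{equation*}

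Next I would compute $T_{\Theta, \Delta_{\Theta,\bm{s}}, M_{\bm{a}}}(\bar{\bm{c}}) = \Theta(\bar{\bm{c}})M_{\bm{a}} + \Delta_{\Theta,\bm{s}}(\bar{\bm{c}})$ directly from the shape of $M_{\bm{a}}$ in \eqref{Matrix over R}. Because $M_{\bm{a}}$ sends row $i$ to position $i+1$ (with the last row contributing $\bm{a}$ in position $0$), the vector $\Theta(\bar{\bm{c}})M_{\bm{a}}$ equals $(\bm{a}\Theta(\bm{c}_{n-1}),\Theta(\bm{c}_0),\dots,\Theta(\bm{c}_{n-2}))$, and adding the componentwise derivation yields exactly the vector displayed above. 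Hence $\eta\bigl(T_{\Theta, \Delta_{\Theta,\bm{s}}, M_{\bm{a}}}(\bar{\bm{c}})\bigr) = x\bar{\bm{c}}(x)$ in $\mathcal{R}_n$.

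Finally, since $\eta$ is an $\mathcal{R}$-module isomorphism between $\mathcal{R}^n$ and $\mathcal{R}_n$, the condition $T_{\Theta, \Delta_{\Theta,\bm{s}}, M_{\bm{a}}}(\mathcal{C})\subseteq \mathcal{C}$ is equivalent to $x\,\eta(\mathcal{C}) \subseteq \eta(\mathcal{C})$, which, after suppressing $\eta$, reads precisely as the polynomial condition in the statement. I don't foresee a real obstacle here; the only place where one must be careful is the reduction step $x^n = \bm{a}$, which silently uses commutativity of $\mathcal{R}$ to move $\bm{a}$ past $\Theta(\bm{c}_{n-1})$, and the verification that the formulas for $x\bar{\bm{c}}(x)$ and $T_{\Theta, \Delta_{\Theta,\bm{s}}, M_{\bm{a}}}(\bar{\bm{c}})$ agree index-by-index — essentially the same bookkeeping that underpins Proposition~\ref{equivalent criterion for theta delta codes over F_q} in the scalar case.
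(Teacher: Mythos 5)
Your proposal is correct and follows essentially the same route as the paper's proof: expand $x\bar{\bm{c}}(x)$ via the skew relation, reduce modulo $x^n-\bm{a}$, and match the resulting coefficient vector with $T_{\Theta,\Delta_{\Theta,\bm{s}},M_{\bm{a}}}(\bar{\bm{c}})$ under the identification $\eta$, after which the equivalence is immediate from Definition \ref{definition of cyclic over R}. No gaps; the commutativity remark about moving $\bm{a}$ past $\Theta(\bm{c}_{n-1})$ is the same implicit step the paper takes.
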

\begin{proof}
Let $\Bar{\bm{c}}=(\bm{c}_0, \bm{c}_1, \dots, \bm{c}_{n-1})\in \mathcal{C}.$ Then
\begin{align*}
    T_{\Theta, \Delta_{\Theta, \bm{s}}, M_{\bm{a}}}(\Bar{\bm{c}})&=\Theta(\Bar{\bm{c}})M_{\bm{a}}+\Delta_{\Theta, \bm{s}}(\Bar{\bm{c}})\\
    &=(\bm{a}\Theta(\bm{c}_{n-1})+\Delta_{\Theta, \bm{s}}(\bm{c}_0), \Theta(\bm{c}_0)+\Delta_{\Theta, \bm{s}}(\bm{c}_1), \dots, \Theta(\bm{c}_{n-2})+\Delta_{\Theta, \bm{s}}(\bm{c}_{n-1})),
\end{align*}
and 
\begin{align*}
    x\Bar{\bm{c}}(x)&=x\bm{c}_0+(x\bm{c}_1)x+\cdots+ (x\bm{c}_{n-1})x^{n-1}\\
    &=(\Theta(\bm{c}_0)x+\Delta_{\Theta, \bm{s}}(\bm{c}_{0}))+(\Theta(\bm{c}_1)x^2+\Delta_{\Theta, \bm{s}}(\bm{c}_{1})x)+\cdots+ (\Theta(\bm{c}_{n-1})x^n+\Delta_{\Theta, \bm{s}}(\bm{c}_{n-1})x^{n-1})\\
    &=(\Theta(\bm{c}_0)x+\Delta_{\Theta, \bm{s}}(\bm{c}_{0}))+(\Theta(\bm{c}_1)x^2+\Delta_{\Theta, \bm{s}}(\bm{c}_{1})x)+\cdots+ (\Theta(\bm{c}_{n-1})\bm{a}+\Delta_{\Theta, \bm{s}}(\bm{c}_{n-1})x^{n-1})\\
    &=(\bm{a}\Theta(\bm{c}_{n-1})+\Delta_{\Theta, \bm{s}}(\bm{c}_0))+(\Theta(\bm{c}_0)+\Delta_{\Theta, \bm{s}}(\bm{c}_1))x+\cdots+ (\Theta(\bm{c}_{n-2})+\Delta_{\Theta, \bm{s}}(\bm{c}_{n-1}))x^{n-1}\\
    &=\eta(T_{\Theta, \Delta_{\Theta, \bm{s}}, M_{\bm{a}}}(\Bar{\bm{c}})).
\end{align*}
Thus, with the identification given by \eqref{identification}, $T_{\Theta, \Delta_{\Theta, \bm{s}}, M_{\bm{a}}}(\Bar{\bm{c}})$ represents the codeword $ x\Bar{\bm{c}}(x).$ The proof now immediately follows from Definition \ref{definition of cyclic over R}.
\end{proof}
The following theorem is an immediate consequence of Proposition \ref{equivalent criterion for theta delta codes over R}.
\begin{theorem}\label{left-submodule R}
    Let $\mathcal{C}$ be a subset of $\mathcal{R}^n.$ Then $\mathcal{C}$ is a $(\Theta, \Delta_{\Theta,\bm{s}} \bm{a})$-cyclic code of length $n$ over $\mathcal{R}$ if and only if $\mathcal{C}$ is a left $\mathcal{R}[x; \Theta, \Delta_{\Theta, \bm{s}}]$-submodule of $\mathcal{R}[x; \Theta, \Delta_{\Theta, \bm{s}}]/\langle x^n-\bm{a}\rangle.$
\end{theorem}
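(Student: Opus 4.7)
The plan is to reduce this directly to Proposition \ref{equivalent criterion for theta delta codes over R} via the identification $\eta$ of \eqref{identification}. The only substantive claim is that being a left $\mathcal{R}[x; \Theta, \Delta_{\Theta, \bm{s}}]$-submodule of $\mathcal{R}_n$ is equivalent to being simultaneously an $\mathcal{R}$-submodule and closed under left multiplication by the single element $x$. Once this decomposition of the module structure is established, Proposition \ref{equivalent criterion for theta delta codes over R} finishes the argument.

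First, I would verify this equivalence. The forward implication is immediate, since $\mathcal{R}$ embeds into $\mathcal{R}[x; \Theta, \Delta_{\Theta, \bm{s}}]$ as the constant polynomials, and $x$ itself sits in $\mathcal{R}[x; \Theta, \Delta_{\Theta, \bm{s}}]$. For the converse, any element of $\mathcal{R}[x; \Theta, \Delta_{\Theta, \bm{s}}]$ has the form $\sum_{i=0}^{m} \bm{r}_i x^i$ with $\bm{r}_i \in \mathcal{R}$; if $\bar{\bm{c}}(x) \in \mathcal{C}$, then iterated closure under multiplication by $x$ gives $x^i \bar{\bm{c}}(x) \in \mathcal{C}$ for all $i$, closure under $\mathcal{R}$-scalars gives $\bm{r}_i x^i \bar{\bm{c}}(x) \in \mathcal{C}$, and closure under addition yields $\bigl(\sum \bm{r}_i x^i\bigr) \bar{\bm{c}}(x) \in \mathcal{C}$.

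Second, I would combine this with Proposition \ref{equivalent criterion for theta delta codes over R}: under $\eta$, the defining pseudo-linear transformation condition $T_{\Theta, \Delta_{\Theta, \bm{s}}, M_{\bm{a}}}(\mathcal{C}) \subseteq \mathcal{C}$ is, by the computation in the proof of that proposition, exactly closure under left multiplication by $x$ in $\mathcal{R}_n$. Hence for an $\mathcal{R}$-submodule $\mathcal{C}$, being a $(\Theta, \Delta_{\Theta, \bm{s}}, \bm{a})$-cyclic code and being closed under $x$-multiplication coincide. Together with the equivalence of the previous paragraph, this gives the stated biconditional.

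There is no real obstacle here, only a bookkeeping point to keep in mind: the theorem is phrased for an arbitrary subset $\mathcal{C} \subseteq \mathcal{R}^n$, so in the forward direction one uses that a $(\Theta, \Delta_{\Theta, \bm{s}}, \bm{a})$-cyclic code is by Definition \ref{definition of cyclic over R} already an $\mathcal{R}$-submodule, while in the reverse direction one uses that a left $\mathcal{R}[x; \Theta, \Delta_{\Theta, \bm{s}}]$-submodule is automatically an $\mathcal{R}$-submodule via the constant-polynomial embedding. With this observation the proof is a one-line appeal to Proposition \ref{equivalent criterion for theta delta codes over R}.
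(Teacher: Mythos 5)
Your proof is correct and follows essentially the same route as the paper: use Proposition \ref{equivalent criterion for theta delta codes over R} (iterated) to get closure under multiplication by powers of $x$, then extend to arbitrary polynomials via the $\mathcal{R}$-module structure and additivity, with the converse being immediate. Your version merely spells out the bookkeeping (the constant-polynomial embedding and the fact that both hypotheses already supply the $\mathcal{R}$-submodule structure) that the paper leaves implicit.
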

\begin{proof}
    Let $\mathcal{C}$ be a $(\Theta, \Delta_{\Theta,\bm{s}}, \bm{a})$-cyclic code of length $n$ over $\mathcal{R}.$ Since $\eta$ is an isomorphism, for $\Bar{\bm{c}}(x),\,\Bar{\bm{d}}(x)\in \mathcal{C}, \Bar{\bm{c}}(x)+\Bar{\bm{d}}(x)\in \mathcal{C}.$ By the repeated application of Proposition \ref{equivalent criterion for theta delta codes over R}, for all $i\in \mathbb{N},$ $x^i\Bar{\bm{c}}(x)\in \mathcal{C}$ whenever $\Bar{\bm{c}}(x)\in \mathcal{C}.$ Thus, for any $\Bar{\bm{r}}(x)\in \mathcal{R}[x; \Theta, \Delta_{\Theta, \bm{s}}]$ and $\Bar{\bm{c}}(x)\in \mathcal{C},$  $\Bar{\bm{r}}(x)\Bar{\bm{c}}(x)\in \mathcal{C}.$ Hence, $\mathcal{C}$ is a left $\mathcal{R}[x; \Theta, \Delta_{\Theta, \bm{s}}]$-submodule of $\mathcal{R}[x; \Theta, \Delta_{\Theta, \bm{s}}]/\langle x^n-\bm{a}\rangle.$
    \par
    The converse is immediate.
\end{proof}
\begin{remark}
    Proofs of Proposition \ref{equivalent criterion for theta delta codes over R} and Theorem \ref{left-submodule R} can be found in \cite{bajalan2023sigma}. However, for readability and completion of the article, we have included the proofs.
\end{remark}
Let $\mathcal{B}:=\{\bm{e}_1< \bm{e}_2<\dots< \bm{e}_l\}$ be the standard ordered $\mathbb{F}_q$-basis for $\mathcal{R}.$ For $1\le i\le l,$ denote the canonical $i$-th projection of $\mathcal{R}$ to $\mathbb{F}_q$ by $\pi_i,$ that is $\pi_i: \mathcal{R}\to \mathbb{F}_q$ is defined as $\pi_i(\bm{r}=(r_1, r_2, \dots, r_l))=r_i.$ Extend each $\pi_i$ to $\mathcal{R}^n$ as follows:
\begin{equation}
    \begin{split}
        \Tilde{\pi}_i&: \mathcal{R}^n\to \mathbb{F}_q^{n}\\
        \Tilde{\pi}_i(\bm{r}_0,\bm{r}_1, \dots, \bm{r}_{n-1})&:=(\pi_i(\bm{r}_0), \pi_i(\bm{r}_1),\dots, \pi_i(\bm{r}_{n-1})).
    \end{split}
\end{equation}
Then, every linear code over $\mathcal{R}$ can be uniquely decomposed into a direct sum of linear codes over $\mathbb{F}_q$ as follows:
\par
Let $\mathcal{C}$ be a linear code of length $n$ over $\mathcal{R}.$ For $1\le i\le l,$ define
\begin{equation}
    \mathcal{C}_i:=\Tilde{\pi}_i(\mathcal{C})=\{(\pi_i(\bm{c}_0), \pi_i(\bm{c}_1), \dots , \pi_i(\bm{c}_{n-1}))\in \mathbb{F}_q^n: (\bm{c}_0, \bm{c}_1, \dots, \bm{c}_{n-1})\in \mathcal{C}\}.
\end{equation}
Then for each $i,$ $\mathcal{C}_i$ is a linear codes of length $n$ over $\mathbb{F}_q$ and $\mathcal{C}$ can be uniquely decomposed as
\begin{equation}
    \mathcal{C}=\bm{e}_1\mathcal{C}_1\oplus\bm{e}_2\mathcal{C}_2\oplus\cdots\oplus\bm{e}_l\mathcal{C}_l.
\end{equation}

\begin{theorem}\label{main theorem}
    Let $\mathcal{C}=\bm{e}_1\mathcal{C}_1\oplus\bm{e}_2\mathcal{C}_2\oplus\cdots\oplus\bm{e}_l\mathcal{C}_l$ be a linear code of length $n$ over $\mathcal{R},$ where $\mathcal{C}_i$ is a linear code of length $n$ over $\mathbb{F}_q,$ for $1\le i\le l.$ Then $\mathcal{C}$ is a $(\Theta, \Delta_{\Theta, \bm{s}}, \bm{a})$-cyclic code over $\mathcal{R}$ if and only if for each $1\le i\le l,\, \mathcal{C}_i$ is a $(\theta_i, \delta_{\theta_i, s_i}, a_i)$-cyclic code over $\mathbb{F}_q,$ where $\Theta=\theta_1\times \theta_2\times\cdots\times \theta_l \in \mathcal{A},$ $\bm{s}=(s_1, s_2, \dots, s_l)\in \mathcal{R}$ and $\bm{a}=(a_1, a_2, \dots, a_l)\in\mathcal{R}^{\times}.$  
\end{theorem}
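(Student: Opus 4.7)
The plan is to reduce the $\mathcal{R}$-level invariance $T_{\Theta, \Delta_{\Theta, \bm{s}}, M_{\bm{a}}}(\mathcal{C}) \subseteq \mathcal{C}$ to the $l$ componentwise invariance conditions $T_{\theta_i, \delta_{\theta_i, s_i}, M_{a_i}}(\mathcal{C}_i) \subseteq \mathcal{C}_i$ via an intertwining identity between the pseudo-linear transformation over $\mathcal{R}$ and its $l$ scalar components over $\mathbb{F}_q$. The bridges are the projections $\tilde{\pi}_i \colon \mathcal{R}^n \to \mathbb{F}_q^n$ and the injections $\bm{c} \mapsto \bm{e}_i\bm{c}$ underlying the decomposition $\mathcal{C} = \bigoplus_i \bm{e}_i \mathcal{C}_i$.

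First, I would record the intertwining formula
\begin{equation}
    \tilde{\pi}_i \circ T_{\Theta, \Delta_{\Theta, \bm{s}}, M_{\bm{a}}} = T_{\theta_i, \delta_{\theta_i, s_i}, M_{a_i}} \circ \tilde{\pi}_i, \qquad 1 \le i \le l.
\end{equation}
This is a coordinate-wise calculation using the explicit form of $T_{\Theta, \Delta_{\Theta, \bm{s}}, M_{\bm{a}}}(\bar{\bm{c}})$ displayed in the proof of Proposition \ref{equivalent criterion for theta delta codes over R}, together with the decomposition $\Theta = \theta_1 \times \cdots \times \theta_l$ and the componentwise formula for $\Delta_{\Theta, \bm{s}}$ from Lemma \ref{decomposition of Delta}; the entries $\bm{a}$, $\bm{1}$, $\bm{0}$ of $M_{\bm{a}}$ project to $a_i$, $1$, $0$ under $\pi_i$, so $M_{\bm{a}}$ reduces to $M_{a_i}$.

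For the forward direction, assuming $\mathcal{C}$ is $(\Theta, \Delta_{\Theta, \bm{s}}, \bm{a})$-cyclic, for any $\bm{c}^{(i)} \in \mathcal{C}_i$ I pick a preimage $\bar{\bm{c}} \in \mathcal{C}$ with $\tilde{\pi}_i(\bar{\bm{c}}) = \bm{c}^{(i)}$; then $T_{\Theta, \Delta_{\Theta, \bm{s}}, M_{\bm{a}}}(\bar{\bm{c}}) \in \mathcal{C}$, and applying $\tilde{\pi}_i$ and the intertwining formula gives $T_{\theta_i, \delta_{\theta_i, s_i}, M_{a_i}}(\bm{c}^{(i)}) \in \mathcal{C}_i$, so each $\mathcal{C}_i$ is $(\theta_i, \delta_{\theta_i, s_i}, a_i)$-cyclic. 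For the reverse direction, I decompose an arbitrary $\bar{\bm{c}} \in \mathcal{C}$ as $\bar{\bm{c}} = \sum_{i=1}^l \bm{e}_i \bm{c}^{(i)}$ with $\bm{c}^{(i)} \in \mathcal{C}_i$, and verify
\begin{equation}
    T_{\Theta, \Delta_{\Theta, \bm{s}}, M_{\bm{a}}}(\bm{e}_i \bm{c}^{(i)}) = \bm{e}_i \cdot T_{\theta_i, \delta_{\theta_i, s_i}, M_{a_i}}(\bm{c}^{(i)}),
\end{equation}
which rests on $\Theta(\bm{e}_i) = \bm{e}_i$ (its entries lie in $\{0,1\}\subset \mathbb{F}_p$) and on the derivation identity $\Delta_{\Theta, \bm{s}}(\bm{e}_i \bm{r}) = \bm{e}_i \Delta_{\Theta, \bm{s}}(\bm{r})$, which follows from $\Theta(\bm{e}_i) = \bm{e}_i$ and $\Delta_{\Theta, \bm{s}}(\bm{e}_i) = \bm{0}$. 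Summing over $i$ and using the hypothesis on each $\mathcal{C}_i$ then places $T_{\Theta, \Delta_{\Theta, \bm{s}}, M_{\bm{a}}}(\bar{\bm{c}})$ in $\bigoplus_i \bm{e}_i \mathcal{C}_i = \mathcal{C}$.

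The main obstacle is purely notational bookkeeping rather than conceptual: one must track that $M_{\bm{a}}$ has entries in $\mathcal{R}$ whereas $M_{a_i}$ has entries in $\mathbb{F}_q$, and check that multiplication by the idempotent $\bm{e}_i$ and the projection $\tilde{\pi}_i$ interact with right-multiplication by $M_{\bm{a}}$ as expected. Once this compatibility is spelled out via the componentwise product rule in $\mathcal{R}$, both directions drop out essentially mechanically from the intertwining formula.
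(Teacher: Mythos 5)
Your proposal is correct and follows essentially the same route as the paper: the intertwining identity $\tilde{\pi}_i \circ T_{\Theta, \Delta_{\Theta, \bm{s}}, M_{\bm{a}}} = T_{\theta_i, \delta_{\theta_i, s_i}, M_{a_i}} \circ \tilde{\pi}_i$ is exactly the componentwise computation of $T_{\Theta, \Delta_{\Theta, \bm{s}}, M_{\bm{a}}}(\Bar{\bm{c}})$ that the paper carries out, and both arguments then conclude via the decomposition $\mathcal{C}=\bigoplus_i \bm{e}_i\mathcal{C}_i$. Your explicit splitting into two directions (projections for one, idempotents $\bm{e}_i$ with $\Theta(\bm{e}_i)=\bm{e}_i$ and $\Delta_{\Theta,\bm{s}}(\bm{e}_i)=\bm{0}$ for the other) is just a more structured write-up of the same argument.
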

\begin{proof}
    Suppose  $(\pi_i(\bm{c}_0), \pi_i(\bm{c}_1), \dots , \pi_i(\bm{c}_{n-1}))\in \mathcal{C}_i, $ where $\Bar{\bm{c}}=(\bm{c}_0, \bm{c}_1, \dots, \bm{c}_{n-1})\in \mathcal{C}.$ Observe
    \begin{align*}
            T_{\Theta, \Delta_{\Theta, \bm{s}}, M_{\bm{a}}}(\Bar{\bm{c}})&= \Theta(\Bar{\bm{c}})M_{\bm{a}}+\Delta_{\Theta, \bm{s}}(\Bar{\bm{c}})\\
            &=(\bm{a}\Theta(\bm{c}_{n-1})+\Delta_{\Theta, \bm{s}}(\bm{c}_0), \Theta(\bm{c}_0)+\Delta_{\Theta, \bm{s}}(\bm{c}_1), \dots, \Theta(\bm{c}_{n-2})+\Delta_{\Theta, \bm{s}}(\bm{c}_{n-1})).
    \end{align*}
    For $0\le j\le n-1,$ let $\bm{c}_j=(c_j^{(1)}, c_j^{(2)}, \dots, c_j^{(l)}).$ Then
    \begin{align*}
        &T_{\Theta, \Delta_{\Theta, \bm{s}}, M_{\bm{a}}}(\Bar{\bm{c}})\\
        &\begin{multlined}[t]
            =\Theta(\Bar{\bm{c}})M_{\bm{a}}+\Delta_{\Theta, \bm{s}}(\Bar{\bm{c}})
        \end{multlined}\\
        &\begin{multlined}[t]
            =\biggl((a_1,\dots, a_l)\left(\theta_1(c_{n-1}^{(1)}), \theta_2(c_{n-1}^{(2)}), \dots, \theta_l(c_{n-1}^{(l)})\right)+
            \left(\delta_{\theta_1, s_1}(c_0^{(1)}), \delta_{\theta_2, s_2}(c_0^{(2)}), \dots, \delta_{\theta_l, s_l}(c_0^{(l)})\right),\\
            \left(\theta_1(c_0^{(1)}), \theta_2(c_0^{(2)}), \dots, \theta_l(c_0^{(l)})\right)+
            \left(\delta_{\theta_1, s_1}(c_1^{(1)}), \delta_{\theta_2, s_2}(c_1^{(2)}), \dots, \delta_{\theta_l, s_l}(c_1^{(l)})\right), \dots,\\
            \left(\theta_1(c_{n-2}^{(1)}), \theta_2(c_{n-2}^{(2)}), \dots, \theta_l(c_{n-2}^{(l)})\right)+
            \left(\delta_{\theta_1, s_1}(c_{n-1}^{(1)}), \delta_{\theta_2, s_2}(c_{n-1}^{(2)}), \dots, \delta_{\theta_l, s_l}(c_{n-1}^{(l)})\right)\biggr)
        \end{multlined}\\
        &\begin{multlined}[t]
            =\biggl(\left(a_1\theta_1(c_{n-1}^{(1)})+\delta_{\theta_1, s_1}(c_0^{(1)}), a_2\theta_2(c_{n-1}^{(2)})+\delta_{\theta_2, s_2}(c_0^{(2)}), \dots, a_l\theta_l(c_{n-1}^{(l)})+\delta_{\theta_l, s_l}(c_0^{(l)})\right),\\
        \left(\theta_1(c_0^{(1)})+\delta_{\theta_1, s_1}(c_1^{(1)}), \theta_2(c_0^{(2)})+\delta_{\theta_2, s_2}(c_1^{(2)}),\dots, \theta_l(c_0^{(l)})+\delta_{\theta_l, s_l}(c_1^{(l)})\right), \dots,\\
        \left(\theta_1(c_{n-2}^{(1)})+\delta_{\theta_1, s_1}(c_{n-1}^{(1)}), \theta_2(c_{n-2}^{(2)})+\delta_{\theta_2, s_2}(c_{n-1}^{(2)}),\dots,\theta_l(c_{n-2}^{(l)})+\delta_{\theta_l, s_l}(c_{n-1}^{(l)})\right)\biggr).\end{multlined}
    \end{align*}
    Note that for each $1\le i\le l,$
    \begin{align*}
        &\left(a_i\theta_i(c_{n-1}^{(i)})+\delta_{\theta_i, s_i}(c_0^{(i)}), \theta_i(c_0^{(i)})+\delta_{\theta_i, s_i}(c_1^{(i)}), \dots,\theta_i(c_{n-2}^{(i)})+\delta_{\theta_i, s_i}(c_{n-1}^{(i)})\right)\\
        &=T_{\theta_i, \delta_{\theta_i, s_i}, M_{a_i}}(c_0^{(i)},c_1^{(i)}, \dots, c_{n-1}^{(i)} ),\,\,\text{where}
    \end{align*} 
    \begin{equation*}
        M_{a_i}=\begin{pmatrix}
            0 & 1 & 0 & \dots & 0\\
            0 & 0 & 1 & \dots & 0\\
            \vdots & \vdots & \vdots &  & \vdots\\
            0 & 0 & 0 & \dots & 1\\
            a_i & 0 & 0 & \dots & 0\\
        \end{pmatrix}.
    \end{equation*}
  Consequently, $T_{\Theta, \Delta_{\Theta, \bm{s}}, M_{\bm{a}}}(\Bar{\bm{c}})\in \mathcal{C}$ if and only if $T_{\theta_i, \delta_{\theta_i, s_i}, M_{a_i}}(c_0^{(i)},c_1^{(i)}, \dots, c_{n-1}^{(i)} )\in\mathcal{C}_i$ for each $1\leq i\leq l.$
\end{proof}
\begin{theorem}
    Let $\mathcal{C}=\bm{e}_1\mathcal{C}_1\oplus\bm{e}_2\mathcal{C}_2\oplus\cdots\oplus\bm{e}_l\mathcal{C}_l$ be a $(\Theta, \Delta_{\Theta, \bm{s}}, \bm{a})$-cyclic code of length $n$ over $\mathcal{R},$ where $\Theta=\theta_1\times \theta_2\times\cdots\times \theta_l \in \mathcal{A},$ $\bm{s}=(s_1, s_2, \dots, s_l)\in\mathcal{R}$ and $\bm{a}=(a_1, a_2, \dots, a_l)\in\mathcal{R}^{\times}.$ Then there exist $\bm{g}_1(x), \bm{g}_2(x), \dots, \bm{g}_l(x)\in \mathbb{F}_q[x]$ such that $\mathcal{C}=\langle \Bar{\bm{g}}(x) \rangle,$ where $\Bar{\bm{g}}(x)=(\bm{g}_1(x), \bm{g}_2(x), \dots, \bm{g}_l(x)).$ Additionally, $\Bar{\bm{g}}(x)$ is a right divisor of $x^n-\bm{a}$ in $\mathcal{R}[x; \Theta, \Delta_{\Theta, \bm{s}}]$ and $ |\mathcal{C}|=q^{nl-\underset{i=1}{\overset{l}{\sum}}\deg \bm{g}_i(x)}.$
\end{theorem}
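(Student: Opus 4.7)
The strategy is to reduce everything to the coordinate components $\mathcal{C}_i$ via the decomposition already established in Theorem~\ref{main theorem}, then lift the generator polynomials of each $\mathcal{C}_i$ back to $\mathcal{R}[x;\Theta,\Delta_{\Theta,\bm{s}}]$ by assembling coefficients tuple-wise. The key enabler is a ring isomorphism
\[
\Phi:\mathcal{R}[x;\Theta,\Delta_{\Theta,\bm{s}}]\;\xrightarrow{\sim}\;\prod_{i=1}^{l}\mathbb{F}_q[x;\theta_i,\delta_{\theta_i,s_i}],
\]
obtained by projecting coefficients. Indeed, by Lemma~\ref{decomposition of Delta} and the product form $\Theta=\theta_1\times\cdots\times\theta_l$, the twisted multiplication $x\bm{r}=\Theta(\bm{r})x+\Delta_{\Theta,\bm{s}}(\bm{r})$ acts coordinate-wise on $\mathcal{R}$, so $\Phi$ is compatible with multiplication and sends $x^n-\bm{a}$ to $(x^n-a_1,\ldots,x^n-a_l)$. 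Passing to quotients, $\Phi$ induces a left-module isomorphism $\mathcal{R}_n\cong\prod_{i=1}^{l}\mathbb{F}_q[x;\theta_i,\delta_{\theta_i,s_i}]/\langle x^n-a_i\rangle$.

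First, by Theorem~\ref{main theorem}, each $\mathcal{C}_i$ is a $(\theta_i,\delta_{\theta_i,s_i},a_i)$-cyclic code, so Theorem~\ref{generator polynomial_F_q} supplies a unique monic $\bm{g}_i(x)\in\mathbb{F}_q[x;\theta_i,\delta_{\theta_i,s_i}]$ generating $\mathcal{C}_i$ and right-dividing $x^n-a_i$. Second, define $\bar{\bm{g}}(x)\in\mathcal{R}[x;\Theta,\Delta_{\Theta,\bm{s}}]$ as the polynomial whose $j$-th coefficient is the tuple of $j$-th coefficients of the $\bm{g}_i$; by construction $\Phi(\bar{\bm{g}}(x))=(\bm{g}_1(x),\ldots,\bm{g}_l(x))$. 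Under $\Phi$, the left submodule $\langle\bar{\bm{g}}(x)\rangle\subseteq\mathcal{R}_n$ corresponds to the tuple of submodules $\langle\bm{g}_i(x)\rangle$, which is precisely $\bm{e}_1\mathcal{C}_1\oplus\cdots\oplus\bm{e}_l\mathcal{C}_l=\mathcal{C}$. Third, if $x^n-a_i=\bm{h}_i(x)\bm{g}_i(x)$ in $\mathbb{F}_q[x;\theta_i,\delta_{\theta_i,s_i}]$, set $\bar{\bm{h}}(x):=(\bm{h}_1(x),\ldots,\bm{h}_l(x))$; applying $\Phi^{-1}$ to the identity $(x^n-a_i)_i=(\bm{h}_i(x))_i(\bm{g}_i(x))_i$ yields $x^n-\bm{a}=\bar{\bm{h}}(x)\bar{\bm{g}}(x)$, proving the divisibility assertion. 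Finally, the cardinality follows from $|\mathcal{C}|=\prod_{i=1}^{l}|\mathcal{C}_i|$ together with the standard count $|\mathcal{C}_i|=q^{\,n-\deg \bm{g}_i(x)}$ (which is transparent from the generator matrix displayed after Theorem~\ref{generator polynomial_F_q}), yielding $|\mathcal{C}|=q^{\,nl-\sum_{i=1}^{l}\deg\bm{g}_i(x)}$.

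\textbf{Main obstacle.} The only non-routine point is establishing that $\Phi$ is a ring homomorphism for the \emph{twisted} multiplications on both sides, i.e.\ that the skew relation $x\bm{r}=\Theta(\bm{r})x+\Delta_{\Theta,\bm{s}}(\bm{r})$ in $\mathcal{R}[x;\Theta,\Delta_{\Theta,\bm{s}}]$ truly decouples into the component relations $xr_i=\theta_i(r_i)x+\delta_{\theta_i,s_i}(r_i)$. This is precisely where Lemma~\ref{decomposition of Delta} does the work: once one checks that $\Theta$ and $\Delta_{\Theta,\bm{s}}$ act coordinate-wise, the multiplication table of $\mathcal{R}[x;\Theta,\Delta_{\Theta,\bm{s}}]$ factors through the product, and all three claims (generation, right-divisibility, size) are direct transcriptions of the corresponding results over $\mathbb{F}_q$. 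No further creativity is needed beyond recording this factorization carefully.
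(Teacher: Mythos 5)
Your proof is correct and follows essentially the same route as the paper: reduce to the components $\mathcal{C}_i$ via Theorem \ref{main theorem}, take the generator $\bm{g}_i(x)$ of each $(\theta_i,\delta_{\theta_i,s_i},a_i)$-cyclic code, and assemble coordinate-wise to obtain $\Bar{\bm{g}}(x)$, the factorization $x^n-\bm{a}=\Bar{\bm{h}}(x)\Bar{\bm{g}}(x)$, and the cardinality count. The only difference is cosmetic: you make explicit the ring isomorphism $\mathcal{R}[x;\Theta,\Delta_{\Theta,\bm{s}}]\cong\prod_i\mathbb{F}_q[x;\theta_i,\delta_{\theta_i,s_i}]$ that the paper uses implicitly through the idempotents $\bm{e}_i$ (and your symbol $\Phi$ clashes with the paper's Gray map, so rename it).
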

\begin{proof}
    Let $\mathcal{C}=\bm{e}_1\mathcal{C}_1\oplus\bm{e}_2\mathcal{C}_2\oplus\cdots\oplus\bm{e}_l\mathcal{C}_l$ be a $(\Theta, \Delta_{\Theta, \bm{s}}, \bm{a})$-cyclic code of length $n$ over $\mathcal{R}.$ Then by Theorem \ref{main theorem}, each $\mathcal{C}_i$ is a $(\theta_i, \delta_{\theta_i, s_i}, a_i)$-cyclic code of length $n$ over $\mathbb{F}_q.$ Hence, for each $1\le i\le l,$ there exists $\bm{g}_i(x)$ such that $\mathcal{C}_i=\langle \bm{g}_i(x) \rangle\subseteq \mathbb{F}_q[x]/\langle x^n-a_i \rangle.$ Since $\bm{g}_i(x)$ is a right divisor of $x^n-a_i,$ for each $1\leq i \leq l,$ write $x^n-a_i=\bm{q}_i(x)\bm{g}_i(x),$ where $\bm{q}_i[x]\in \mathbb{F}_q[x;\theta_i, \delta_{\theta_i}].$ Let $\Bar{\bm{g}}(x)=\sum_{i=1}^l \bm{e}_i\bm{g}_i(x)$ and $\Bar{\bm{q}}(x)=\sum_{i=1}^l \bm{e}_i\bm{q}_i(x).$ Then $x^n-\bm{a}=\Bar{\bm{q}}(x)\Bar{\bm{g}}(x),$ showing that $\Bar{\bm{g}}(x)$ is a right divisor of $x^n-\bm{a}.$ Also, $|\mathcal{C}|=|\mathcal{C}_1|\times|\mathcal{C}_2|\times\cdots\times|\mathcal{C}_l|=q^{n-\deg \bm{g}_1(x)}\times q^{n-\deg \bm{g}_2(x)}\times \cdots \times q^{n- \deg \bm{g}_l(x) }=q^{nl-\underset{i=1}{\overset{l}{\sum}}\deg \bm{g}_i(x)}.$ Furthermore, if $\Bar{\bm{c}}(x)\in \mathcal{C}=\bm{e}_1\mathcal{C}_1\oplus\bm{e}_2\mathcal{C}_2\oplus\cdots\oplus\bm{e}_l\mathcal{C}_l,$ then $\Bar{\bm{c}}(x)=(\bm{c}_1(x),\bm{c}_2(x),\dots,\bm{c}_l(x)),$ where each $\bm{c}_i(x) \in \mathbb{F}_q[x]$ for each $1\leq i \leq l.$ Note that, $\Bar{\bm{c}}(x)=(\bm{c}_1(x),\bm{c}_2(x),\dots,\bm{c}_l(x))=(\bm{h}_1(x)\bm{g}_1(x),\bm{h}_2(x)\bm{g}_2(x),\dots,\bm{h}_l(x)\bm{g}_l(x)),$ where $\bm{h}_i(x)\in \mathbb{F}_q[x;\theta_i, \delta_{\theta_i}]$ for each $1\leq i \leq l.$ Hence, if $\Bar{\bm{h}}(x)=(\bm{h}_1(x), \bm{h}_2(x),\dots, \bm{h}_l(x)),$ then $\Bar{\bm{c}}(x)=\Bar{\bm{h}}(x)\Bar{\bm{g}}(x).$ Consequently, $\mathcal{C}\subseteq \langle (\bm{g}_1(x), \bm{g}_2(x), \dots, \bm{g}_l(x)) \rangle.$ Conversely, since for $1\leq i \leq l,$\,$\langle (\bm{0},\dots,\bm{0},\bm{g}_i(x),\bm{0},\dots,\bm{0}) \rangle\subseteq \mathcal{C},$  we must have $\langle (\bm{g}_1(x), \bm{g}_2(x), \dots, \bm{g}_l(x) )\rangle\subseteq \mathcal{C}.$ This completes the proof.
    \end{proof}
\begin{remark}
   All Propositions and Theorems stated in Subsection \ref{subsection4.2} remain valid even when $\mathcal{R}=\mathbb{F}_{q_1}\times\mathbb{F}_{q_2}\times\dots\times\mathbb{F}_{q_t},$ where $q_i$ is a prime power (not necessarily a power of the same prime) for $1\le i\le t$. However, the aim of this article is to construct codes with good parameters over a finite field by using a Gray map. If we assume that $q_1=q_2=\dots=q_t=q,$ then by applying a Gray map on codes over $\mathcal{R}$ (see Section \ref{Section 5}), we get $\mathbb{F}_q$-linear codes. For $\mathcal{R}=\mathbb{F}_{q_1}\times\mathbb{F}_{q_2}\times\dots\times\mathbb{F}_{q_t},$ one can construct additive codes by defining a Gray map.
\end{remark}
\section{Gray maps on $\mathcal{R}$}\label{Section 5}
In this article, we will consider the following definition of a Gray map. 
\begin{definition}\label{defn of Gray map}
    A \textit{Gray map} from a finite-dimensional $\mathbb{F}_q$-algebra $A$ of dimension $t$ to $\mathbb{F}_q^t$ is an $\mathbb{F}_q$-isomorphism between the $\mathbb{F}_q$-vector spaces $A$ and $\mathbb{F}_q^t$ .
\end{definition}
Let $\GL(l, \mathbb{F}_q)$ denote the set of all $l\times l$ invertible matrices with entries from $\mathbb{F}_q.$ Define the map $\Phi$ as follows:
\begin{equation}
\begin{split}
    \Phi &: \mathcal{R}^n\rightarrow \mathbb{F}_q^{nl}\\
    \Phi(\bm{a}_0,\bm{a}_1,\dots, \bm{a}_{n-1})&=(\bm{a}_0M_0,\bm{a}_1M_1,\dots, \bm{a}_{n-1}M_{n-1}),
\end{split}
\end{equation}
where $\bm{a}_i=(a_i^{(1)}, a_i^{(2)}, \dots, a_i^{(l)})$ and $M_i\in \GL(l, \mathbb{F}_q),$ for $0\le i\le n-1.$ 
\begin{theorem}
The map $\Phi$ is a Gray map.
\end{theorem}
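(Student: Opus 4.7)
The goal is to verify, against Definition \ref{defn of Gray map}, that $\Phi$ is an $\mathbb{F}_q$-vector space isomorphism from $\mathcal{R}^n$ to $\mathbb{F}_q^{nl}$. Note first that $\mathcal{R}^n = (\mathbb{F}_q^l)^n$ carries a natural $\mathbb{F}_q$-vector space structure, with $\alpha \in \mathbb{F}_q$ acting as $\alpha \cdot \bm{r} = (\alpha, \alpha, \dots, \alpha) \bm{r}$ coordinatewise, and $\dim_{\mathbb{F}_q} \mathcal{R}^n = nl = \dim_{\mathbb{F}_q} \mathbb{F}_q^{nl}$. Hence it will suffice to check that $\Phi$ is $\mathbb{F}_q$-linear and injective.

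The plan is to carry out two short verifications. \textbf{Linearity.} For each $0 \le i \le n-1$, the block map $\bm{a}_i \mapsto \bm{a}_i M_i$ on $\mathcal{R} = \mathbb{F}_q^l$ is $\mathbb{F}_q$-linear because right-multiplication by a matrix over $\mathbb{F}_q$ respects addition and the scalar action of $\mathbb{F}_q$ on $\mathbb{F}_q^l$ (the diagonal embedding commutes with such matrix multiplication). Since $\Phi$ is the direct sum of these $n$ block maps, it is $\mathbb{F}_q$-linear. \textbf{Bijectivity.} Because each $M_i \in \GL(l, \mathbb{F}_q)$, define
\begin{equation*}
    \Psi(\bm{b}_0, \bm{b}_1, \dots, \bm{b}_{n-1}) := (\bm{b}_0 M_0^{-1}, \bm{b}_1 M_1^{-1}, \dots, \bm{b}_{n-1} M_{n-1}^{-1}).
\end{equation*}
A direct block-by-block computation shows $\Phi \circ \Psi = \Id$ and $\Psi \circ \Phi = \Id$, so $\Phi$ is bijective with inverse $\Psi$. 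Alternatively, if $\Phi(\bm{a}_0, \dots, \bm{a}_{n-1}) = \bm{0}$, then each $\bm{a}_i M_i = \bm{0}$, and the invertibility of $M_i$ forces $\bm{a}_i = \bm{0}$, giving injectivity; since the domain and codomain have equal finite $\mathbb{F}_q$-dimension $nl$, surjectivity follows automatically.

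There is no genuine obstacle here; the only point that deserves a line of explanation is the compatibility between the $\mathbb{F}_q$-action on $\mathcal{R}^n$ (coming from the diagonal $\mathbb{F}_q$-algebra structure on $\mathcal{R}$) and the matrix action. Together these two steps yield that $\Phi$ is an $\mathbb{F}_q$-isomorphism, which is exactly the requirement of Definition \ref{defn of Gray map}.
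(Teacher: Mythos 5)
Your proof is correct and follows the same route as the paper's: check $\mathbb{F}_q$-linearity blockwise and deduce invertibility from $M_i \in \GL(l,\mathbb{F}_q)$, concluding that $\Phi$ is an $\mathbb{F}_q$-isomorphism as required by Definition \ref{defn of Gray map}. You merely spell out the details (the diagonal scalar action and the explicit inverse $\Psi$) that the paper leaves implicit.
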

\begin{proof}
   It is not difficult to prove that $\Phi$ is $\mathbb{F}_q$-linear. Since $M_0,M_1,\dots, M_{n-1}\in \GL(l, \mathbb{F}_q), \Phi$ is invertible. Hence, $\Phi$ is an $\mathbb{F}_q$-isomorphism from $\mathcal{R}^n$ to $\mathbb{F}_q^{nl}.$
\end{proof}
Define the \textit{Gray weight} of an element $(\bm{a}_0,\bm{a}_1,\dots,\bm{a}_{n-1})$ in $\mathcal{R}^n$ by
\begin{equation}
    w_G(\bm{a}_0,\bm{a}_1,\dots,\bm{a}_{n-1}):=\underset{i=0}{\overset{n-1}{\sum}}w_H(\bm{a}_iM_i),
\end{equation}
where $w_H(\bm{a}_iM_i)$ denotes the Hamming weight of $\bm{a}_iM_i.$ The \textit{Gray distance} between $\Bar{\bm{a}}, \Bar{\bm{b}} \in \mathcal{R}^n,$ denoted by $d_G(\Bar{\bm{a}},\Bar{\bm{b}}),$ is $w_G(\Bar{\bm{a}}-\Bar{\bm{b}})$ and the \textit{(minimum) Gray distance} of the linear code $\mathcal{C}$ over $\mathcal{R},$ denoted by $d_G(\mathcal{C}),$ is $\min \{d_G(\Bar{\bm{a}},\Bar{\bm{b}}) \,|\, \Bar{\bm{a}}, \Bar{\bm{b}} \in \mathcal{C} \textnormal{ and }\Bar{\bm{a}}\neq \Bar{\bm{b}}\}.$ 
\begin{theorem}\label{Graymap}
     The Gray map $\Phi$ is a distance preserving map from $\mathcal{R}^n$ (Gray distance) to $\mathbb{F}_q^{nl}$ (Hamming distance). Moreover, if $\mathcal{C}$ is an $\mathcal{R}$-linear code of length $n$ with Gray distance $d_G,$ then $\Phi(\mathcal{C})$ is an $\mathbb{F}_q$-linear code with parameters $[nl, k, d_G]_q,$ where $k=\dim_{\mathbb{F}_q}\mathcal{C}.$
\end{theorem}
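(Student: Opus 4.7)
The plan is to separate the statement into two claims and handle them in order: first distance preservation, then the structural consequences for the image code.

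First I would verify distance preservation directly from the definitions. Because $\Phi$ is $\mathbb{F}_q$-linear (already established in the preceding theorem), for any $\bar{\bm{a}}, \bar{\bm{b}} \in \mathcal{R}^n$ we have $\Phi(\bar{\bm{a}}) - \Phi(\bar{\bm{b}}) = \Phi(\bar{\bm{a}} - \bar{\bm{b}})$, hence
\[
d_H(\Phi(\bar{\bm{a}}), \Phi(\bar{\bm{b}})) = w_H(\Phi(\bar{\bm{a}} - \bar{\bm{b}})) = \sum_{i=0}^{n-1} w_H\bigl((\bm{a}_i - \bm{b}_i) M_i\bigr) = w_G(\bar{\bm{a}} - \bar{\bm{b}}) = d_G(\bar{\bm{a}}, \bar{\bm{b}}),
\]
where the second equality uses that Hamming weight on a concatenation of blocks is the sum of block-wise weights, and the third is just the definition of $w_G$. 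This is essentially a repackaging of definitions and requires no hard step.

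Next I would extract the parameters of $\Phi(\mathcal{C})$. Linearity of $\Phi(\mathcal{C})$ over $\mathbb{F}_q$ is inherited from $\mathbb{F}_q$-linearity of $\Phi$. The length $nl$ is immediate from the codomain $\mathbb{F}_q^{nl}$. For the dimension, since $\Phi$ is an $\mathbb{F}_q$-isomorphism (hence injective), its restriction to the $\mathbb{F}_q$-subspace $\mathcal{C}$ is an $\mathbb{F}_q$-linear isomorphism onto $\Phi(\mathcal{C})$, giving $\dim_{\mathbb{F}_q} \Phi(\mathcal{C}) = \dim_{\mathbb{F}_q} \mathcal{C} = k$. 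For the minimum distance, the distance-preservation property gives
\[
d_H(\Phi(\mathcal{C})) = \min\{ d_H(\Phi(\bar{\bm{a}}), \Phi(\bar{\bm{b}})) : \bar{\bm{a}} \ne \bar{\bm{b}} \in \mathcal{C}\} = \min\{d_G(\bar{\bm{a}}, \bar{\bm{b}}) : \bar{\bm{a}} \ne \bar{\bm{b}} \in \mathcal{C}\} = d_G.
\]

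There is really no significant obstacle here; the statement is essentially a consequence of packaging the definitions carefully, together with the fact that each $M_i \in \mathrm{GL}(l, \mathbb{F}_q)$ was already used in the earlier theorem to ensure $\Phi$ is an isomorphism. The only point that deserves a brief comment is that $\mathbb{F}_q$-linearity of $\mathcal{C}$ must be invoked to talk about its $\mathbb{F}_q$-dimension; since $\mathcal{C}$ is an $\mathcal{R}$-submodule of $\mathcal{R}^n$ and $\mathbb{F}_q$ embeds diagonally in $\mathcal{R}$, $\mathcal{C}$ is automatically an $\mathbb{F}_q$-subspace, so $\dim_{\mathbb{F}_q} \mathcal{C}$ is well defined. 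This keeps the proof short and transparent.
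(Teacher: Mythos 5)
Your proposal is correct and follows essentially the same route as the paper's proof: distance preservation is unwound directly from the definition of the Gray weight, and the parameters of $\Phi(\mathcal{C})$ follow from the fact that $\Phi$ restricts to an $\mathbb{F}_q$-isomorphism on $\mathcal{C}$ (the paper phrases the dimension count via $|\mathcal{C}|=q^k=|\Phi(\mathcal{C})|$, which is the same argument). Your extra remark that $\mathcal{C}$ is an $\mathbb{F}_q$-subspace via the diagonal embedding of $\mathbb{F}_q$ in $\mathcal{R}$ is a harmless clarification, not a different method.
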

\begin{proof}
    It is immediate from the definition of Gray distance that $\Phi$ is distance preserving. If $k=\dim_{\mathbb{F}_q}\mathcal{C},$ then the cardinality of $\mathcal{C}=q^k=|\Phi(\mathcal{C})|.$ Hence, $\dim_{\mathbb{F}_q}(\Phi(\mathcal{C}))=k.$  The length and minimum distance of $\Phi(\mathcal{C})$ are evident from the definition.
\end{proof}
The map $\Phi$ preserves orthogonality for certain types of matrices in $\GL(l, \mathbb{F}_q)$ that define $\Phi$ as described in the following theorem. 
\begin{theorem}\label{Phi preserves orthogonality}
    Let $\mathcal{C}$ be a Euclidean dual-containing linear code over $\mathcal{R}$. Then $\Phi(\mathcal{C})$ is Euclidean dual-containing if $M_iM_i^T=\lambda I_l,$ for each $0 \leq i \leq n-1,$ where $\lambda\in \mathbb{F}_q^{\times}.$
\end{theorem}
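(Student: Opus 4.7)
The plan is to show that every element of $\Phi(\mathcal{C})^\bot$ is the $\Phi$-image of some element of $\mathcal{C}^\bot$, which by the dual-containing hypothesis lies in $\mathcal{C}$, and hence its image lies in $\Phi(\mathcal{C})$.

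First, I would establish the central inner-product identity. For $\Bar{\bm{a}} = (\bm{a}_0, \ldots, \bm{a}_{n-1})$ and $\Bar{\bm{b}} = (\bm{b}_0, \ldots, \bm{b}_{n-1})$ in $\mathcal{R}^n$, treating each $\bm{a}_i = (a_i^{(1)}, \ldots, a_i^{(l)})$ and similarly $\bm{b}_i$ as a row vector in $\mathbb{F}_q^l$, one expands
$$\langle \Phi(\Bar{\bm{a}}), \Phi(\Bar{\bm{b}})\rangle_E = \sum_{i=0}^{n-1} (\bm{a}_iM_i)(\bm{b}_iM_i)^T = \sum_{i=0}^{n-1} \bm{a}_i (M_iM_i^T) \bm{b}_i^T = \lambda \sum_{i=0}^{n-1} \bm{a}_i \bm{b}_i^T,$$
using the hypothesis $M_iM_i^T = \lambda I_l$. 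Observe that $\bm{a}_i\bm{b}_i^T = \sum_{j=1}^l a_i^{(j)}b_i^{(j)}$ is precisely the $\mathbb{F}_q$-sum of the $l$ coordinates of the $\mathcal{R}$-valued componentwise product $\bm{a}_i\bm{b}_i \in \mathcal{R}$.

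Given $\bar{y} \in \Phi(\mathcal{C})^\bot$, I would use the bijectivity of $\Phi$ (Theorem \ref{Graymap}) to write $\bar{y} = \Phi(\Bar{\bm{b}})$ for a unique $\Bar{\bm{b}} \in \mathcal{R}^n$. The goal then becomes showing $\Bar{\bm{b}} \in \mathcal{C}^\bot$, because combined with $\mathcal{C}^\bot \subseteq \mathcal{C}$, this gives $\bar{y} \in \Phi(\mathcal{C})$. The critical observation is that $\mathcal{C}$ is an $\mathcal{R}$-submodule, so for each standard idempotent $\bm{e}_j \in \mathcal{R}$ and each $\Bar{\bm{a}} \in \mathcal{C}$, the scaled codeword $\bm{e}_j\Bar{\bm{a}}$ remains in $\mathcal{C}$. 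Applying the identity above to $\bm{e}_j\Bar{\bm{a}}$ and $\Bar{\bm{b}}$ gives
$$0 = \langle \Phi(\bm{e}_j\Bar{\bm{a}}), \Phi(\Bar{\bm{b}})\rangle_E = \lambda \sum_{i=0}^{n-1} a_i^{(j)} b_i^{(j)},$$
so $\lambda \ne 0$ forces the $j$-th coordinate of $\langle \Bar{\bm{a}}, \Bar{\bm{b}}\rangle_E \in \mathcal{R}$ to vanish. Varying $j$ from $1$ to $l$ yields $\langle \Bar{\bm{a}}, \Bar{\bm{b}}\rangle_E = \bm{0}$ for every $\Bar{\bm{a}} \in \mathcal{C}$, i.e., $\Bar{\bm{b}} \in \mathcal{C}^\bot$, completing the argument.

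The only delicate step is the use of the idempotents $\bm{e}_j$. Without exploiting the $\mathcal{R}$-module structure, the identity only shows that the $\mathbb{F}_q$-sum of the coordinates of $\langle \Bar{\bm{a}}, \Bar{\bm{b}}\rangle_E$ vanishes, which is strictly weaker than the coordinatewise vanishing needed to conclude $\Bar{\bm{b}} \in \mathcal{C}^\bot$. Pre-multiplying by $\bm{e}_j$ before invoking the identity is exactly what separates the coordinates and upgrades the conclusion to the required one.
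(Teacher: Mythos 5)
Your proposal is correct, and it takes a genuinely different route from the paper. The paper first proves the forward inclusion $\Phi(\mathcal{C}^{\bot})\subseteq\Phi(\mathcal{C})^{\bot}$ via the same computation $\langle\Phi(\Bar{\bm{c}}),\Phi(\Bar{\bm{d}})\rangle_E=\lambda\sum_{i}\sum_{j}c_i^{(j)}d_i^{(j)}$, and then upgrades it to the equality $\Phi(\mathcal{C}^{\bot})=\Phi(\mathcal{C})^{\bot}$ by a cardinality argument ($\Phi$ is a bijection and $|\mathcal{C}^{\bot}|=|\Phi(\mathcal{C})^{\bot}|$, which implicitly uses $|\mathcal{C}|\,|\mathcal{C}^{\bot}|=|\mathcal{R}^n|$ over the product ring), whence $\Phi(\mathcal{C})^{\bot}=\Phi(\mathcal{C}^{\bot})\subseteq\Phi(\mathcal{C})$. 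You instead prove the reverse inclusion $\Phi(\mathcal{C})^{\bot}\subseteq\Phi(\mathcal{C}^{\bot})$ directly: given $\Phi(\Bar{\bm{b}})\in\Phi(\mathcal{C})^{\bot}$, you pair it against the scaled codewords $\bm{e}_j\Bar{\bm{a}}\in\mathcal{C}$ (legitimate since a linear code over $\mathcal{R}$ is an $\mathcal{R}$-submodule), which isolates $\lambda\sum_i a_i^{(j)}b_i^{(j)}=0$ for each $j$ and forces $\langle\Bar{\bm{a}},\Bar{\bm{b}}\rangle_E=\bm{0}$ coordinatewise in $\mathcal{R}$; your closing remark correctly identifies why the idempotent trick is indispensable, since without it one only gets the vanishing of the sum over $j$. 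What each approach buys: the paper's counting step is short but leans on the duality cardinality identity over $\mathcal{R}$ (stated rather tersely as ``$\Phi$ is an isomorphism''), while your argument avoids any counting and works purely elementwise, at the price of invoking the module structure through the idempotents; both are complete proofs, and your computation with $\bm{e}_j$ in place of the paper's appeal to $\langle\Bar{\bm{c}},\Bar{\bm{d}}\rangle_E=\bm{0}$ is the only substantive difference in mechanics.
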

\begin{proof}
Let $\Bar{\bm{c}}=(\bm{c}_0,\bm{c}_1,\dots,\bm{c}_{n-1})\in \mathcal{C}$ and $\Bar{\bm{d}}=(\bm{d}_0,\bm{d}_1,\dots,\bm{d}_{n-1})\in \mathcal{C}^\bot.$ Further assume that, $\bm{c}_i=(c_{i}^{(1)},c_{i}^{(2)},\dots,c_{i}^{(l)})$ and $\bm{d}_i=(d_{i}^{(1)},d_{i}^{(2)},\dots,d_{i}^{(l)})$ for each $0 \leq i \leq {n-1}.$ Then, $\langle\Bar{\bm{c}}, \Bar{\bm{d}}\rangle_E=\underset{j=0}{\overset{n-1}{\sum}} \bm{c}_j\bm{d}_j.$ Now,
$$
    \langle\Phi(\Bar{\bm{c}}), \Phi( \Bar{\bm{d}})\rangle_E
    =\underset{i=0}{\overset{n-1}{\sum}}(\bm{c}_iM_i)(\bm{d}_iM_i)^T
    = \underset{i=0}{\overset{n-1}{\sum}}\bm{c}_iM_iM_i^T\bm{d}_i^T
    =\lambda\underset{i=0}{\overset{n-1}{\sum}}\underset{j=1}{\overset{l}{\sum}}c_i^{(j)}d_i^{(j)}.
$$
Since $\langle\Bar{\bm{c}}, \Bar{\bm{d}}\rangle_E=\Bar{\bm{0}},\,\,\underset{i=0}{\overset{n-1}{\sum}}c_i^{(j)}d_i^{(j)}=0,$ for each $1\le j\le l.$ 
Consequently, $\langle\Phi(\Bar{\bm{c}}),\Phi(\Bar{\bm{d}})\rangle_E=\bm{0}.$
\par
Since $\mathcal{C}^{\bot}\subseteq \mathcal{C},$ $\Phi(C^{\bot})\subseteq \Phi(\mathcal{C}).$ Thus, to show that $\Phi(C)$ is Euclidean dual-containing, it is enough to show that $\Phi(\mathcal{C}^{\bot})=\Phi(\mathcal{C})^{\bot}.$
Let $\Phi(\bm{d})\in \Phi(\mathcal{C}^{\bot}), $ where $\Bar{\bm{d}}=(\bm{d}_0,\bm{d}_1,\dots,\bm{d}_{n-1})\in \mathcal{C}^\bot$ and let $\Phi(\bm{c})\in \Phi(\mathcal{C}),$ where $\Bar{\bm{c}}=(\bm{c}_0,\bm{c}_1,\dots,\bm{c}_{n-1})\in \mathcal{C}.$  Since $\langle \Bar{\bm{c}}, \Bar{\bm{d}} \rangle_E = \Bar{\bm{0}}, \forall \Bar{\bm{c}}\in\mathcal{C},$ we have $\langle \Phi(\Bar{\bm{d}}), \Phi(\Bar{\bm{c}}) \rangle_E = \bm{0},\forall \Phi(\Bar{\bm{c}})\in\Phi(\mathcal{C}).$ Consequently, $\Phi(\Bar{\bm{d}} )\in \Phi(\mathcal{C})^\perp$ and therefore $\Phi(\mathcal{C}^{\bot})\subseteq\Phi(\mathcal{C})^{\bot}.$ Since $\Phi$ is an isomorphism, $|\Phi(\mathcal{C}^{\bot})|=|\Phi(\mathcal{C})^{\bot}|$ and therefore, $\Phi(\mathcal{C}^{\bot})=\Phi(\mathcal{C})^{\bot}.$
\end{proof}
\section{Quantum codes from \texorpdfstring{$(\Theta, \bm{0}, \bm{a})$}{}-cyclic codes over \texorpdfstring{$\mathcal{R}$}{}}\label{Section 6}
If $\mathcal{C}=\langle \bm{g}(x)\rangle$ is a $(\theta, \delta, \alpha)$-cyclic code over $\mathbb{F}_q,$  then it is not true in general that $\mathcal{C}^{\bot}=\langle \bm{h}(x) \rangle,$ where $x^n-\alpha=\bm{g}(x)\bm{h}(x)=\bm{h}(x)\bm{g}(x).$ In this section, $q$-ary quantum codes are constructed from $(\Theta, \mathbf{0}, \bm{a})$-cyclic codes over $\mathcal{R}.$
%Suppose $\mathcal{C}=\langle \bm{g}(x)\rangle$ is a $(\theta, \delta_\theta, \alpha)$-cyclic code of length $n$ over $\mathbb{F}_q,$ for some right divisor $\bm{g}(x)$ of $x^n-\alpha.$ If $\bm{g}(x)$ is also a left divisor of $x^n-\alpha,$ then we determine a necessary and sufficient condition for $\mathcal{C}^{\bot}\subseteq \mathcal{C}.$ 
\subsection{Dual-containing codes with respect to the Euclidean inner product}
\begin{lemma}\label{center}\cite{BOUCHER2009}
    If $\bm{h}(x)\bm{g}(x)\in Z(\mathbb{F}_q[x; \theta, 0]),$ then $\bm{g}(x)\bm{h}(x)=\bm{h}(x)\bm{g}(x)$ in $\mathbb{F}_q[x; \theta, 0].$
\end{lemma}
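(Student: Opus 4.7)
The plan is to exploit two elementary facts about the skew polynomial ring $\mathbb{F}_q[x;\theta,0]$: first, the defining property of the center, and second, that this ring is an integral domain (being a skew polynomial ring over a field with no derivation).

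Concretely, the first step is to use the centrality of $\bm{h}(x)\bm{g}(x)$ by pairing it against $\bm{h}(x)$ itself. Since $\bm{h}(x)\bm{g}(x) \in Z(\mathbb{F}_q[x;\theta,0])$, it commutes with every element, and in particular with $\bm{h}(x)$, giving
\begin{equation*}
\bm{h}(x)\cdot \bigl(\bm{h}(x)\bm{g}(x)\bigr) \;=\; \bigl(\bm{h}(x)\bm{g}(x)\bigr)\cdot \bm{h}(x),
\end{equation*}
that is, $\bm{h}(x)^2\bm{g}(x) = \bm{h}(x)\bm{g}(x)\bm{h}(x)$.

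The second step is to invoke that $\mathbb{F}_q[x;\theta,0]$ is a domain (the leading-coefficient formula $\mathrm{lc}(fg)=\mathrm{lc}(f)\,\theta^{\deg f}(\mathrm{lc}(g))$ shows there are no zero divisors). If $\bm{h}(x)=0$ then both $\bm{h}(x)\bm{g}(x)$ and $\bm{g}(x)\bm{h}(x)$ are $0$ and there is nothing to prove. Otherwise $\bm{h}(x)$ is a non-zero element of a domain, so we may left-cancel it in the identity above to conclude $\bm{h}(x)\bm{g}(x) = \bm{g}(x)\bm{h}(x)$, which is the desired equality.

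There is really no substantial obstacle here; the only thing to be careful about is to justify the left-cancellation, which is where the domain property of $\mathbb{F}_q[x;\theta,0]$ enters (and why the analogous statement can fail over non-domain coefficient rings). An alternative but essentially identical route would be to commute $\bm{h}(x)\bm{g}(x)$ with $\bm{g}(x)$ instead, obtaining $\bm{g}(x)\bm{h}(x)\bm{g}(x)=\bm{h}(x)\bm{g}(x)^2$ and right-cancelling $\bm{g}(x)$; either choice yields the result in a single line.
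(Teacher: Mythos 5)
Your argument is correct: commuting the central element $\bm{h}(x)\bm{g}(x)$ with $\bm{h}(x)$ and left-cancelling in the domain $\mathbb{F}_q[x;\theta,0]$ (valid since $\theta$ is an automorphism, so degrees add and there are no zero divisors) immediately gives $\bm{g}(x)\bm{h}(x)=\bm{h}(x)\bm{g}(x)$, with the zero case trivial. The paper itself gives no proof, citing Boucher--Ulmer, and your cancellation argument is precisely the standard proof of that cited lemma, so there is nothing to add.
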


\begin{theorem}
    Let $\mathcal{C}$ be a $(\theta, 0, \alpha)$-cyclic code of length $n$ over $\mathbb{F}_q,$ where $\alpha \in (\mathbb{F}_q^{\times})^{\theta}:=\{v\in \mathbb{F}_q*: \theta(v)=v\}$ and the order of $\theta$ divides $n.$ Then $\mathcal{C}^{\bot}$ is a $(\theta, 0, \alpha^{-1})$-cyclic code of length $n$ over $\mathbb{F}_q.$
\end{theorem}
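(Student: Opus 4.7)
The plan is to show directly that $\mathcal{C}^{\bot}$ is closed under the pseudo-linear transformation $T' := T_{\theta, 0, M_{\alpha^{-1}}}$; the $\mathbb{F}_q$-linearity of $\mathcal{C}^{\bot}$ is automatic, being the Euclidean dual of a linear code. The key idea is to exploit the fact that the $(\theta, 0, \alpha)$-shift operator $T := T_{\theta, 0, M_\alpha}$ is not merely $\mathcal{C}$-invariant but bijective on $\mathbb{F}_q^n$, so that $T^{-1}(\mathcal{C}) = \mathcal{C}$, and then to relate $\langle T'(\bm{d}), \bm{c}\rangle_E$ to $\langle \bm{d}, T^{-1}(\bm{c})\rangle_E$ by a single application of $\theta$.

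The first step is a direct coordinate calculation: $T(\bm{c}) = (\alpha\theta(c_{n-1}), \theta(c_0), \ldots, \theta(c_{n-2}))$, and iterating $n$ times while using $\theta(\alpha)=\alpha$ and $\theta^n=\Id$ yields $T^n = \alpha\cdot\Id$. Consequently $T^{-1} = \alpha^{-1} T^{n-1}$ exists, and inverting the coordinate formula above gives $T^{-1}(\bm{c}) = (\theta^{-1}(c_1), \theta^{-1}(c_2), \ldots, \theta^{-1}(c_{n-1}), \alpha^{-1}\theta^{-1}(c_0))$. Because $\mathcal{C}$ is $T$-invariant and $T$ is a bijection of the finite set $\mathbb{F}_q^n$, we conclude $T(\mathcal{C}) = \mathcal{C}$, hence $T^{-1}(\mathcal{C}) = \mathcal{C}$.

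Now take any $\bm{d} \in \mathcal{C}^{\bot}$ and any $\bm{c} \in \mathcal{C}$. Since $T^{-1}(\bm{c}) \in \mathcal{C}$, the orthogonality $\langle \bm{d}, T^{-1}(\bm{c})\rangle_E = 0$ expands to $d_0\theta^{-1}(c_1) + d_1\theta^{-1}(c_2) + \cdots + d_{n-2}\theta^{-1}(c_{n-1}) + \alpha^{-1} d_{n-1}\theta^{-1}(c_0) = 0$. Applying $\theta$ (which fixes $\alpha^{-1}$) collapses the inner twists onto the $d_i$ and rearranges to $\alpha^{-1}\theta(d_{n-1})c_0 + \theta(d_0)c_1 + \cdots + \theta(d_{n-2})c_{n-1} = 0$, which is exactly $\langle T'(\bm{d}), \bm{c}\rangle_E = 0$. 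Since $\bm{c} \in \mathcal{C}$ was arbitrary, $T'(\bm{d}) \in \mathcal{C}^{\bot}$, so $T'(\mathcal{C}^{\bot}) \subseteq \mathcal{C}^{\bot}$, proving that $\mathcal{C}^{\bot}$ is a $(\theta, 0, \alpha^{-1})$-cyclic code.

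The only mildly delicate step is verifying $T^n = \alpha\cdot\Id$: one must track how the scalar $\alpha$ passes through the twist after each shift, and this is the precise point where both hypotheses are used — $\alpha \in (\mathbb{F}_q^{\times})^{\theta}$ ensures $\theta^k(\alpha) = \alpha$ at every wrap-around, while $\ord(\theta)\mid n$ guarantees that $\theta^n$ collapses to the identity after all $n$ shifts. Beyond this bookkeeping I foresee no obstacle; once $T^{-1}(\mathcal{C}) = \mathcal{C}$ is established, the transfer of orthogonality is essentially mechanical.
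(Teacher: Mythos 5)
Your proposal is correct and follows essentially the same route as the paper: since $T^{-1}=\alpha^{-1}T^{\,n-1}$, pairing $\bm{d}$ with $T^{-1}(\bm{c})$ and applying $\theta$ once is, up to the unit $\alpha$, exactly the paper's computation with $T^{n-1}(\bm{c})$, where $\theta(\alpha)=\alpha$ and $\ord(\theta)\mid n$ enter in the same way. Your preliminary step ($T^{n}=\alpha\,\Id$, hence $T(\mathcal{C})=\mathcal{C}$ and $T^{-1}(\bm{c})\in\mathcal{C}$) is a slight detour the paper avoids by simply iterating $T$-invariance, but it is sound.
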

\begin{proof}
    Let $\bm{c}=(c_0, c_1, \dots, c_{n-1})\in \mathcal{C}.$ Then
    \begin{equation}
        T_{\theta, 0, M_{\alpha}}^{n-1}(\bm{c})=\left(\alpha \theta^{n-1}(c_1), \alpha \theta^{n-1}(c_2), \dots, \alpha \theta^{n-1}(c_{n-1}), \theta^{n-1}(c_0)\right)\in \mathcal{C}.
    \end{equation}
    Let $\bm{d}=(d_0, d_1, \dots, d_{n-1}) \in \mathcal{C}^{\bot}.$ Then
    \begin{align*}
        0&=\langle T_{\theta, 0, M_{\alpha}}^{n-1}(\bm{c}),\; \bm{d}\rangle_E\\
        &= \langle \left(\alpha \theta^{n-1}(c_1), \alpha \theta^{n-1}(c_2), \dots, \alpha \theta^{n-1}(c_{n-1}), \theta^{n-1}(c_0)\right)  ,\; (d_0, d_1, \dots, d_{n-1})\rangle_E\\
        &= \alpha \langle \left(\theta^{n-1}(c_1), \theta^{n-1}(c_2), \dots,\theta^{n-1}(c_{n-1}), \alpha^{-1}\theta^{n-1}(c_0)\right)  ,\; (d_0, d_1, \dots, d_{n-1})\rangle_E\\
        &= \alpha\left(\sum_{i=1}^{n-1} \theta^{n-1}(c_i)d_{i-1}+\alpha^{-1} \theta^{n-1}(c_0)d_{n-1}\right).
    \end{align*}
    Since $\theta(0)=0,$ we have
    \begin{align*}
        \theta\left( \alpha\left(\sum_{i=1}^{n-1} \theta^{n-1}(c_i)d_{i-1}+\alpha^{-1} \theta^{n-1}(c_0)d_{n-1}\right)\right)&=0\\
        \alpha\left(\sum_{i=1}^{n-1} \theta^{n}(c_i)\theta(d_{i-1})+\alpha^{-1} \theta^{n}(c_0)\theta(d_{n-1})\right)&=0.
    \end{align*}
    Since $\alpha \in (\mathbb{F}_q^{\times})^{\theta}$ and order of $\theta$ divides $n,$ we obtain
    $\sum_{i=1}^{n-1} c_i\theta( d_{i-1})+\alpha^{-1} c_0 \theta(d_{n-1})=0,$ or equivalently, $\langle (c_0, c_1, \dots, c_{n-1}),\; (\alpha^{-1} \theta(d_{n-1}), \theta(d_0), \dots, \theta(d_{n-2}))  \rangle_E=0.$ This shows that $ (\alpha^{-1} \theta(d_{n-1}), \theta(d_0), \dots, \theta(d_{n-2})) \in \mathcal{C}^{\bot}.$
\end{proof}
The following result is due to Boucher and Ulmer \cite{BOUCHER2009}:
\begin{theorem}\cite{BOUCHER2009}
    Let $\mathcal{C}=\langle \bm{g}(x)=\sum_{i=0}^{r}g_ix^i\rangle$ be a $(\theta, 0, \alpha)$-cyclic code of length $n$ over $\mathbb{F}_q,$ where $\alpha \in (\mathbb{F}_q^{\times})^{\theta}$ and order of $\theta$ divides $n.$ Suppose $x^n-\alpha = \bm{h}(x)\bm{g}(x)\in Z(\mathbb{F}_q[x; \theta, 0]),$ where $\bm{h}(x)=\sum_{i=0}^{n-r}h_ix^i\in \mathbb{F}_q[x; \theta, 0].$ Then $\mathcal{C}^{\bot}$ is generated by 
    \begin{equation}
        \bm{h}^{\dagger}(x)=h_{n-r}+\theta(h_{n-r-1})x+\cdots+\theta^{n-r}(h_0)x^{n-r}.
    \end{equation}
\end{theorem}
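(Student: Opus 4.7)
The plan is to show $\mathcal{C}^{\bot} = \langle \bm{h}^\dagger(x) \rangle$ by establishing two things: (i) $\bm{h}^\dagger(x) \in \mathcal{C}^{\bot}$, and (ii) $\bm{h}^\dagger$ together with sufficiently many of its shifts span a subspace of dimension $r = \dim_{\mathbb{F}_q}\mathcal{C}^{\bot}$, which forces equality.

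For (i), I would start by invoking Lemma \ref{center}: since $\bm{h}(x)\bm{g}(x) = x^n-\alpha$ lies in $Z(\mathbb{F}_q[x;\theta,0])$, we also have $\bm{g}(x)\bm{h}(x) = x^n - \alpha$. Expanding this product using $x^i r = \theta^i(r)x^i$ (valid since $\delta = 0$), the coefficient of $x^m$ equals $\sum_{i=0}^{r} g_i\,\theta^i(h_{m-i})$, which must therefore vanish for $1 \leq m \leq n-1$. Next I would write out both vectors explicitly: since $\deg \bm{g} + s \leq n-1$ for $0 \leq s \leq n-r-1$, the polynomial $x^s \bm{g}(x)$ does not reduce modulo $x^n - \alpha$, so $T_{\theta,0,M_\alpha}^{\,s}(\bm{g})$ has $k$-th coordinate $\theta^s(g_{k-s})$; similarly $\bm{h}^\dagger$ has $k$-th coordinate $\theta^k(h_{n-r-k})$ for $0 \leq k \leq n-r$. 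The core calculation is then
\begin{align*}
\langle \bm{h}^\dagger,\, T_{\theta,0,M_\alpha}^{\,s}(\bm{g})\rangle_E
&= \sum_{j=0}^{\min(r,\,n-r-s)} \theta^{s+j}(h_{n-r-s-j})\,\theta^s(g_j) \\
&= \theta^s\!\left(\sum_{j=0}^{\min(r,\,n-r-s)} g_j\,\theta^j(h_{(n-r-s)-j})\right),
\end{align*}
via the substitution $j = k-s$ together with the commutativity of $\mathbb{F}_q$. The inner sum is exactly the coefficient of $x^{n-r-s}$ in $\bm{g}(x)\bm{h}(x)$; since $1 \leq n-r-s \leq n-1$ for $s$ in the stated range, the key identity kills it. As $\{T_{\theta,0,M_\alpha}^{\,s}(\bm{g}) : 0 \leq s \leq n-r-1\}$ is an $\mathbb{F}_q$-basis of $\mathcal{C}$, we conclude $\bm{h}^\dagger(x) \in \mathcal{C}^{\bot}$.

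For (ii), I would use that $\mathcal{C}^{\bot}$ is a $(\theta, 0, \alpha^{-1})$-cyclic code by the preceding theorem, hence closed under $T_{\theta,0,M_{\alpha^{-1}}}$. For $0 \leq s \leq r-1$, the polynomial $x^s \bm{h}^\dagger(x)$ has degree at most $n-1$, so $T_{\theta,0,M_{\alpha^{-1}}}^{\,s}(\bm{h}^\dagger)$ coincides with its coefficient vector and has leading position $s + (n-r)$. These $r$ vectors therefore have distinct leading coordinates with nonzero entries (using $h_{n-r}\neq 0$ and that $\theta$ is an automorphism), so they are $\mathbb{F}_q$-linearly independent. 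This produces an $r$-dimensional subspace of $\mathcal{C}^{\bot}$ inside $\langle \bm{h}^\dagger(x)\rangle$, and combined with $\dim_{\mathbb{F}_q} \mathcal{C}^{\bot} = n - (n-r) = r$, forces $\mathcal{C}^{\bot} = \langle \bm{h}^\dagger(x)\rangle$. The main obstacle is the bookkeeping in the inner-product computation: one must verify that, after the substitution $j = k-s$, the summation range exactly matches the nonzero-index range in the coefficient of $x^{n-r-s}$ in $\bm{g}(x)\bm{h}(x)$, correctly accounting for the support of $\bm{h}^\dagger$ (indices $0$ through $n-r$) and the support of $T_{\theta,0,M_\alpha}^{\,s}(\bm{g})$ (indices $s$ through $s+r$) at the endpoints.
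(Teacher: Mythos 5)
The paper does not actually prove this theorem --- it is quoted from Boucher--Ulmer \cite{BOUCHER2009} --- so there is no in-paper argument to compare against; judged on its own, your proof is correct and is essentially the standard one. Lemma \ref{center} gives $\bm{g}(x)\bm{h}(x)=x^n-\alpha$, and the vanishing of the coefficients of $x^m$ for $1\le m\le n-1$ in this product is exactly the orthogonality of $\bm{h}^\dagger$ to the basis $\{T_{\theta,0,M_\alpha}^{\,s}(\bm{g}):0\le s\le n-r-1\}$ of $\mathcal{C}$; your index bookkeeping checks out, since after the substitution $j=k-s$ both the inner-product sum and the coefficient of $x^{n-r-s}$ in $\bm{g}(x)\bm{h}(x)$ run over $0\le j\le\min(r,\,n-r-s)$, and $n-r-s$ stays in $\{1,\dots,n-r\}\subseteq\{1,\dots,n-1\}$. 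Combining $\bm{h}^\dagger(x)\in\mathcal{C}^\bot$ with the preceding theorem (so $\mathcal{C}^\bot$ is closed under $T_{\theta,0,M_{\alpha^{-1}}}$) gives $\langle\bm{h}^\dagger(x)\rangle\subseteq\mathcal{C}^\bot$, and your dimension count $\dim_{\mathbb{F}_q}\mathcal{C}^\bot=r$ forces equality. One small slip in part (ii): the coordinate of $T_{\theta,0,M_{\alpha^{-1}}}^{\,s}(\bm{h}^\dagger)$ at the leading position $s+(n-r)$ is $\theta^{s+n-r}(h_0)$, so its nonvanishing follows from $h_0\neq 0$ (since $h_0g_0=-\alpha$), not from $h_{n-r}\neq 0$ as you wrote; alternatively, work with the lowest nonzero position $s$, whose entry $\theta^{s}(h_{n-r})$ is nonzero because $h_{n-r}$ is the leading coefficient of $\bm{h}(x)$. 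Either choice yields the triangular pattern you need, so this does not affect the validity of the argument.
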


\begin{theorem}\label{euclidean dual containing codn over F_q}
    Let $\mathcal{C}=\langle \bm{g}(x)\rangle$ be a $(\theta, 0, \alpha)$-cyclic code of length $n$ over $\mathbb{F}_q,$ where $\alpha \in (\mathbb{F}_q^{\times})^{\theta}$ and the order of $\theta$ divides $n.$ Suppose $x^n-\alpha = \bm{h}(x)\bm{g}(x)$ where $\bm{h}(x)\in \mathbb{F}_q[x; \theta, 0].$ Then $\mathcal{C}^{\bot}\subseteq \mathcal{C}$ if and only if $\bm{g}(x)$ is a right divisor of $\bm{h}^{\dagger}(x),$ where $\bm{h}^{\dagger}(x)$ is the generator polynomial of $\mathcal{C}^{\bot}.$
\end{theorem}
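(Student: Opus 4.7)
The plan is to reduce the containment $\mathcal{C}^\bot\subseteq\mathcal{C}$ to a single membership question, namely $\bm{h}^\dagger(x)\in\mathcal{C}$, and then translate that membership into right-divisibility in the skew polynomial ring. The previous theorem of Boucher--Ulmer already supplies $\mathcal{C}^\bot=\langle\bm{h}^\dagger(x)\rangle$ as a left $\mathbb{F}_q[x;\theta,0]$-submodule of $\mathbb{F}_q[x;\theta,0]/\langle x^n-\alpha\rangle$. So the starting observation will be: because $\mathcal{C}$ is itself a left $\mathbb{F}_q[x;\theta,0]$-submodule (Theorem \ref{left-submodule}), once $\bm{h}^\dagger(x)$ lies in $\mathcal{C}$, the whole submodule it generates lies in $\mathcal{C}$; conversely, if $\mathcal{C}^\bot\subseteq\mathcal{C}$, then in particular its generator $\bm{h}^\dagger(x)$ is in $\mathcal{C}$. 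This reduces the two-way equivalence to characterizing when $\bm{h}^\dagger(x)\in\mathcal{C}$.

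Next I would show that membership $\bm{h}^\dagger(x)\in\mathcal{C}$ is equivalent to $\bm{g}(x)$ being a right divisor of $\bm{h}^\dagger(x)$ in $\mathbb{F}_q[x;\theta,0]$. By Theorem \ref{generator polynomial_F_q}, every element of $\mathcal{C}$ has a representative of degree less than $n$ which is a right multiple $\bm{r}(x)\bm{g}(x)$; moreover, different representatives of the same class differ by a multiple of $x^n-\alpha=\bm{h}(x)\bm{g}(x)$, which is again a right multiple of $\bm{g}(x)$, so the property of being right-divisible by $\bm{g}(x)$ is well-defined on classes. Since $\bm{h}^\dagger(x)$ has degree $n-r<n$ (where $r=\deg\bm{g}$), its natural lift to $\mathbb{F}_q[x;\theta,0]$ is its own canonical representative. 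Therefore $\bm{h}^\dagger(x)\in\mathcal{C}$ if and only if $\bm{h}^\dagger(x)=\bm{p}(x)\bm{g}(x)$ for some $\bm{p}(x)\in\mathbb{F}_q[x;\theta,0]$, i.e.\ $\bm{g}(x)$ is a right divisor of $\bm{h}^\dagger(x)$.

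Combining the two reductions gives the equivalence: $\mathcal{C}^\bot\subseteq\mathcal{C}$ if and only if $\bm{h}^\dagger(x)\in\mathcal{C}$ if and only if $\bm{g}(x)$ right-divides $\bm{h}^\dagger(x)$. The only delicate point I anticipate is the second reduction: one must verify that right-divisibility of $\bm{h}^\dagger(x)$ by $\bm{g}(x)$ really is a property of the coset in the quotient, not just of the chosen lift. This is where the factorization $x^n-\alpha=\bm{h}(x)\bm{g}(x)$ is essential, since it ensures that the kernel of the quotient map is generated (as a left ideal in the relevant sense) by a right multiple of $\bm{g}(x)$, so well-definedness goes through without needing two-sidedness of $\langle x^n-\alpha\rangle$.
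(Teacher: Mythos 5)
Your proposal is correct and follows essentially the same route as the paper: the forward direction extracts $\bm{h}^\dagger(x)\in\mathcal{C}$ and hence a factorization $\bm{h}^\dagger(x)=\bm{r}(x)\bm{g}(x)$, and the converse uses that every element of $\mathcal{C}^\bot=\langle\bm{h}^\dagger(x)\rangle$ is a left multiple of $\bm{h}^\dagger(x)$, hence of $\bm{g}(x)$, so it lies in $\mathcal{C}$. Your extra remark verifying that right-divisibility by $\bm{g}(x)$ is well defined on cosets (since $x^n-\alpha=\bm{h}(x)\bm{g}(x)$ is itself a right multiple of $\bm{g}(x)$) is a careful addition the paper leaves implicit, but it does not change the argument.
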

\begin{proof}
Let us assume that $\mathcal{C}^{\bot}\subseteq \mathcal{C}.$ Then $\bm{h}^{\dagger}(x) \in \mathcal{C}.$ Thus, $\bm{h}^{\dagger}(x)=\bm{r}(x)\bm{g}(x)$ for some $\bm{r}(x)\in \mathbb{F}_q[x; \theta, 0]. $ Hence, $\bm{g}(x)$ is a right divisor of $\bm{h}^{\dagger}(x).$
\par
Conversely, let us assume that  $\bm{g}(x)$ is a right divisor of $\bm{h}^{\dagger}(x).$ Then, $\bm{h}^{\dagger}(x)=\bm{m}(x)\bm{g}(x)$ for some $\bm{m}(x)\in \mathbb{F}_q[x; \theta, 0].$ Now, for any $\bm{a}(x)\in \mathcal{C}^\bot=\langle \bm{h}^\dagger(x)\rangle,$ we have, $\bm{a}(x)=\bm{s}(x)\bm{h}^\dagger(x)$ for some $\bm{s}(x)\in \mathbb{F}_q[x; \theta, 0].$ Then, $\bm{a}(x)=\bm{s}(x)\bm{m}(x)\bm{g}(x)\in \mathcal{C}.$ Hence, $\mathcal{C}^{\bot}\subseteq \mathcal{C}$.
\end{proof}

\begin{lemma}\label{Euclidean dual containment over R}
 Let $\mathcal{C}=\bm{e}_1\mathcal{C}_1\oplus\bm{e}_2\mathcal{C}_2\oplus\cdots\oplus\bm{e}_l\mathcal{C}_l$ be a $(\Theta, \bm{0}, \bm{a})$-cyclic code of length $n$ over $\mathcal{R},$ where $\Theta\in \mathcal{A},$ and $\bm{a}=(a_1, a_2, \dots, a_n).$ Then $\mathcal{C}^\bot \subseteq \mathcal{C}$ if and only if $\mathcal{C}_i^\bot\subseteq\mathcal{C}_i$ for each $1\leq i \leq l.$   
\end{lemma}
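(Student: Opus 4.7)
The plan is to first establish that the Euclidean dual respects the idempotent decomposition, namely
\[
\mathcal{C}^\bot \;=\; \bm{e}_1\mathcal{C}_1^\bot\oplus\bm{e}_2\mathcal{C}_2^\bot\oplus\cdots\oplus\bm{e}_l\mathcal{C}_l^\bot,
\]
where $\mathcal{C}_i^\bot$ denotes the Euclidean dual of $\mathcal{C}_i$ inside $\mathbb{F}_q^n$. Once this is in hand, the equivalence in the lemma is immediate: both $\mathcal{C}^\bot$ and $\mathcal{C}$ are expressed as direct sums along the orthogonal idempotents $\bm{e}_1,\dots,\bm{e}_l$, so the inclusion $\mathcal{C}^\bot\subseteq\mathcal{C}$ holds if and only if $\bm{e}_i\mathcal{C}_i^\bot\subseteq\bm{e}_i\mathcal{C}_i$ for each $i$, which is equivalent to $\mathcal{C}_i^\bot\subseteq\mathcal{C}_i$ for every $1\le i\le l$.

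For the decomposition of $\mathcal{C}^\bot$, the first step is the componentwise rewriting of the Euclidean inner product on $\mathcal{R}^n$. Given $\bar{\bm{c}}=(\bm{c}_0,\dots,\bm{c}_{n-1})$ and $\bar{\bm{d}}=(\bm{d}_0,\dots,\bm{d}_{n-1})$ in $\mathcal{R}^n$ with $\bm{c}_j=(c_j^{(1)},\dots,c_j^{(l)})$ and similarly for $\bm{d}_j$, a direct computation gives
\[
\langle\bar{\bm{c}},\bar{\bm{d}}\rangle_E \;=\; \Bigl(\textstyle\sum_{j=0}^{n-1}c_j^{(1)}d_j^{(1)},\ \dots,\ \sum_{j=0}^{n-1}c_j^{(l)}d_j^{(l)}\Bigr) \;=\; \bigl(\langle\tilde{\pi}_1(\bar{\bm{c}}),\tilde{\pi}_1(\bar{\bm{d}})\rangle_E,\ \dots,\ \langle\tilde{\pi}_l(\bar{\bm{c}}),\tilde{\pi}_l(\bar{\bm{d}})\rangle_E\bigr).
\]
Therefore $\langle\bar{\bm{c}},\bar{\bm{d}}\rangle_E=\bm{0}$ in $\mathcal{R}$ if and only if $\langle\tilde{\pi}_i(\bar{\bm{c}}),\tilde{\pi}_i(\bar{\bm{d}})\rangle_E=0$ in $\mathbb{F}_q$ for every $i$.

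For the inclusion $\bm{e}_1\mathcal{C}_1^\bot\oplus\cdots\oplus\bm{e}_l\mathcal{C}_l^\bot\subseteq\mathcal{C}^\bot$, I would take $\bar{\bm{x}}=\sum_{i=1}^l\bm{e}_i\bm{x}_i$ with $\bm{x}_i\in\mathcal{C}_i^\bot$ and any $\bar{\bm{c}}=\sum_{i=1}^l\bm{e}_i\bm{c}_i\in\mathcal{C}$ with $\bm{c}_i\in\mathcal{C}_i$; the identity above, together with $\tilde{\pi}_i(\bar{\bm{x}})=\bm{x}_i$ and $\tilde{\pi}_i(\bar{\bm{c}})=\bm{c}_i$, gives $\langle\bar{\bm{x}},\bar{\bm{c}}\rangle_E=\bm{0}$. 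For the reverse inclusion, I would take $\bar{\bm{x}}\in\mathcal{C}^\bot$ and write $\bar{\bm{x}}=\sum_{i=1}^l\bm{e}_i\tilde{\pi}_i(\bar{\bm{x}})$. To show $\tilde{\pi}_i(\bar{\bm{x}})\in\mathcal{C}_i^\bot$, I would exploit the fact that the direct sum decomposition of $\mathcal{C}$ supplies, for every $\bm{c}_i\in\mathcal{C}_i$, a codeword $\bm{e}_i\bm{c}_i\in\mathcal{C}$; pairing $\bar{\bm{x}}$ with this codeword and looking at the $i$-th coordinate of the resulting inner product yields $\langle\tilde{\pi}_i(\bar{\bm{x}}),\bm{c}_i\rangle_E=0$, as required.

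The argument is essentially formal and I do not foresee a substantive obstacle: the only subtlety is the careful bookkeeping of coordinates inside $\mathcal{R}=\mathbb{F}_q^l$ so that the $\mathcal{R}$-valued inner product can be cleanly factored through its $\mathbb{F}_q$-components. Note that the $(\Theta,\bm{0},\bm{a})$-cyclicity plays no role in the proof—the result is really a statement about Euclidean duality of linear codes that are direct sums along orthogonal idempotents—but the hypothesis ensures that the $\mathcal{C}_i$ inherit the $(\theta_i,0,a_i)$-cyclic structure (Theorem \ref{main theorem}), which is what makes the lemma useful in the sequel.
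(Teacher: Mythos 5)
Your argument is correct and follows essentially the same route as the paper: both rest on the decomposition $\mathcal{C}^\bot=\bm{e}_1\mathcal{C}_1^\bot\oplus\cdots\oplus\bm{e}_l\mathcal{C}_l^\bot$ and then compare components along the idempotent basis $\{\bm{e}_1,\dots,\bm{e}_l\}$. The only difference is that you spell out, via the componentwise factorization of the Euclidean inner product, the dual-decomposition step that the paper simply asserts with ``Note that,'' so your write-up is a more detailed version of the same proof.
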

\begin{proof}
    Note that, if $\mathcal{C}=\underset{i=1}{\overset{l}{\bigoplus}}\,\bm{e}_i{\mathcal{C}_i},$ then $\mathcal{C}^\bot=\underset{i=1}{\overset{l}{\bigoplus}}\,\bm{e}_i{\mathcal{C}_i}^{\bot}.$ If each $\mathcal{C}_i^\bot \subseteq \mathcal{C}_i,$ then clearly $\mathcal{C}^\bot \subseteq \mathcal{C}.$ Conversely, assume that $\mathcal{C}^\bot \subseteq \mathcal{C}.$ Then $(\underset{i=1}{\overset{l}{\bigoplus}}\,\bm{e}_i{\mathcal{C}_i})^\bot\subseteq\underset{i=1}{\overset{l}{\bigoplus}}\,\bm{e}_i{\mathcal{C}_i}$ and consequently, $\underset{i=1}{\overset{l}{\bigoplus}}\,\bm{e}_i{\mathcal{C}_i}^{\bot} \subseteq \underset{i=1}{\overset{l}{\bigoplus}}\,\bm{e}_i\mathcal{C}_i.$ Since $\{\bm{e}_1,\bm{e}_2,\dots, \bm{e}_l\}$ is a basis of $\mathcal{R}$ over $\mathbb{F}_q,$ $\mathcal{C}_i^\bot\subseteq\mathcal{C}_i $ for each $1\leq i \leq l.$
\end{proof}
\begin{theorem}\label{dualcontaing}
     Let $\mathcal{C}=\underset{i=1}{\overset{l}{\bigoplus}}\,\bm{e}_i\mathcal{C}_i$ be a $(\Theta, \bm{0}, \bm{a})$-cyclic code of length $n$ over $\mathcal{R},$ where $\Theta=\theta_1\times\theta_2\times\cdots\times\theta_l\in \mathcal{A}$ and $\bm{a}=(a_1,a_2,\dots,a_l)\in\mathcal{R}^{\times}.$ Suppose for each $1\leq i \leq l$, $\mathcal{C}_i=\langle \bm{g}_i(x)\rangle$ and $  x^n-a_i=\bm{h}_i(x)\bm{g}_i(x)=\bm{g}_i(x)\bm{h}_i(x)$  for $\bm{h}_i(x),\bm{g}_i(x)\in \mathbb{F}_q[x;\theta_i,0].$ Then $\mathcal{C}^\bot \subseteq \mathcal{C}$ if and only if $\bm{g}_i(x)$ right divides $\bm{h}_i^\dagger(x)$ in $\mathbb{F}_q[x; \theta_i, 0]$ for each $1\leq i \leq l.$
\end{theorem}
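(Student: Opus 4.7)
The plan is to chain together Lemma \ref{Euclidean dual containment over R} and Theorem \ref{euclidean dual containing codn over F_q}, since both equivalences needed have already been established. The decomposition $\mathcal{C}=\bigoplus_{i=1}^{l}\bm{e}_i\mathcal{C}_i$ is precisely what lets us pass back and forth between the module structure over $\mathcal{R}$ and the component codes over $\mathbb{F}_q$, and the hypothesis $x^n-a_i=\bm{h}_i(x)\bm{g}_i(x)=\bm{g}_i(x)\bm{h}_i(x)$ ensures $x^n-a_i$ lies in $Z(\mathbb{F}_q[x;\theta_i,0])$ so that $\mathcal{C}_i^{\bot}=\langle \bm{h}_i^{\dagger}(x)\rangle$ is meaningful.

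First I would invoke Lemma \ref{Euclidean dual containment over R} to obtain the equivalence
\[
\mathcal{C}^{\bot}\subseteq\mathcal{C}\quad\Longleftrightarrow\quad \mathcal{C}_i^{\bot}\subseteq\mathcal{C}_i \text{ for every } 1\leq i\leq l.
\]
This reduces the problem to an $l$-fold collection of Euclidean dual-containment questions, one for each component.

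Next, for each fixed $i$, I would apply Theorem \ref{main theorem} to note that $\mathcal{C}_i$ is a $(\theta_i,0,a_i)$-cyclic code of length $n$ over $\mathbb{F}_q$ generated by $\bm{g}_i(x)$, with the two-sided factorization $x^n-a_i=\bm{h}_i(x)\bm{g}_i(x)=\bm{g}_i(x)\bm{h}_i(x)$ placing us exactly in the setting of Theorem \ref{euclidean dual containing codn over F_q}. That theorem then yields
\[
\mathcal{C}_i^{\bot}\subseteq\mathcal{C}_i\quad\Longleftrightarrow\quad \bm{g}_i(x) \text{ right divides } \bm{h}_i^{\dagger}(x) \text{ in } \mathbb{F}_q[x;\theta_i,0].
\]
Combining the two equivalences produces the claim.

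There is no substantive obstacle; the only technical caveat is that Theorem \ref{euclidean dual containing codn over F_q} formally requires $a_i\in(\mathbb{F}_q^{\times})^{\theta_i}$ and $\ord(\theta_i)\mid n$ so that $\bm{h}_i^{\dagger}(x)$ is a valid generator of $\mathcal{C}_i^{\bot}$. I would either verify that the assumed two-sided factorization forces these conditions (which is the usual reason $x^n-a_i\in Z(\mathbb{F}_q[x;\theta_i,0])$), or tacitly include them as part of the standing assumptions on $\Theta$ and $\bm{a}$.
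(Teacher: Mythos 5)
Your proposal is correct and follows essentially the same route as the paper's proof, which likewise chains Lemma \ref{Euclidean dual containment over R} (reducing $\mathcal{C}^{\bot}\subseteq\mathcal{C}$ to the component conditions $\mathcal{C}_i^{\bot}\subseteq\mathcal{C}_i$) with Theorem \ref{euclidean dual containing codn over F_q} applied to each $(\theta_i,0,a_i)$-cyclic component. Your added caveat about the hypotheses $a_i\in(\mathbb{F}_q^{\times})^{\theta_i}$ and $\ord(\theta_i)\mid n$ is a fair point of care that the paper's own one-line proof leaves implicit.
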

\begin{proof}
    $\bm{h}_i^\dagger(x)$ is divisible by $\bm{g}_i(x)$ from the right in $\mathbb{F}_q[x; \theta_i, 0]$ for each $1 \leq i \leq l$ if and only if each $\mathcal{C}_i^\bot \subseteq \mathcal{C}_i$ for $1 \leq i \leq l$ (by Theorem \ref{euclidean dual containing codn over F_q}) if and only if $\mathcal{C}^\bot \subseteq \mathcal{C}$ (by Lemma \ref{Euclidean dual containment over R}).    
\end{proof}
\begin{theorem}[CSS Construction]\cite{grassl2004optimal}\label{CSS construction}
Let $\mathcal{C}_1$ and $\mathcal{C}_2$ be $[n,k_1,d_1]_q$ and $[n,k_2,d_2]_q$ over $\mathbb{F}_q$ respectively with $\mathcal{C}_2^{\bot}\subseteq \mathcal{C}_1.$ Then there exists a quantum error-correcting code $\mathcal{C}$ with parameters $[[n,k_1+k_2-n,d]]_q,$ where $d=\min\{w_H(x):x\in(\mathcal{C}_1\setminus \mathcal{C}_2^\bot)\cup(\mathcal{C}_2\setminus \mathcal{C}_1^\bot)\}.$
Furthermore, if $\mathcal{C}_1^\bot \subseteq \mathcal{C}_1,$ then there exists a quantum code with parameters $[[n,2k_1-n,d_1]]_q,$ where $d_1=\min\{w_H(x):x \in \mathcal{C}_1\setminus \mathcal{C}_1^{\bot}\}.$ 
\end{theorem}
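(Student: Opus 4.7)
The plan is to invoke the standard Calderbank-Shor-Steane construction. Using the dual-containing hypothesis $\mathcal{C}_2^\bot \subseteq \mathcal{C}_1$, I would realize the quantum code $Q$ as a subspace of $(\mathbb{C}^q)^{\otimes n}$ spanned by the normalized coset superpositions
\[
|\psi_{v+\mathcal{C}_2^\bot}\rangle := \frac{1}{\sqrt{|\mathcal{C}_2^\bot|}}\sum_{w\in \mathcal{C}_2^\bot}|v+w\rangle,
\]
indexed by coset representatives $v$ of $\mathcal{C}_2^\bot$ in $\mathcal{C}_1$. Different cosets have disjoint supports in the computational basis, so these vectors are orthonormal, and their count
\[
\frac{|\mathcal{C}_1|}{|\mathcal{C}_2^\bot|}=\frac{q^{k_1}}{q^{n-k_2}}=q^{k_1+k_2-n}
\]
gives $\dim Q = q^{k_1+k_2-n}$, yielding the announced dimension $k_1+k_2-n$.

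For the distance, I would study how the generalized Pauli operators $X_a$ and $Z_b$, for $a,b\in \mathbb{F}_q^n$, act on $Q$. The shift $X_a$ permutes cosets as $|\psi_{v+\mathcal{C}_2^\bot}\rangle \mapsto |\psi_{v+a+\mathcal{C}_2^\bot}\rangle$; it fixes every basis vector iff $a\in\mathcal{C}_2^\bot$, and otherwise is detectable whenever the resulting coset is distinct from the original, in particular whenever $a\notin \mathcal{C}_1$. The phase error $Z_b$ multiplies each summand by a non-trivial additive character $\chi(\langle b,v+w\rangle)$; by character orthogonality on the subgroup $\mathcal{C}_2^\bot$, this either maps $Q$ into an orthogonal subspace or acts as a global phase on $Q$, the latter happening exactly when $b\in \mathcal{C}_1^\bot$ (so $\chi(\langle b,v\rangle)$ is constant across cosets) together with $b\in \mathcal{C}_2=(\mathcal{C}_2^\bot)^\bot$ (so the inner sum is coherent). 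Combining these observations via the Knill-Laflamme error-correction criterion, the weight-$<d$ errors that are not detectable and not stabilizers of $Q$ are precisely those whose $X$- or $Z$-support lies in $\mathcal{C}_1\setminus\mathcal{C}_2^\bot$ or $\mathcal{C}_2\setminus\mathcal{C}_1^\bot$, which yields the claimed distance formula.

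The main technical obstacle is the careful verification of the Knill-Laflamme conditions in the $q$-ary setting: one must track products $E_1^{*}E_2$ of generalized Pauli errors and show that each such product is either a stabilizer or maps $Q$ to an orthogonal subspace. The hypothesis $\mathcal{C}_2^\bot \subseteq \mathcal{C}_1$ is exactly what forces the $X$-type and $Z$-type stabilizer groups to commute, which makes this clean decomposition possible. For the second, self-dual-containing statement, I would specialize $\mathcal{C}_2=\mathcal{C}_1$: then $k_1+k_2-n = 2k_1-n$ and the distance set collapses to $\mathcal{C}_1\setminus \mathcal{C}_1^\bot$, recovering the $[[n,2k_1-n,d_1]]_q$ parameters.
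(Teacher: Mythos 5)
The paper does not prove this statement at all: Theorem \ref{CSS construction} is imported verbatim from the cited reference \cite{grassl2004optimal}, so there is no internal proof to compare your argument against. Your proposal reconstructs the standard CSS argument, and in outline it is correct: the coset states $|\psi_{v+\mathcal{C}_2^\bot}\rangle$ with $v$ ranging over $\mathcal{C}_1/\mathcal{C}_2^\bot$ are orthonormal and give dimension $q^{k_1+k_2-n}$; $X$-type stabilizers come from $\mathcal{C}_2^\bot$, $Z$-type stabilizers from $\mathcal{C}_1^\bot$ (and the hypothesis $\mathcal{C}_2^\bot\subseteq\mathcal{C}_1$, equivalently $\mathcal{C}_1^\bot\subseteq\mathcal{C}_2$, is what makes these commute); the undetectable nontrivial errors are exactly those whose $X$-part lies in $\mathcal{C}_1\setminus\mathcal{C}_2^\bot$ or whose $Z$-part lies in $\mathcal{C}_2\setminus\mathcal{C}_1^\bot$, which yields the stated $d$; and the second claim is indeed the specialization $\mathcal{C}_2=\mathcal{C}_1$. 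One sentence in your sketch is misstated, though: for $a\in\mathcal{C}_1\setminus\mathcal{C}_2^\bot$ the operator $X_a$ moves a codeword to a \emph{different codeword inside} the code, so the ``resulting coset is distinct from the original'' does not make the error detectable --- such an $a$ is an undetectable logical operator, and detectability of $X_a$ requires $a\notin\mathcal{C}_1$ (so that every image coset leaves $\mathcal{C}_1$ and the image is orthogonal to $Q$). Your final Knill--Laflamme summary uses the correct classification, so this is a local imprecision rather than a gap, but it should be fixed before the sketch is expanded into a full verification of the $q$-ary Knill--Laflamme conditions.
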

\begin{theorem}\label{Quantum code construction}
Let $\mathcal{C}=\underset{i=1}{\overset{l}{\bigoplus}}\,\bm{e}_i\mathcal{C}_i$ be a $(\Theta, \bm{0}, \bm{a})$-cyclic code of length $n$ over $\mathcal{R},$ where $\Theta=\theta_1\times\theta_2\times\cdots\times\theta_l\in \mathcal{A}$ and $\bm{a}=(a_1,a_2,\dots,a_l)\in\mathcal{R}^{\times}.$ Suppose for each $1\leq i \leq l$, $\mathcal{C}_i=\langle \bm{g}_i(x)\rangle$ and $  x^n-a_i=\bm{h}_i(x)\bm{g}_i(x)=\bm{g}_i(x)\bm{h}_i(x)$  for $\bm{h}_i(x),\bm{g}_i(x)\in \mathbb{F}_q[x;\theta_i,0].$  If $\bm{h}_i^\dagger(x)$ is right divisible by $\bm{g}_i(x)$ for each $1\leq i \leq l,$ then there exists a quantum code with parameters $[[nl,\underset{i=1}{\overset{l}{\sum}}k_i -nl,\geq d_H]]_q,$ where $d_H$ is the Hamming distance of $\Phi(\mathcal{C})$ and $k_i=n-\deg \bm{g}_i(x)$ for each $1\leq i \leq l.$ 
\end{theorem}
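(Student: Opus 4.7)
The plan is to assemble the statement from three results already established in the excerpt: Theorem \ref{dualcontaing} (which translates the right-divisibility hypothesis into Euclidean dual containment over $\mathcal{R}$), Theorem \ref{Phi preserves orthogonality} (which pushes dual containment through the Gray map $\Phi$), and the Euclidean CSS construction (Theorem \ref{CSS construction}). No genuinely new argument is required; the proof is a three-step chaining.

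First, I would invoke Theorem \ref{dualcontaing}: since by hypothesis $\bm{g}_i(x)$ right-divides $\bm{h}_i^{\dagger}(x)$ in $\mathbb{F}_q[x;\theta_i,0]$ for every $1\le i\le l$, we conclude $\mathcal{C}^{\bot}\subseteq \mathcal{C}$ inside $\mathcal{R}^n$. Second, I apply a Gray map $\Phi$ whose defining matrices satisfy $M_iM_i^T=\lambda I_l$ (a hypothesis implicit in the statement, since without it orthogonality need not be preserved); Theorem \ref{Phi preserves orthogonality} then yields $\Phi(\mathcal{C})^{\bot}\subseteq \Phi(\mathcal{C})$ inside $\mathbb{F}_q^{nl}$.

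Third, I read off the parameters of $\Phi(\mathcal{C})$ as an $\mathbb{F}_q$-linear code. By Theorem \ref{Graymap}, $\Phi(\mathcal{C})$ has length $nl$, minimum Hamming distance $d_H$, and $\mathbb{F}_q$-dimension equal to $\dim_{\mathbb{F}_q}\mathcal{C}$. From the decomposition $\mathcal{C}=\bm{e}_1\mathcal{C}_1\oplus\cdots\oplus\bm{e}_l\mathcal{C}_l$ together with $\dim_{\mathbb{F}_q}\mathcal{C}_i=k_i=n-\deg \bm{g}_i(x)$, we obtain $\dim_{\mathbb{F}_q}\Phi(\mathcal{C})=\sum_{i=1}^{l}k_i$. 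Finally, I invoke the ``furthermore'' clause of the Euclidean CSS construction (Theorem \ref{CSS construction}) with $\mathcal{C}_1=\Phi(\mathcal{C})$ to produce a $q$-ary quantum code of length $nl$, dimension $2\sum_{i=1}^{l}k_i-nl$, and minimum distance at least $d_H$.

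I do not foresee a substantial obstacle: each invoked result does essentially all the work, and the only care required is ensuring that the Gray map in force is one of the orthogonality-preserving ones covered by Theorem \ref{Phi preserves orthogonality}. The one point worth flagging is notational: the CSS step naturally yields dimension $2\sum_{i=1}^l k_i - nl$, whereas the displayed statement reads $\sum_{i=1}^l k_i - nl$; I would treat this as a typographical omission of the factor $2$ and present the proof with the CSS-derived value.
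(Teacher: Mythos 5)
Your proposal is correct and follows exactly the paper's own (one-line) proof, which cites Theorems \ref{Phi preserves orthogonality}, \ref{dualcontaing} and \ref{CSS construction} in precisely the chaining you describe; your only additions are the explicit dimension count via Theorem \ref{Graymap} and the (correct) observation that the orthogonality-preserving condition $M_iM_i^T=\lambda I_l$ on the Gray map must be assumed. Your flag about the dimension is also right: the stated $\sum_{i=1}^{l}k_i-nl$ should be $2\sum_{i=1}^{l}k_i-nl$, as confirmed by the paper's own examples (e.g.\ the $[[8,2,4]]_9$ and $[[12,0,\geq 6]]_8$ codes).
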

\begin{proof}
    Immediately follows from Theorem \ref{Phi preserves orthogonality}, Theorem \ref{dualcontaing} and Theorem \ref{CSS construction}.
\end{proof}
\subsubsection{Annihilator dual of a \texorpdfstring{$(\theta,0,\alpha)$}{}-cyclic code over \texorpdfstring{$\mathbb{F}_q$}{}}

We define a bilinear form over $\mathbb{F}_q[x;\theta,\delta]/\langle x^n-\alpha \rangle,$ where $\alpha \in (\mathbb{F}_q^{\times})^\theta$ and $o(\theta)\mid n,$ as follows:\\
For $\bm{f}(x),\bm{g}(x)\in \mathbb{F}_q[x,\theta,\delta]/\langle x^n-\alpha \rangle,$
\begin{equation}\label{equation 6.3}
 \langle \bm{f}(x)|\bm{g}(x)\rangle:=\bm{r}(0)   
\end{equation}
where $\bm{r}(x)\equiv \bm{f}(x)\bm{g}(x)\pmod{x^n-\alpha}.$
\begin{lemma}
  The bilinear form defined in Equation \eqref{equation 6.3} is non-degenerate and hence an inner product on $\mathbb{F}_q[x;\theta,\delta]/\langle x^n-\alpha \rangle.$
\end{lemma}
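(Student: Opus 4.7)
My plan is to compute the Gram matrix $M$ of the bilinear form with respect to the $\mathbb{F}_q$-basis $\{1, x, x^2, \dots, x^{n-1}\}$ of $\mathbb{F}_q[x;\theta,\delta]/\langle x^n-\alpha\rangle$ (a basis by the right division algorithm) and to show $\det(M)\neq 0$; non-degeneracy of the form is then equivalent to invertibility of $M$.

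For the entries, the key observation is that $x^i \cdot x^j = x^{i+j}$ in the skew polynomial ring independently of $\theta$ and $\delta$, so $M_{ij} = \langle x^i \mid x^j\rangle$ is just the constant coefficient of $x^{i+j}$ reduced modulo $x^n-\alpha$. When $i+j < n$ this coefficient is $1$ if $i=j=0$ and $0$ otherwise. When $i+j\ge n$, the identity
\[
x^{i+j} = x^{i+j-n}(x^n-\alpha) + x^{i+j-n}\alpha
\]
identifies $x^{i+j-n}\alpha$ (of degree $i+j-n<n$) as the remainder. A short induction on $k$, using $xa=\theta(a)x+\delta(a)$ together with the hypothesis $\theta(\alpha)=\alpha$, shows that the constant coefficient of $x^k\alpha$ is $\delta^k(\alpha)$; in particular it is $\alpha$ when $i+j=n$.

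Consequently $M$ has $M_{00}=1$, zeros throughout the rest of its first row and first column, and a lower-right $(n-1)\times(n-1)$ block that is anti-upper-triangular with $\alpha$ on the anti-diagonal $i+j=n$ (the entries above the anti-diagonal are $\delta^{i+j-n}(\alpha)$, but these do not affect the determinant). Expanding along the first row gives $\det(M) = \pm\,\alpha^{n-1}$, which is nonzero since $\alpha\in\mathbb{F}_q^{\times}$, so the form is non-degenerate.

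The only substantive obstacle is the inductive calculation of the constant coefficient of $x^k\alpha$, where the hypothesis $\theta(\alpha)=\alpha$ enters at each step (to control the successive commutations); this together with $o(\theta)\mid n$ also underlies the well-definedness of the form on the quotient. Beyond this, everything reduces to a routine determinant computation.
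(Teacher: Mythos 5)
Your computation of the Gram matrix is essentially right (for $i+j\ge n$ the remainder of $x^{i+j}$ on right division by $x^n-\alpha$ is indeed $x^{i+j-n}\alpha$, whose constant term is $\delta^{i+j-n}(\alpha)$, and the resulting matrix is anti-triangular with $1$ and $\alpha$'s giving $\det M=\pm\alpha^{n-1}\neq 0$), but the step ``non-degeneracy of the form is then equivalent to invertibility of $M$'' is a genuine gap. That equivalence is a fact about $\mathbb{F}_q$-bilinear forms, and this form is not $\mathbb{F}_q$-bilinear: it is $\mathbb{F}_q$-linear in the first argument but only additive in the second, because scalars do not commute past powers of $x$. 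Concretely, $\langle x\,|\,c\,x^{n-1}\rangle$ is the constant term of $x\,c\,x^{n-1}=\theta(c)x^{n}+\delta(c)x^{n-1}\equiv \theta(c)\alpha+\delta(c)x^{n-1}$, i.e.\ $\theta(c)\alpha$, which differs from $c\,\langle x\,|\,x^{n-1}\rangle=c\alpha$ whenever $\theta(c)\neq c$. Hence you cannot expand $\langle x^{i}\,|\,g\rangle=\sum_j g_j M_{ij}$, and invertibility of $M$ only kills the \emph{left} radical (if $\langle f\,|\,x^{j}\rangle=0$ for all $j$, the coefficient vector of $f$ is annihilated by $M$, so $f=0$); it does not by itself exclude a nonzero $g$ with $\langle f\,|\,g\rangle=0$ for all $f$. (Two smaller points: the identity ``constant term of $x^{k}a$ equals $\delta^{k}(a)$'' holds for every $a$ and does not use $\theta(\alpha)=\alpha$; and since $\delta$ is inner and $\theta(\alpha)=\alpha$, in fact $\delta(\alpha)=0$, so the off-anti-diagonal entries you mention vanish anyway.)

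The right radical needs a short separate argument, which can be run in the same triangular spirit but tracking the twists. Since $\delta(\alpha)=0$ and $\theta(\alpha)=\alpha$, $\alpha$ commutes with $x$, so for $m\ge n$ the reduction of $x^{m}$ is $\alpha x^{m-n}$ and one gets $\langle x^{i}\,|\,g\rangle=\delta^{i}(g_0)+\alpha\,b^{(i)}_n$, where $b^{(i)}_n$ is the coefficient of $x^{n}$ in $x^{i}g$. If $\langle x^{i}\,|\,g\rangle=0$ for all $i$, then $i=0$ gives $g_0=0$; writing $d=\deg g\ge 1$ and taking $i=n-d$ gives $b^{(n-d)}_n=\theta^{n-d}(g_d)$, so $\alpha\,\theta^{n-d}(g_d)=0$, forcing $g_d=0$, a contradiction; hence $g=0$. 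With this supplement your argument is complete and, unlike the paper's proof (which only cites Lemma 12 of an external reference), self-contained; without it, the proof as written establishes only one-sided non-degeneracy.
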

\begin{proof}
    Follows from Lemma 12 of \cite{bajalan2023sigma}.
\end{proof}
%\begin{remark}
    %The bilinear form defined in Equation \eqref{equation 6.3} was already considered in [] in a more general situation, namely, .... The proof of lemma is just an implication of [].
%\end{remark}
\begin{remark}
    One can define the annihilator dual $\mathcal{C}^\circ$ of a $(\theta,\delta,\alpha)$-cyclic code $\mathcal{C}$ over $\mathbb{F}_q$ in three different ways, namely
    \begin{enumerate}
        \item $\mathcal{C}^\circ_{R}=\{\bm{f}(x)\in\frac {\mathbb{F}_q[x;\theta,\delta]}{\langle x^n-\alpha \rangle}| \langle \bm{f}(x)|\bm{c}(x)\rangle=0, \,\, \forall \bm{c}(x) \in \mathcal{C} \}$
        \item $\mathcal{C}^\circ_{L}=\{\bm{f}(x)\in\frac {\mathbb{F}_q[x;\theta,\delta]}{\langle x^n-\alpha \rangle}| \langle \bm{c}(x)|\bm{f}(x)\rangle=0, \,\, \forall \bm{c}(x) \in \mathcal{C} \}$
        \item $\mathcal{C}^\circ=\{\bm{f}(x)\in\frac {\mathbb{F}_q[x;\theta,\delta]}{\langle x^n-\alpha \rangle}| \langle \bm{c}(x)|\bm{f}(x)\rangle=\langle \bm{f}(x)|\bm{c}(x)\rangle=0, \,\, \forall \bm{c}(x) \in \mathcal{C} \}$
    \end{enumerate}
  We see from the following example that the spaces in the above remark are a valid candidate for the annihilator dual of $\mathcal{C}$ as none of them are left $\mathbb{F}_q[x; \theta,\delta]$-submodule of ${\mathbb{F}_q[x;\theta,\delta]}/{\langle x^n-\alpha \rangle}.$
  \begin{example}
       Let $\mathbb{F}_4:=\mathbb{F}_2[w],$ where $w^2+w+1=0.$ Consider the quotient ring $\frac {\mathbb{F}_4[x;\sigma_2,0]}{\langle x^6-1 \rangle},$ where $\sigma_2$ is the Frobenius automorphism of $\mathbb{F}_4.$ Suppose that $\mathcal{C}=\langle x^2+w \rangle.$ It is easy to see that none of $\mathcal{C}^\circ_{R}, \mathcal{C}^\circ_L$ and $\mathcal{C}^\circ$ is closed under left multiplication by $x.$
  \end{example}
  However, for $\theta =\Id_{\mathbb{F}_q}$ (consequently, $\delta=0$), the annihilator dual $\mathcal{C}^\circ$ carries the same structure as $\mathcal{C}$ over $\mathbb{F}_q$ (see \cite{FotueTabue}, Proposition 3) and a construction of its generator polynomial is known (see \cite{Weiqi}, Lemma 2.4). A necessary and sufficient condition for a $(\Id_{\mathbb{F}_q},0,\alpha)$-cyclic code over $\mathbb{F}_q$ to be annihilator dual-containing can be derived from Theorem 12 of \cite{bajalan2023sigma}.
\end{remark} 
\begin{theorem}\label{Ann dual containing over F_q}
    Let $\mathcal{C}=\langle \bm{g}(x) \rangle $ be a $(\Id_{\mathbb{F}_q},0,\alpha)$-cyclic code of length $n$ over $\mathbb{F}_q,$ where $\alpha\in \mathbb{F}_q^{\times}.$ Suppose $x^n-\alpha=\bm{h}(x)\bm{g}(x).$ Then $\mathcal{C}^\circ \subseteq \mathcal{C}$ if and only if $\bm{g}(x)$ is a divisor of $\bm{h}(x).$
\end{theorem}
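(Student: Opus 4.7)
The plan is to identify $\mathcal{C}^\circ$ explicitly as $\langle \bm{h}(x)\rangle$ inside $\mathbb{F}_q[x]/\langle x^n-\alpha\rangle,$ and then translate the containment $\mathcal{C}^\circ\subseteq \mathcal{C}$ into a divisibility statement in $\mathbb{F}_q[x].$ Since $\theta=\Id_{\mathbb{F}_q}$ and $\delta=0,$ the ring $\mathbb{F}_q[x]/\langle x^n-\alpha\rangle$ is commutative and the bilinear form is symmetric (both $\langle \bm{f}(x)\mid \bm{g}(x)\rangle$ and $\langle \bm{g}(x)\mid \bm{f}(x)\rangle$ equal the constant coefficient of $\bm{f}(x)\bm{g}(x)\bmod (x^n-\alpha)$), so $\mathcal{C}^\circ_L=\mathcal{C}^\circ_R=\mathcal{C}^\circ.$

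The first substantive step is to show that $\bm{f}(x)\in \mathcal{C}^\circ$ if and only if $\bm{f}(x)\bm{g}(x)\equiv 0 \pmod{x^n-\alpha}.$ Writing $\bm{u}(x):=\bm{f}(x)\bm{g}(x)\bmod (x^n-\alpha),$ membership in $\mathcal{C}^\circ$ means $(\bm{u}(x)\bm{r}(x)\bmod (x^n-\alpha))(0)=0$ for every $\bm{r}(x).$ Taking $\bm{r}(x)=1,x,x^2,\dots,x^{n-1}$ and using that multiplication by $x^k$ cyclically shifts the coefficients of $\bm{u}(x)$ up to factors of $\alpha,$ each coefficient of $\bm{u}(x)$ appears (up to a nonzero scalar) as a constant term, forcing $\bm{u}(x)=0.$ This is the only step that needs a small calculation and is the mild technical point of the argument.

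Next, lifting to $\mathbb{F}_q[x],$ the condition $\bm{f}(x)\bm{g}(x)\equiv 0\pmod{x^n-\alpha}$ reads $\bm{h}(x)\bm{g}(x)\mid \bm{f}(x)\bm{g}(x)$ in the UFD $\mathbb{F}_q[x].$ Cancelling the nonzero $\bm{g}(x)$ yields $\bm{h}(x)\mid \bm{f}(x).$ Conversely, if $\bm{h}(x)\mid \bm{f}(x)$ then $\bm{f}(x)\bm{g}(x)$ is divisible by $\bm{h}(x)\bm{g}(x)=x^n-\alpha,$ so the condition above holds. Hence $\mathcal{C}^\circ=\langle \bm{h}(x)\rangle$ in $\mathbb{F}_q[x]/\langle x^n-\alpha\rangle.$

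Finally, since $\mathcal{C}$ is an ideal, $\mathcal{C}^\circ=\langle \bm{h}(x)\rangle\subseteq \mathcal{C}$ if and only if $\bm{h}(x)\in \mathcal{C}=\langle \bm{g}(x)\rangle.$ The natural representatives of elements of $\langle \bm{g}(x)\rangle$ are of the form $\bm{g}(x)\bm{p}(x)$ with $\deg \bm{p}<n-\deg \bm{g}=\deg \bm{h},$ and since $\deg \bm{h}<n,$ the congruence $\bm{h}(x)\equiv \bm{g}(x)\bm{p}(x)\pmod{x^n-\alpha}$ is equivalent to the honest equality $\bm{h}(x)=\bm{g}(x)\bm{p}(x)$ in $\mathbb{F}_q[x].$ Therefore $\bm{h}(x)\in \langle \bm{g}(x)\rangle$ if and only if $\bm{g}(x)\mid \bm{h}(x)$ in $\mathbb{F}_q[x],$ completing the proof.
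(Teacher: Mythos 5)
Your proof is correct. Note, however, that the paper does not actually prove this theorem at all: it states it and, in the preceding remark, outsources the work to the literature — the identification of the annihilator dual of a (commutative) constacyclic code and its generator polynomial are quoted from \cite{FotueTabue} (Proposition 3) and \cite{Weiqi} (Lemma 2.4), and the dual-containment criterion is said to follow from Theorem 12 of \cite{bajalan2023sigma}. You instead prove the key fact $\mathcal{C}^\circ=\langle \bm{h}(x)\rangle$ from scratch: the shift computation with $\bm{r}(x)=x^k$ (using $\alpha\neq 0$) correctly forces $\bm{f}(x)\bm{g}(x)\equiv 0 \pmod{x^n-\alpha}$, and cancellation in the integral domain $\mathbb{F}_q[x]$ then gives $\bm{h}(x)\mid \bm{f}(x)$; the final translation of $\bm{h}(x)\in\langle\bm{g}(x)\rangle$ into $\bm{g}(x)\mid\bm{h}(x)$ is also sound, though your appeal to the ``natural representatives'' of $\langle\bm{g}(x)\rangle$ of degree less than $n$ is itself a small standard lemma; it can be bypassed by lifting directly: $\bm{h}(x)=\bm{r}(x)\bm{g}(x)+\bm{q}(x)(x^n-\alpha)=(\bm{r}(x)+\bm{q}(x)\bm{h}(x))\bm{g}(x)$, so $\bm{g}(x)\mid\bm{h}(x)$ in $\mathbb{F}_q[x]$. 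The trade-off is clear: the paper's route is short but not self-contained, while yours makes the result verifiable within the article and makes explicit exactly where commutativity (symmetry of the form, so $\mathcal{C}^\circ_L=\mathcal{C}^\circ_R=\mathcal{C}^\circ$) and $\alpha\in\mathbb{F}_q^{\times}$ are used.
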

The following theorem is a direct consequence of Euclidean CSS construction (Theorem \ref{CSS construction}) for annihilator inner product, which is referred to as annihilator CSS construction in \cite{bajalan2024polycyclic}.
\begin{theorem}[Annihilator CSS construction]\label{annCSSconstruction}
Let $\mathcal{C}_1$ and $\mathcal{C}_2$ be $[n,k_1,d_1]_q$ and $[n,k_2,d_2]_q$ over $\mathbb{F}_q$ respectively with $\mathcal{C}_2^{\circ}\subseteq \mathcal{C}_1.$ Then there exists a quantum error-correcting code $\mathcal{C}$ with parameters $[[n,k_1+k_2-n,d]]_q,$ where $d=\min\{w_H(c): c\in(\mathcal{C}_1\setminus \mathcal{C}_2^\circ)\cup(\mathcal{C}_2A\setminus \mathcal{C}_1^\circ A)\},$ where $A$ is the Gram matrix associated to the annihilator inner product. 
\end{theorem}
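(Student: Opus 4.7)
The strategy is to reduce the annihilator CSS setting to the standard Euclidean CSS setting of Theorem \ref{CSS construction} by transforming one of the codes through the Gram matrix $A$ of the annihilator inner product.

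\textbf{Step 1 (Gram matrix and its key property).} Since the annihilator form defined in \eqref{equation 6.3} is a non-degenerate bilinear form on $\mathbb{F}_q^n$, I would write it in matrix form as $\langle \bm{u} \mid \bm{v}\rangle = \bm{u} A \bm{v}^{T}$ for a uniquely determined invertible $A\in \GL(n,\mathbb{F}_q)$ (the Gram matrix). Non-degeneracy guarantees that $A$ is invertible, so that right multiplication by $A$ is an $\mathbb{F}_q$-linear isomorphism $\mathbb{F}_q^n \to \mathbb{F}_q^n$.

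\textbf{Step 2 (Annihilator dual versus Euclidean dual).} The central identity I would establish is
\begin{equation*}
\mathcal{D}^{\circ} A \;=\; (\mathcal{D})^{\perp} \qquad \text{and equivalently}\qquad (\mathcal{D} A)^{\perp}\;=\;\mathcal{D}^{\circ}
\end{equation*}
for any linear code $\mathcal{D}\subseteq \mathbb{F}_q^n$, where the first equality uses $\bm{f}\in \mathcal{D}^{\circ}\iff \bm{f}A\bm{c}^T=0$ for all $\bm{c}\in \mathcal{D}$, and the second is a symmetric restatement that will be used below. This is where the invertibility of $A$ is essential; it is the one genuine technical ingredient and the only place where a small argument is really needed.

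\textbf{Step 3 (Reduction to Euclidean CSS).} Set $\mathcal{D}_1 := \mathcal{C}_1$ and $\mathcal{D}_2 := \mathcal{C}_2 A$. Since $A$ is invertible, $\mathcal{D}_2$ is an $[n,k_2]_q$ linear code. The hypothesis $\mathcal{C}_2^{\circ}\subseteq \mathcal{C}_1$ combined with Step 2 gives $(\mathcal{D}_2)^{\perp} = (\mathcal{C}_2 A)^{\perp} = \mathcal{C}_2^{\circ}\subseteq \mathcal{C}_1 = \mathcal{D}_1$. Thus Theorem \ref{CSS construction} applies to the pair $(\mathcal{D}_1,\mathcal{D}_2)$ and produces an $[[n,\,k_1+k_2-n,\,d]]_q$ quantum code.

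\textbf{Step 4 (Translating the distance).} The distance coming out of Theorem \ref{CSS construction} is
\begin{equation*}
d\;=\;\min\bigl\{w_H(\bm{c}): \bm{c}\in (\mathcal{D}_1\setminus \mathcal{D}_2^{\perp})\cup(\mathcal{D}_2\setminus \mathcal{D}_1^{\perp})\bigr\}.
\end{equation*}
Substituting $\mathcal{D}_1=\mathcal{C}_1$, $\mathcal{D}_2 = \mathcal{C}_2 A$, $\mathcal{D}_2^{\perp}=\mathcal{C}_2^{\circ}$ (from Step 2), and $\mathcal{D}_1^{\perp}=\mathcal{C}_1^{\perp}=\mathcal{C}_1^{\circ}A$ (again from Step 2 applied to $\mathcal{C}_1$) yields exactly the expression in the statement. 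I expect Step 2 to be the only nontrivial part; Steps 3 and 4 are then automatic citations of Theorem \ref{CSS construction}.
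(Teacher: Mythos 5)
Your reduction is correct and is essentially the route the paper itself takes: the paper presents this theorem as a ``direct consequence'' of the Euclidean CSS construction (Theorem \ref{CSS construction}) for the annihilator inner product without writing out details, and your Steps 1--4 supply exactly that Gram-matrix change of variables, landing on the stated distance set $(\mathcal{C}_1\setminus \mathcal{C}_2^\circ)\cup(\mathcal{C}_2A\setminus \mathcal{C}_1^\circ A)$. The only point worth stating explicitly in Step 2 is that the identities $\mathcal{D}^\circ A=\mathcal{D}^\perp$ and $(\mathcal{D}A)^\perp=\mathcal{D}^\circ$ use the symmetry of $A$, which holds here because the annihilator form is taken in the commutative case $\theta=\Id_{\mathbb{F}_q}$, $\delta=0$, so $A=A^{T}$.
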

\begin{theorem}\label{Ann dual containing condition}
Let $\mathcal{C}=\langle \bm{g}(x)\rangle$ be a $(\Id_{\mathbb{F}_q}, 0, \alpha)$-cyclic code having parameters $[n,k,d]_q$ over $\mathbb{F}_q,$ and $x^n-\alpha=\bm{h}(x)\bm{g}(x).$ If $\bm{h}(x)$ is divisible by $\bm{g}(x),$ then there exists a quantum code with parameters $[[n,2k -n,\geq d]]_q.$ 
\end{theorem}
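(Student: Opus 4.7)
The plan is to apply the annihilator CSS construction (Theorem \ref{annCSSconstruction}) with $\mathcal{C}_1 = \mathcal{C}_2 = \mathcal{C}$. Two things need to be established: the hypothesis $\mathcal{C}^\circ \subseteq \mathcal{C}$, and the lower bound on the distance.

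For the hypothesis, I would invoke Theorem \ref{Ann dual containing over F_q} directly: since $\bm{g}(x)$ divides $\bm{h}(x)$ by assumption, we get $\mathcal{C}^\circ \subseteq \mathcal{C}$, so the CSS construction applies. Plugging into Theorem \ref{annCSSconstruction} with $k_1 = k_2 = k$ already gives length $n$ and dimension $k_1+k_2-n = 2k-n$.

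The remaining task, which I expect to be the only real work, is to show that the resulting distance
$$d' = \min\bigl\{w_H(c) : c \in (\mathcal{C}\setminus\mathcal{C}^\circ)\cup(\mathcal{C} A\setminus\mathcal{C}^\circ A)\bigr\}$$
is at least $d$. For the first component $\mathcal{C}\setminus\mathcal{C}^\circ$ this is immediate: every element is a nonzero codeword of $\mathcal{C}$ and hence has weight $\ge d$. For the second component I would first write out the Gram matrix $A$ of the annihilator form on $\mathbb{F}_q[x]/\langle x^n-\alpha\rangle$ explicitly. Expanding $f(x)g(x)\bmod (x^n-\alpha)$ and extracting the constant term shows that $\langle f|g\rangle = f_0 g_0 + \alpha\sum_{i=1}^{n-1} f_i g_{n-i}$, so $A$ has a $1$ in the $(0,0)$-entry, $\alpha$ along the anti-diagonal of the lower-right $(n-1)\times(n-1)$ block, and zeros elsewhere. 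In particular, right multiplication by $A$ is a monomial transformation (a coordinate permutation composed with a nonzero scaling), hence preserves the Hamming weight.

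Consequently $w_H(cA) = w_H(c)$ for every $c$, so $\mathcal{C} A$ has the same weight enumerator as $\mathcal{C}$, and every element of $\mathcal{C} A\setminus \mathcal{C}^\circ A$ is nonzero with weight $\ge d$. Combining both bounds yields $d' \ge d$, and the annihilator CSS construction delivers the desired $[[n, 2k-n, \ge d]]_q$ quantum code. The main (minor) obstacle is just the explicit description of $A$ and the observation that it acts as a weight-preserving monomial map; everything else is a direct application of the results quoted above.
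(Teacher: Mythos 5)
Your proposal is correct and follows essentially the same route as the paper: invoke Theorem \ref{Ann dual containing over F_q} for $\mathcal{C}^\circ\subseteq\mathcal{C}$, apply the annihilator CSS construction with $\mathcal{C}_1=\mathcal{C}_2=\mathcal{C}$, and use the explicit Gram matrix $A$ (a weight-preserving monomial matrix) to conclude $d'\ge d$. Your derivation of $A$ and the monomial-map justification merely spell out what the paper asserts directly, namely $wt(\mathcal{C})=wt(\mathcal{C}A)$ and $wt(\mathcal{C}^\circ)=wt(\mathcal{C}^\circ A)$.
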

\begin{proof}
    Note that the Gram matrix associated to the annihilator inner product is $$A=\begin{pmatrix}
        1& 0& \dots & 0&0\\
        0& 0& \dots &0 & \alpha\\
        0& 0& \dots &\alpha &0\\
        \vdots& \vdots& \vdots& \vdots &\vdots\\
        0& \alpha &\dots& 0 &0
    \end{pmatrix}.$$ Hence, from Theorem \ref{Ann dual containing over F_q} and Theorem \ref{annCSSconstruction}, there exists a quantum code with parameters $[[n,2k -n,d^{'}]]_q,$ where $d^{'}=\min\{wt(c): c\in (\mathcal{C}\setminus\mathcal{C}^{\circ})\cup (\mathcal{C}A\setminus\mathcal{C}^{\circ}A)\}.$ Observe that $wt(\mathcal{C})=wt(\mathcal{C}A)$ and $wt(\mathcal{C}^{\circ})=wt(\mathcal{C}^{\circ}A).$ Hence, $d^{'}=\min\{wt(c): c\in \mathcal{C}\setminus \mathcal{C}^{\circ}\}\ge d.$  
\end{proof}
\section{Examples and Tables}\label{Section 7}
In this section, we present three tables. In Table \ref{zero derivation table}, we present certain MDS and almost MDS $q$-ary quantum codes obtained by Theorem \ref{Quantum code construction}. Table \ref{nonzeroderivationtable} shows certain optimal and MDS linear codes over $\mathbb{F}_q$ as Gray images of $(\Theta, \Delta_{\Theta, \bm{s}}, \bm{a})$-cyclic codes over $\mathcal{R}$ with non-zero derivations. Table \ref{Ann Dual table} contains several MDS and almost MDS $q$-ary quantum codes obtained by Theorem \ref{Ann dual containing condition}. All the computations were either done in MAGMA \cite{magma} or in SageMath \cite{sage}, and the optimality of codes was verified using https://codeTables.de \cite{Grassl:codetables}.
\begin{example}
    Consider the field $\mathbb{F}_4:=\mathbb{F}_2[w],$ where $w^2+w+1=0.$ For $q=4$ and $l=2,$ consider the ring $\mathcal{R}_2:=\mathbb{F}_4^2.$ Let $\Theta:=\sigma_2\times\sigma_2\in \Aut(\mathcal{R}_2),$ where $\sigma_2$ is the Frobenius automorphism of $\mathbb{F}_4,$ and let $\Delta_{\Theta,(w,w)}:=(\delta_{\sigma_2, w}, \delta_{\sigma_2, w})$ be a $\Theta$-Derivation of $\mathcal{R}_2$ and $\bm{a}=(1,1)\in \mathcal{R}_2^\times.$ Suppose $\mathcal{C}=\bm{e}_1\mathcal{C}_1\oplus\bm{e}_2\mathcal{C}_2$ be a $(\Theta, \Delta_{\Theta, (w,w)}, \bm{a})$-cyclic code of length $6$ over $\mathcal{R}_2.$ Then by Theorem \ref{main theorem}, for $i=1,2,\,\, \mathcal{C}_i$ is a $(\sigma_2, \delta_{\sigma_2, w} , 1)$-cyclic code of length $6$ over $\mathbb{F}_4.$ Now,
    \begin{align*}
        x^6-1&=\left(x +1\right)\times\left(x^5+x^4+x^3+x^2+x+1\right)\\
        &=\left((w + 1)x^2 + (w + 1)x + w + 1\right)\times
        \left(wx^4+wx^3+wx+w\right)
    \end{align*}
    in $\mathbb{F}_4[x; \sigma_2, \delta_{\sigma_2, w}].$ If $\mathcal{C}_1:=\langle\bm{g}_1(x)=x^5+x^4+x^3+x^2+x+1\rangle, \mathcal{C}_2:=\langle\bm{g}_2(x)=wx^4+wx^3+wx+w\rangle$ and $M_1=M_2=M=\begin{pmatrix}
    1&1\\
    1&w+1 \\
    \end{pmatrix},$ then $\Phi(\mathcal{C})$ is an optimal $[12, 3, 8]_4$ linear code over $\mathbb{F}_4.$ All the above computations were done using SageMath \cite{sage}.
\end{example}
\begin{example}
    Consider the field $\mathbb{F}_8:=\mathbb{F}_2[w],$ where $w^3+w+1=0.$ For $q=8$ and $l=2,$ consider the ring $\mathcal{R}_2:=\mathbb{F}_8^2.$ Let $\sigma_2\times\sigma_2^2\in \Aut(\mathcal{R}_2),$ where $\sigma_2$ is the Frobenius automorphism of $\mathbb{F}_8$ and $\bm{a}=(1,1)\in \mathcal{R}_2^\times.$ Suppose $\mathcal{C}=\bm{e}_1\mathcal{C}_1\oplus\bm{e}_2\mathcal{C}_2$ be a $(\sigma_2\times\sigma_2^2, 0, \bm{a})$-cyclic code of length $6$ over $\mathcal{R}_2.$ Then by Theorem \ref{main theorem}, $\mathcal{C}_1$ is a $(\sigma_2, 0, 1)$-cyclic code of length $6$ over $\mathbb{F}_8$ and $\mathcal{C}_2$ is a $(\sigma_2^2, 0, 1)$-cyclic code of length $6$ over $\mathbb{F}_8.$ Now,\\
    $$x^6-1=\left((w^2 + 1)x^3 + (w + 1)x^2 + (w^2 + 1)x + w + 1\right)\times\left(wx^3+wx^2+(w^2+w)x+w^2+w\right)$$
    in $\mathbb{F}_8[x; \sigma_2, 0],$ and
    \begin{multline*}
        x^6-1=\left((w + 1)x^3 + (w^2 + 1)x^2 + (w^2 + w)x + w\right)\times\\
        \left((w^2 + w)x^3 + (w^2 + w)x^2 + (w^2 + 1)x + w^2 + 1\right) \,\,\text{in}\,\, \mathbb{F}_8[x; \sigma_2^2, 0].
    \end{multline*}
    If $\mathcal{C}_1:=\langle\bm{g}_1(x)=wx^3+wx^2+(w^2+w)x+w^2+w\rangle, \mathcal{C}_2:=\langle\bm{g}_2(x)=(w^2 + w)x^3 + (w^2 + w)x^2 + (w^2 + 1)x + w^2 + 1)\rangle$ and $M_1=M_2=M=\begin{pmatrix}
    w^2+w+1&1\\
    1&w^2+w+1 \\
    \end{pmatrix},$ then $\Phi(\mathcal{C})$ is an optimal $[12, 6, 6]_8$ linear code over $\mathbb{F}_8.$ Moreover, if
    \begin{align*}
        \bm{h}_1(x)&=(w^2 + 1)x^3 + (w + 1)x^2 + (w^2 + 1)x + w + 1,\\
        \bm{h}_1^\dagger(x)&=(w + 1)x^3 + (w + 1)x^2 + (w^2 + 1)x + w^2 + 1,\\
        \bm{h}_2(x)&=(w + 1)x^3 + (w^2 + 1)x^2 + (w^2 + w)x + w,\\
        \bm{h}_2^\dagger(x)&=wx^3 + wx^2 + (w + 1)x + w + 1,
    \end{align*}
    then $x^6-1=\bm{h}_1(x)\bm{g}_1(x)=\bm{g}_1(x)\bm{h}_1(x)$ in $\mathbb{F}_8[x; \sigma_2, 0]$ and $x^6-1=\bm{h}_2(x)\bm{g}_2(x)=\bm{g}_2(x)\bm{h}_2(x)$ in $\mathbb{F}_8[x; \sigma_2^2, 0].$ Also, $\bm{g}_1(x)$ right divides $\bm{h}_1^\dagger(x)$ in $\mathbb{F}_8[x; \sigma_2, 0]$ and $\bm{g}_2(x)$ right divides $\bm{h}_2^\dagger(x)$ in $\mathbb{F}_8[x; \sigma_2^2, 0].$ Hence, by Theorem \ref{Quantum code construction}, we obtain an almost MDS quantum $[[12, 0,\ge 6]]_8$ code. All the above computations were done using SageMath \cite{sage}.
\end{example}
\begin{example}
    Consider the field $\mathbb{F}_9:=\mathbb{F}_2[w],$ where $w^2+2w+2=0.$ For $q=9$ and $l=2,$ consider the ring $\mathcal{R}_3:=\mathbb{F}_9^2.$ Let $\sigma_3\times\sigma_3\in \Aut(\mathcal{R}_3),$ where $\sigma_3$ is the Frobenius automorphism of $\mathbb{F}_9$ and $\bm{a}=(1,2)\in \mathcal{R}_3^\times.$ Suppose $\mathcal{C}=\bm{e}_1\mathcal{C}_1\oplus\bm{e}_2\mathcal{C}_2$ be a $(\sigma_3\times\sigma_3, 0, \bm{a})$-cyclic code of length $4$ over $\mathcal{R}_3.$ Then by Theorem \ref{main theorem}, $\mathcal{C}_1$ is a $(\sigma_3, 0, 1)$-cyclic code of length $4$ over $\mathbb{F}_9$ and $\mathcal{C}_2$ is a $(\sigma_3, 0, 2)$-cyclic code of length $4$ over $\mathbb{F}_9.$ Now,
    \begin{align*}
        x^4-1&=\left(2x^3 + (w + 1)x^2 + 2x + w + 1\right)\times\left(2x+w+1\right) &&\text{and}\\
        x^4-2&=\left((2w + 1)x^2 + 2wx + w + 2\right)\times        \left(2wx^2+2wx+w\right)
    \end{align*}
     in $\mathbb{F}_9[x; \sigma_3, 0].$ If $\mathcal{C}_1:=\langle\bm{g}_1(x)=2x+w+1\rangle, \mathcal{C}_2:=\langle\bm{g}_2(x)=2wx^2+2wx+w\rangle$ and $M_1=M_2=M=\begin{pmatrix}
    2w&w\\
    w&w\\
    \end{pmatrix},$ then $\Phi(\mathcal{C})$ is a $[8, 5, 4]_9$ MDS linear code over $\mathbb{F}_9.$ Moreover, if
    \begin{align*}
        \bm{h}_1(x)&=2x^3 + (w + 1)x^2 + 2x + w + 1,\\
        \bm{h}_1^\dagger(x)&=(2w + 2)x^3 + 2x^2 + (2w + 2)x + 2,\\
        \bm{h}_2(x)&=(2w + 1)x^2 + 2wx + w + 2,\\
        \bm{h}_2^\dagger(x)&=(w + 2)x^2 + (w + 2)x + 2w + 1,
    \end{align*}    
    then $x^4-1=\bm{h}_1(x)\bm{g}_1(x)=\bm{g}_1(x)\bm{h}_1(x)$ and $x^4-2=\bm{h}_2(x)\bm{g}_2(x)=\bm{g}_2(x)\bm{h}_2(x)$ in $\mathbb{F}_9[x; \sigma_3, 0].$ Also, $\bm{g}_1(x)$ right divides $\bm{h}_1^\dagger(x)$ and $\bm{g}_2(x)$ right divides $\bm{h}_2^\dagger(x)$ in $\mathbb{F}_9[x; \sigma_3, 0].$ Hence, by Theorem \ref{Quantum code construction}, we obtain a MDS quantum $[[8, 2, 4]]_9$ code. All the above computations were done using SageMath \cite{sage}.
\end{example}
\begin{example}
    Consider the field $\mathbb{F}_{17}.$  Then in $\mathbb{F}_{17}[x],$ 
    \begin{align*}
        \begin{multlined}[t]
            x^{17}-9=(x^7 + 5x^6 + x^5 + 2x^4 + 16x^3 + 2x^2 + 11x + 15)\times\\
        (x^{10} + 12x^9 + 7x^8 + 2x^7 + 11x^6 + 7x^5 + 7x^4 + 15x^3 + 11x^2 + 12x+ 13)=\bm{g}(x)\bm{h}(x).
        \end{multlined}
    \end{align*}
    Suppose $\mathcal{C}:=\langle \bm{g}(x)\rangle$ is a $(\Id_{\mathbb{F}_{17}}, 0, 9)$-cyclic code of length $17$ over $\mathbb{F}_{17}.$ Then $\mathcal{C}$ is a MDS $[17, 10, 8]_{17}$-linear code over $\mathbb{F}_{17}.$ Since $\bm{g}(x)$ divides $\bm{h}(x),$ we obtain a MDS $[[17, 3, 8]]_{17}$ quantum code by Theorem \ref{Ann dual containing condition}. 
    All the above computations were done using MAGMA \cite{magma}.
\end{example}
\begin{table}[H]
    \centering
    \begin{adjustbox}{width=1.02\textwidth}
    \begin{tabular}{|c|c|c|c|c|c|>{\centering\arraybackslash}p{7cm}|>{\centering\arraybackslash}p{2cm}|>{\centering\arraybackslash}p{2.8cm}|>
    {\centering\arraybackslash}p{2.4cm}|}
    \hline
   $q$ & $l$& $n$ & $M$& $x^n-\bm{a}$&$\Theta$ & $\bm{g}_1(x), \bm{g}_2(x), \dots, \bm{g}_l(x)$ &  Parameters of $\Phi(\mathcal{C})$ & Remarks for $\Phi(\mathcal{C})$ & Quantum Code \\ 
   \hline
   \hline
    $4$  &$2$     &$4$   &$\begin{pmatrix}
            w+1&w\\
            w&w+1 \\
        \end{pmatrix}$&$(1,1)x^4-(1,1)$ & $\sigma_2\times\sigma_2$  &$wx^2+(w+1)x+1,\,\, (w+1)x^2+wx+1$&$[8,4,4]_4$& Optimal&$[[8,0, 4]]_4^{**}$\\
   \hline
    $4$  &$2$     &$4$         &$\begin{pmatrix}
            w+1&w\\
            w&w+1 \\
        \end{pmatrix}$&$(1,1)x^4-(1,1)$&$\sigma_2\times\sigma_2$  &$wx^2+(w+1)x+1,\,\, (w+1)x+1$&$[8,5,3]_4$& Optimal&$[[8,2,\geq 3]]_4^{**}$\\
   \hline
    $4$ & $3$     &$4$         &$\begin{pmatrix}
            w^2& 1& w^2\\
            w^2&w& 0\\
            1 & w & w
        \end{pmatrix}$&$(1,1,1)x^4-(1,1, 1)$&$\sigma_2\times\sigma_2\times\sigma_2$  &$w^2x^2+x+w,\,\, wx^2+x+w^2,\,\, wx+w^2$&$[12,7,4]_4$& Optimal&$[[12,2,\geq 4]]_4$\\
   \hline
    $4$   & $2$    &8         &$\begin{pmatrix}
            w+1&w\\
            w&w+1 \\
        \end{pmatrix}$&$(1,1)x^8-(1,1)$&$\sigma_2\times\sigma_2$  &$wx^2+w,\,\, x^3+(w+1)x+w$&$[16,11,4]_4$& Optimal&$[[16,6,\geq 4]]_4$\\
   \hline
    $4$ &$2$      &$10$           &$\begin{pmatrix}
            w+1&w\\
            w&w+1 \\
        \end{pmatrix}$&$(1,1)x^{10}-(1,1)$&$\sigma_2\times\sigma_2$  &$wx^4+x^3+x+w+1,\,\, wx^5+x^4+x^3+x^2+x+w$&$[20,11,6]_4$& Almost Optimal&$[[20,2,\geq 6]]_4$\\
   \hline
    $8$ & $2$       &$6$        &$\begin{pmatrix}
            w^2+w+1&1\\
            1&w^2+w+1 \\
        \end{pmatrix}$&$(1,1)x^6-(1,1)$&$\sigma_2^2\times\sigma_2^2$  &$(w+1)x^2+w^2x+1,\,\, w^2x^2+(w^2+w+1)x+1$&$[12,8,4]_8$&  Optimal&$[[12,4,\geq 4]]_8$\\
   \hline
    $8$  & $2$      &$6$        &$\begin{pmatrix}
             w^2+w+1&1\\
            1&w^2+w+1 \\
        \end{pmatrix}$&$(1,1)x^{6}-(1,1)$&$\sigma_2\times\sigma_2^2$  &$wx^3+wx^2+(w^2+w)x+w^2+w,\,\, (w^2 + w)x^3 + (w^2 + w)x^2 + (w^2 + 1)x + w^2 + 1$&$[12,6,6]_8$& Optimal&$[[12,0,\geq 6]]_8^{**}$\\
    \hline
    $9$     & $3$  &$2$        &$\begin{pmatrix}
             2w+2&2 &2\\
             1& w& w+2\\
            2&2w+1& 2w\\
        \end{pmatrix}$&$(1,1,1)x^{2}-(1,1,1)$&$\sigma_3\times\sigma_3\times \sigma_3$  &$(2w+1)x+2w,\,\, x+w+1,\,\, 2x+w+1$&$[6,3,4]_9$& MDS&$[[6,0,4]]_9^*$\\
    \hline
    $9$     & $3$  &$2$        &$\begin{pmatrix}
             2w+2&2 &2\\
             1& w& w+2\\
            2&2w+1& 2w\\
        \end{pmatrix}$&$(1,1,1)x^{2}-(1,1,1)$&$\sigma_3\times\sigma_3\times \sigma_3$  &$(2w+1)x+2w,\,\, (2w+2)x+2,\,\, 1$&$[6,4,3]_9$& MDS&$[[6,2,4]]_9^*$\\
    \hline
    $9$     & $2$  &$4$        &$\begin{pmatrix}
             2w&w\\
            w&w \\
        \end{pmatrix}$&$(1,1)x^{4}-(1,2)$&$\sigma_3\times\sigma_3$  &$2x+w+1,\,\, 2wx^2+2wx+w$&$[8,5,4]_9$& MDS&$[[8,2,4]]_9^*$\\
        \hline
    $9$   & $2$    &$6$          &$\begin{pmatrix}
             w&2w+2\\
            2w+2&2w \\
        \end{pmatrix}$&$(1,1)x^{6}-(1,1)$&$\sigma_3\times\sigma_3$  &$2wx^3+2x^2+(w+1)x+w+2,\,\, (2w+1)x^3+x^2+(w+1)x+w$&$[12,6,6]_9$& Optimal&$[[12,0,\geq 6]]_9^{**}$\\
        \hline
    $16$    & $2$   &$4$       &$\begin{pmatrix}
             w^3+w^2&w\\
            w+2&w^3+w^2 \\
        \end{pmatrix}$&$(1,1)x^{6}-(1,1)$&$\sigma_2\times\sigma_2^3$  &$(w^2+w+1)x^2+w^2+w+1,\,\, w^3x^2+(w^3+w+1)x+w+1$&$[8,4,4]_{16}$& Almost Optimal&$[[8,0,\geq 4]]_{16}^{**}$\\
        \hline
        \multicolumn{9}{l}{\normalsize Here, $w$ is the primitive element of $\mathbb{F}_q$, $\sigma_p$ is the Frobenius automorphism of $\mathbb{F}_q,$ $M\in \GL(2, \mathbb{F}_q)$ is used to define the Gray map $\Phi,$}\\
        \multicolumn{9}{l}{\normalsize $\Theta:=\theta_1\times\theta_2\times\cdots\times\theta_l\in \mathcal{A}$ and for $1\le i\le l,$ $\bm{g}_i(x)$ is a right divisor of $x^n-a_i$ in $\mathbb{F}_q[x; \theta_i, 0].$}\\
        \multicolumn{9}{l}{\normalsize $*$ represents MDS quantum code and $**$ represents almost MDS quantum code}\\
    \end{tabular}
    \end{adjustbox}
    \caption{Certain nice linear codes over $\mathbb{F}_q$ and quantum codes obtained from Gray images of $(\Theta, \bm{0}, \bm{a})$-cyclic codes over $\mathcal{R}:=\mathbb{F}_q^l$ using Theorem \ref{Quantum code construction} }
    \label{zero derivation table}
\end{table}
\begin{table}[H]
    \centering
    \begin{adjustbox}{width=1.01\textwidth}
     \begin{tabular}{|c|c|c|c|c|c|c|c|c|c|}
    \hline
   $q$ & $l$& $n$ & $M$& $x^n-{\bm{a}}$&$\Theta$ & $\bm{g}_1(x), \bm{g}_2(x), \dots, \bm{g}_l(x)$ &  Parameters of $\Phi(\mathcal{C})$ & Remarks for $\Phi(\mathcal{C})$\\ 
   \hline
   \hline
    $4$  &$2$     &$6$   &$\begin{pmatrix}
            1&1\\
            1&w+1 \\
        \end{pmatrix}$&$(1,1)x^4-(1,1)$ & $\sigma_2\times\sigma_2$  &$wx^2+wx+w,\,\, x^2+1$&$[12,8,4]_4$& Optimal\\
   \hline
    $4$  &$2$     &$6$   &$\begin{pmatrix}
            1&1\\
            1&w+1 \\
        \end{pmatrix}$&$(1,1)x^4-(1,1)$ & $\sigma_2\times\sigma_2$  &$x^5+x^4+x^3+x^2+x+1,wx^4+wx^3+wx+w$&$[12,3,8]_4$& Optimal\\
   \hline
   $4$  &$2$     &$6$   &$\begin{pmatrix}
            1&1\\
            1&w \\
        \end{pmatrix}$&$(1,1)x^4-(1,1)$ & $\sigma_2\times\sigma_2$  &$x+w,wx^4+wx^3+wx+w$&$[12,7,4]_4$& Optimal\\
   \hline
   $8$  &$2$     &$3$   &$\begin{pmatrix}
            w^2+1&w+1\\
            1&w^2+1 \\
        \end{pmatrix}$&$(1,1)x^4-(1,1)$ & $\sigma_2\times\sigma_2^2$  &$(w+1)x+w+1,x^2+x+1$&$[6,3,4]_8$& MDS\\
    \hline
   $9$  &$2$     &$4$   &$\begin{pmatrix}
            1&2w+2\\
            2w+2&1 \\
        \end{pmatrix}$&$(1,1)x^4-(1,2)$ & $\sigma_3\times\sigma_3$  &$wx^3+wx^2+wx+w,2x^3+(2w+1)x^2+(w+1)x+w$&$[8,2,7]_9$& MDS\\
    \hline
   $9$  &$2$     &$4$   &$\begin{pmatrix}
            1&2w+2\\
            2w+2&1 \\
        \end{pmatrix}$&$(1,1)x^4-(1,2)$ & $\sigma_3\times\sigma_3$  &$wx+w,2x+2w$&$[8,6,3]_9$& Optimal\\
   \hline
   $4$  &$3$     &$4$   &$\begin{pmatrix}
            0&0&1\\
            1&1&0\\
            0&1&1\\
        \end{pmatrix}$&$(1,1,1)x^4-(1,1,1)$ & $\sigma_2\times\sigma_2\times\sigma_2$  &$wx^3+wx^2+wx+w,x+1,x+1$&$[12,7,4]_4$& Optimal\\
   \hline
   $4$  &$3$     &$4$   &$\begin{pmatrix}
            0&1&1\\
            1&1&1\\
            1&0&1\\
        \end{pmatrix}$&$(1,1,1)x^4-(1,1,1)$ & $\sigma_2\times\sigma_2\times\sigma_2$  &$wx^3+wx^2+wx+w,x+1,x^3+x^2+x+1$&$[12,5,6]_4$& Optimal\\
   \hline
   $4$  &$3$     &$4$   &$\begin{pmatrix}
            0&0&1\\
            1&1&0\\
            0&1&1\\
        \end{pmatrix}$&$(1,1,1)x^4-(1,1,1)$ & $\sigma_2\times\sigma_2\times\sigma_2$  &$(w+1)x^2+w+1,(w+1)x+w+1,w$&$[12,9,3]_4$& Optimal\\
   \hline
   \multicolumn{9}{l}{\normalsize Here, $w$ is the primitive element of $\mathbb{F}_q,$ $\sigma_p$ is the Frobenius automorphism of $\mathbb{F}_q,$}\\
   \multicolumn{9}{l}{\normalsize $\Theta:=\theta_1\times\theta_2\times\cdots\times\theta_l\in \mathcal{A}$ and $\Delta_{\Theta, \bm{s}}:=(\delta_{\theta_1,w}, \delta_{\theta_2,w}\dots, \delta_{\theta_l, w})$ and  }\\
   \multicolumn{9}{l}{for $1\le i\le l,$ $\bm{g}_i(x)$ is a right divisor of $x^n-a_i$ in $\mathbb{F}_q[x; \theta_i, \delta_{\theta_i, w}].$}
    \end{tabular}
    
    \end{adjustbox}
    \caption{Certain nice linear codes over $\mathbb{F}_q$ obtained from Gray images of $(\Theta, \Delta_{\Theta,\bm{s}}, \bm{a})$-cyclic codes over $\mathcal{R}:=\mathbb{F}_q^l$ using Theorem \ref{Graymap} }
    \label{nonzeroderivationtable}
\end{table}

\begin{table}[H]
    \centering
    \begin{adjustbox}{width=1.02\textwidth}
    \begin{tabular}{|c|c|c|>{\raggedright\arraybackslash}p{8cm}|>
    {\centering\arraybackslash}p{3cm}|c|>{\centering\arraybackslash}p{3cm}|c|}
    \hline
   $q$ & $n$ & $x^n-\alpha$ & Generator polynomial $\bm{g}(x)$ & Parameters of $\mathcal{C}:=\langle \bm{g}(x)\rangle$ & Remarks for $\mathcal{C}$ & Parameters of the Quantum Code $Q$ & Remarks for $Q$ \\
   \hline
   \hline
    $5$ & $10$ & $x^{10}-1$ & $x^3+x^2+4x+4$& $[10, 7, 3]_5$  & Optimal& $[[10,4,3]]_5$ &Almost MDS\\
   \hline
   $5$ & $20$ & $x^{20}-1$ & $x^3+3x+4$& $[20, 17, 3]_5$  & Optimal& $[[20, 14, 3]]_5$ &Almost MDS\\
   \hline
   $7$ & $7$ & $x^{7}-1$ & $x^3+4x^2+3x+6$& $[7, 4, 4]_7$  & MDS& $[[7,1, 4]]_7$ &MDS\\
   \hline
   $7$ & $7$ & $x^{7}-1$ & $x^2+5x+1$& $[7, 5, 3]_7$  & MDS& $[[7,3, 3]]_7$ &MDS\\
   \hline
   $7$ & $14$ & $x^{14}-4$ & $x^3 + 5x^2 + 3x + 1$& $[14, 11, 3]_7$  & Optimal& $[[14, 8,3]]_7$ &Almost MDS\\
   \hline
   $7$ & $14$ & $x^{14}-2$ & $x^4+6x^3+2x+3$ & $[14, 10, 4]_7$  & Optimal& $[[14, 6, 4]]_7$ &Almost MDS\\
   \hline
   $7$ & $21$ & $x^{21}-6$ & $x^3+6x^2+2x+4$ & $[21, 18, 3]_7$  & Optimal& $[[21, 15, 3]]_7$ &Almost MDS\\
   \hline
   $11$ & $11$ & $x^{11}-2$ & $x^5 + x^4 + 7x^3 + 8x^2 + 3x + 1$ & $[11, 6, 6]_{11}$  & MDS& $[[11, 1, 6]]_{11}$ & MDS\\
   \hline
   $11$ & $11$ & $x^{11}-3$ & $x^4 + 10x^3 + 10x^2 + 2x + 4$ & $[11, 7, 5]_{11}$  & MDS& $[[11, 3, 5]]_{11}$ & MDS\\
   \hline
   $11$ & $11$ & $x^{11}-10$ & $x^3+3x^2+3x+1$ & $[11, 8, 4]_{11}$  & MDS& $[[11, 5, 4]]_{11}$ & MDS\\
   \hline
   $11$ & $11$ & $x^{11}-10$ & $x^2+2x+1$ & $[11, 9, 3]_{11}$  & MDS& $[[11, 7, 3]]_{11}$ & MDS\\
   \hline
   $11$ & $22$ & $x^{22}-1$ & $x^4+9x^3+2x+10$ & $[22, 18, 4]_{11}$  &Almost MDS& $[[22, 14, 4]]_{11}$ & Almost MDS\\
   \hline
   $11$ & $22$ & $x^{22}-3$ & $x^3+5x^2+8x+7$ & $[22, 19, 3]_{11}$  &Almost MDS& $[[22, 16, 3]]_{11}$ & Almost MDS\\
   \hline
   $13$ & $13$ & $x^{13}-9$ & $x^6+11x^5+6x^4+6x^3+5x^2+8x+1$ & $[13, 7, 7]_{13}$  & MDS& $[[13, 1, 7]]_{13}$ & MDS\\
   \hline
    $13$ & $13$ & $x^{13}-8$ & $x^5+12x^4+3x^3+2x^2+5x+5$ & $[13, 8, 6]_{13}$  & MDS& $[[13, 3, 6]]_{13}$ & MDS\\
   \hline
   $13$ & $13$ & $x^{13}-9$ & $x^4+3x^3+5x^2+9x+9$ & $[13, 9, 5]_{13}$  & MDS& $[[13, 5, 5]]_{13}$ & MDS\\
   \hline
   $13$ & $13$ & $x^{13}-11$ & $x^3+6x^2+12x+8$ & $[13, 10, 4]_{13}$  & MDS& $[[13, 7, 4]]_{13}$ & MDS\\
   \hline
   $13$ & $26$ & $x^{26}-12$ & $x^4+3x^3+3x+12$ & $[26, 22, 4]_{13}$  & Almost MDS& $[[26, 18, \ge 4]]_{13}$ & Almost MDS\\
   \hline
   $17$ & $17$ & $x^{17}-9$ & $x^8 + 13x^7 + 7x^6 + 10x^5 + 15x^4 + 11x^3 + 10x^2 + x + 1$ & $[17, 9, 9]_{17}$  & MDS& $[[17, 1, 9]]_{17}$ & MDS\\
   \hline
   $17$ & $17$ & $x^{17}-9$ & $x^7 + 5x^6 + x^5 + 2x^4 + 16x^3 + 2x^2 + 11x + 15$ & $[17, 10, 8]_{17}$  & MDS& $[[17, 3, 8]]_{17}$ & MDS\\
   \hline
   $17$ & $17$ & $x^{17}-8$ & $x^6 + 3x^5 + 8x^4 + 11x^3 + 2x^2 + 14x + 4$ & $[17, 11, 7]_{17}$  & MDS& $[[17, 5, 7]]_{17}$ & MDS\\
   \hline
   $17$ & $17$ & $x^{17}-8$ & $x^5 + 11x^4 + 11x^3 + 14x^2 + 12x + 8$ & $[17, 12, 6]_{17}$  & MDS& $[[17, 7, 6]]_{17}$ & MDS\\
   \hline
   $17$ & $17$ & $x^{17}-7$ & $x^4 + 6x^3 + 5x^2 + 5x + 4$ & $[17, 13, 5]_{17}$  & MDS& $[[17, 9, 5]]_{17}$ & MDS\\
   \hline
   $17$ & $17$ & $x^{17}-6$ & $x^3 + 16x^2 + 6x + 5$ & $[17, 14, 4]_{17}$  & MDS& $[[17, 11, 4]]_{17}$ & MDS\\
   \hline
   $17$ & $17$ & $x^{17}-6$ & $x^2 + 5x + 2$ & $[17, 15, 3]_{17}$  & MDS& $[[17, 13, 3]]_{17}$ & MDS\\
   \hline
   $19$ & $19$ & $x^{19}-4$ & $x^9 + 2x^8 + 6x^7 + x^6 + 13x^5 + 5x^4 + 12x^3 + 12x^2 + 7x + 18$ & $[19, 10, 10]_{19}$  & MDS& $[[19, 1, 10]]_{19}$ & MDS\\
   \hline
   $19$ & $19$ & $x^{19}-2$ & $x^8 + 3x^7 + 17x^6 + 8x^5 + 18x^4 + 13x^3 + 6x^2 + 2x + 9$ & $[19, 11, 9]_{19}$  & MDS& $[[19, 3, 9]]_{19}$ & MDS\\
   \hline
   $19$ & $19$ & $x^{19}-4$ & $x^7 + 10x^6 + 13x^5 + 2x^4 + 11x^3 + 4x^2 + x + 13$ & $[19, 12, 8]_{19}$  & MDS& $[[19, 5, 8]]_{19}$ & MDS\\
   \hline
   $19$ & $19$ & $x^{19}-4$ & $x^6 + 14x^5 + 12x^4 + 12x^3 + 2x^2 + 12x + 11$ & $[19, 13, 7]_{19}$  & MDS& $[[19, 7, 7]]_{19}$ & MDS\\
   \hline
   $19$ & $19$ & $x^{19}-5$ & $x^5 + 13x^4 + 3x^3 + 4x^2 + 9x + 10$ & $[19, 14, 6]_{19}$  & MDS& $[[19, 9, 6]]_{19}$ & MDS\\
   \hline
   $19$ & $19$ & $x^{19}-5$ & $x^4 + 18x^3 + 17x^2 + 13x + 17$ & $[19, 15, 5]_{19}$  & MDS& $[[19, 11, 5]]_{19}$ & MDS\\
   \hline
   $19$ & $19$ & $x^{19}-6$ & $x^3 + x^2 + 13x + 12$ & $[19, 16, 4]_{19}$  & MDS& $[[19, 13, 4]]_{19}$ & MDS\\
   \hline
   $19$ & $19$ & $x^{19}-6$ & $x^2 + 7x + 17$ & $[19, 17, 3]_{19}$  & MDS& $[[19, 15, 3]]_{19}$ & MDS\\
   \hline
   $23$ & $23$ & $x^{23}-14$ & $x^{11} + 7x^{10} + 16x^9 + 18x^8 + 2x^7 + 16x^6 + 6x^5 + 9x^4 + 6x^3 + 18x^2 + 14x + 1$ & $[23, 12, 12]_{23}$  & MDS& $[[23, 1, 12]]_{23}$ & MDS\\
   \hline
   $23$ & $23$ & $x^{23}-15$ & $x^{10} + 11x^9 + 5x^8 + 7x^7 + 6x^6 + 7x^5 + 16x^4 + 14x^3 + 19x^2 + 21x + 3$ & $[23, 13, 11]_{23}$  & MDS& $[[23, 3, 11]]_{23}$ & MDS\\
   \hline
   $23$ & $23$ & $x^{23}-16$ & $x^9 + 17x^8 + 16x^7 + 16x^6 + 7x^5 + 3x^4 + 14x^3 + 19x^2 + 16x + 15$ & $[23, 14, 10]_{23}$  & MDS& $[[23, 5, 10]]_{23}$ & MDS\\
   \hline
   $23$ & $23$ & $x^{23}-16$ & $x^8 + 10x^7 + 15x^6 + 3x^5 + 9x^4 + 9x^3 + 20x^2 + 17x + 12$ & $[23, 15, 9]_{23}$  & MDS& $[[23, 7, 9]]_{23}$ & MDS\\
   \hline
   $23$ & $23$ & $x^{23}-18$ & $x^7 + 12x^6 + 19x^5 + 5x^4 + 2x^3 + 6x^2 + 10x + 17$ & $[23, 16, 8]_{23}$  & MDS& $[[23, 9, 8]]_{23}$ & MDS\\
   \hline
   $23$ & $23$ & $x^{23}-18$ & $x^6 + 7x^5 + 7x^4 + 16x^3 + 14x^2 + 5x + 8$ & $[23, 17, 7]_{23}$  & MDS& $[[23, 11, 7]]_{23}$ & MDS\\
   \hline
   $23$ & $23$ & $x^{23}-13$ & $x^5 + 4x^4 + 11x^3 + 18x^2 + 21x + 19$ & $[23, 18, 6]_{23}$  & MDS& $[[23, 13, 6]]_{23}$ & MDS\\
   \hline
   $23$ & $23$ & $x^{23}-14$ & $x^4 + 13x^3 + 3x^2 + 18x + 6$ & $[23, 19, 5]_{23}$  & MDS& $[[23, 15, 5]]_{23}$ & MDS\\
   \hline
   $23$ & $23$ & $x^{23}-19$ & $x^3 + 12x^2 + 2x + 18$ & $[23, 20, 4]_{23}$  & MDS& $[[23, 17, 4]]_{23}$ & MDS\\
   \hline
   $23$ & $23$ & $x^{23}-19$ & $x^2 + 8x + 16$ & $[23, 21, 3]_{23}$  & MDS& $[[23, 19, 3]]_{23}$ & MDS\\
   \hline
   $29$ & $29$ & $x^{29}-19$ & $x^{14}+15x^{13}+4x^{12}+13x^{11}+15x^{10}+28x^{9}+16x^8+19x^7+ + 16x^6+28x^5+15x^4+13x^3+4x^2+15x+1 $ & $[29, 15, 15]_{29}$  & MDS& $[[29, 1, 15]]_{29}$ & MDS\\
   \hline
    \end{tabular}
    \end{adjustbox}
    \caption{MDS and almost MDS quantum codes obtained from $(\Id_{\mathbb{F}_q}, 0, \alpha)$-cyclic codes using Theorem \ref{Ann dual containing condition}}
    \label{Ann Dual table}
\end{table}
\section{Conclusion}\label{Conclusion}
    In this article, we studied $(\Theta, \Delta_{\Theta, \mathbf{s}}, \bm{a})$-cyclic codes over the product ring $\mathcal{R}:=\mathbb{F}_q^l$ with $l\ge 1$ determined by an automorphism $\Theta:=\theta_1\times\theta_2\times\cdots\times\theta_l$ and a $\Theta$-derivation $\Delta_{\Theta,\bm{s}},$ where each $\theta_j$ is an automorphism of $\mathbb{F}_q.$ We decomposed a $(\Theta, \Delta_{\Theta, \mathbf{s}}, \bm{a})$-cyclic code over $\mathcal{R}$ into skew constacyclic codes over $\mathbb{F}_q$ and obtained its generator polynomial. By defining Gray maps from $\mathcal{R}^n$ to $\mathbb{F}_q^{nl},$ we produced several optimal linear codes over $\mathbb{F}_q.$ We established a necessary and sufficient condition for a $(\Theta, \bm{0}, \bm{a})$-cyclic code over $\mathcal{R}$ to be a Euclidean dual-containing code. Lastly, as an application, we constructed MDS and almost MDS quantum codes by employing the Euclidean dual-containing and annihilator dual-containing CSS constructions. We believe that with high computational power, one may obtain many more good codes.
\section*{Declarations}
\textbf{Conflict of Interest.} All authors declare that they have no conflict of interest.

\section*{Acknowledgements}
The first author would like to acknowledge PMRF (PMRF Id: 1403187) for its financial support. The work of the second author was supported by the Council of Scientific and Industrial Research (CSIR) India, under grant no. 09/0086(13310)/2022-EMR-I. The authors thank the FIST Lab (Project No. SR/FST/MS-1/2019/45) for computational resources.

\bibliographystyle{abbrv}
\bibliography{Skew}

\end{document}